\tikzset{
  level/.style   = { ultra thick, blue },
  connect/.style = { dashed, red },
  notice/.style  = { draw, rectangle callout, callout relative pointer={#1} },
  label/.style   = { text width=2cm }
}
\newcommand{\rop}{\wt{\ro}} 
\newcommand{\gam}{\cP} 
\newcommand{\pt}{\mathscr{P}} 
\newcommand{\gamp}{\pa{1-\gam(v)}} 
\newcommand{\spa}{\cV} 
\newcommand{\spp}{L^p+L^{\ii}} 
\newcommand{\sppe}{L^p+L^{\ii}_{\ep}} 
\newcommand{\spn}[1]{\cV_N^{(#1)}} 
\newcommand{\sps}[2]{\cV_{#2}^{(#1)}} 
\newcommand{\spf}[1]{\cV_{N,\partial}^{(#1)}} 
\newcommand{\spfn}[2]{\cV_{#2,\partial}^{(#1)}}
\newcommand{\spnm}[1]{\cV^{(#1)}_{N,\text{meta}}} 
\newcommand{\deli}{\pa{-\Delta_i+1}} 
\newcommand{\del}{\pa{-\Delta+1}} 
\newcommand{\chh}{\sch_{k,\ii}} 
\newcommand{\delk}{(-\Delta+1)^{\frac{k}{2}}} 
\newcommand{\exc}[1]{E^{(#1)}_N}
\newcommand{\excn}[2]{E^{(#1)}_{#2}}
\newcommand{\hn}{H_N}
\newcommand{\sign}{\Sigma_N}
\newcommand{\signn}[1]{\Sigma_{#1}}
\newcommand{\PreserveBackslash}[1]{\let\temp=\\#1\let\\=\temp}
\newcolumntype{C}[1]{>{\PreserveBackslash\centering}p{#1}}
\newcolumntype{R}[1]{>{\PreserveBackslash\raggedleft}p{#1}}
\newcolumntype{L}[1]{>{\PreserveBackslash\raggedright}p{#1}}
\title[]{Some properties of the\\potential-to-ground state map\\in quantum mechanics}
\author[L. Garrigue]{Louis Garrigue}
\address{CERMICS, \'Ecole des ponts ParisTech, 6 and 8 av. Pascal, 77455 Marne-la-Vallée, France} 
\email{louis.garrigue@enpc.fr}
\date{\today}
\begin{document} 
\begin{abstract} 
We analyze the map from potentials to the ground state in static many-body quantum mechanics. We first prove that the space of binding potentials is path-connected. Then we show that the map is locally weak-strong continuous and that its differential is compact. In particular, this implies the ill-posedness of the Kohn-Sham inverse problem.
\end{abstract}
\maketitle
\flushbottom
\setcounter{tocdepth}{1} 

The potential-to-eigenstate map is one of the main objects in quantum mechanics, since its knowledge enables to deduce many physical quantities. 
 The mathematical structure of this map is very rich, it relates to degenerate perturbation theory and Rayleigh-Schrödinger series \cite{ReeSim4}, adiabaticity \cite{Teufel03}, the topology of binding potentials, and so on. Moreover, in Density Functional Theory, the space of potential-representable densities is important to know in order to characterize Kohn-Sham potentials \cite{Lieb83b}, and the potential-to-eigenstate map contains this information.

In this work, we prove some mathematical properties of this map. The natural starting space is the set of potentials which are able to bind $N$ particles, and it has no known simple characterization. We show that it is path-connected when degeneracies are allowed, implying that the set of potential-representable densities is also path-connected. Then we prove that the potential-to-ground states map is locally weak-strong continuous, and that restricted to potentials having a non-degenerate ground state, it is smooth and has a compact differential. Next, we show that the potential-to-ground state energy map is singular on degenerate potentials, in absence of interactions. These results allow us to deduce that the Kohn-Sham problem of Density Functional Theory is ill-posed on a bounded set $\Omega$, when restricted to the non-degenerate case. 

When our proofs allow it, we state the results in the case of excited states. We remark that the ground state map is special in the sense that most of its properties do not simply extend to excited states.

\subsection*{Acknowledgement}
I warmly thank Mathieu Lewin, my PhD director, for having advised me during this work. This project has received funding from the European Research Council (ERC) under the European Union's Horizon 2020 research and innovation programme (grant agreement MDFT No 725528).

\section{Main results: properties of the map}\label{pco} 
\subsection{Definitions}

We consider an open connected set $\Omega \subset \R^d$ with Lipschitz boundary, in which the particles live. We consider external potentials $v\in (L^p+L^{\ii})(\Omega,\R)$ and an even positive interaction potential $w \in (L^p+L^{\ii})(\R^d,\R)$. The exponent can be $p > \max \pa{\frac{2d}{3},2}$ in case we need to apply unique continuation for the many-body Schrödinger operator \cite{Garrigue19}, or it can be 
\begin{align}\label{expot}
	p = 1 \tx{ for } d = 1, \bhs p > 1 \tx{ for } d = 2, \bhs p = \f{d}{2} \tx{ for } d \ge 3,
\end{align}
otherwise. Our space includes Coulomb-like singularities in $d=3$ involved in the physical situation. We consider the $N$-particle Schr\"odinger operator
\begin{equation}
\hn(v) \df \sum_{i=1}^N -\Delta_i  + \sum_{1\leq i < j \leq N} w(x_i-x_j) + \sum_{i=1}^N v(x_i),
\end{equation}
acting on the space of antisymmetric spinless wavefunctions $L^2\ind{a}\pa{\Omega^N} \df \wedge^N L^2(\Omega)$, and where $-\Delta$ is the Dirichlet Laplacian. We consider its form domain $(H^1_0 \cap L^2\ind{a})(\Omega^N)$ of antisymmetric functions vanishing on the boundaries, and also denote by $H_N(v)$ the associated Friedrichs operator. The results presented in this document would not depend on the boundary conditions. We denote by $\cE_v(\p) \df \ps{\p,\hn(v) \p}$ the energy functional, and by
\begin{align*}
\ro_{\p}(x) \df N \int_{\R^{d(N-1)}} \ab{\p}^2(x,x_2,\dots,x_N) \d x_2 \cdots \d x_N
\end{align*}
the one-body density of a state. We recall \cite[Section 12.1]{LieLos01} that the $k^{\tx{th}}$ excited energies are
 \begin{align}\label{mlm}
	 \exc{k}(v) \df \mysup{A \subset H^1\ind{a}(\Omega^N) \\ \dim A = k} \myinf{\p \in A^{\perp} \\ \int \ab{\p}^2 = 1} \cE_v(\p) = \myinf{A \subset H^1\ind{a}(\Omega^N) \\ \dim A = k+1} \mymax{\p \in A \\ \int \ab{\p}^2 = 1} \cE_v(\p),
 \end{align}
 where $A$ are linear subsets, the ground energy is thus $\exc{0}(v)$. We also denote by $\sign(v) \df \inf \sigma\ind{ess}\bpa{\hn(v)}$ the bottom of the essential spectrum of $\hn(v)$. 
 We define $\spa = \spp$ or 
\footnote{Let us recall \cite{ReeSim4} that 
 \begin{align*}
	 & L^p+L^{\ii}_{\ep}\df \acs{ f \in (L^p\hspace{-0.1cm}+L^{\ii})(\R^d,\R) \st \forall \ep > 0, \exists g_{\ep}, h_{\ep}, \hs f\hspace{-0.1cm} = g_{\ep}\hspace{-0.1cm} + h_{\ep}, \nor{h_{\ep}}{L^{\ii}} \hspace{-0.1cm} \le \ep, g_{\ep}\hspace{-0.1cm} \in\hspace{-0.1cm} L^p}.
 \end{align*}} 
 $\spa = \sppe$ depending on the situation. When we work under the condition that $0 \le w \in \sppe$, the HVZ theorem \cite{Zhislin60b,VanWinter64,Hun66,Lewin11} says that $\excn{0}{N-1}(v) = \sign(v)$ whenever $v \in \sppe$. 
 We now introduce the space of non-degenerate binding potentials for ground and excited states
\begin{align}\label{ddd}
	\spn{k}  \df \acs{v \in \spa \bigst \exc{k}(v) \sle \sign(v), \hs \dim \Ker \bpa{ \hn(v) - \exc{k}(v)} = 1},
\end{align}
the space of (possibly degenerate) binding potentials
\begin{align*}
	\spf{k} \df \acs{v \in \spa \bigst \exc{k}(v) \sle \sign(v)}, 
\end{align*}
and the most general set of metastable binding potentials \footnote{Elements $v$ of $\spnm{0} \backslash \spf{0}$ satisfy ${\exc{0}(v) = \sign(v)}$, here is an example. Take $N=1$, $d \ge 5$, $\p(x) = c(1+x^2)^{1-\f{d}{2}}$ where $c$ normalizes $\p$. We have $-\Delta \p + v \p = 0$ with $v(x) = - d(d-2) (1+x^2)^{-2} \in L^{d/2} \cap L^{\ii}$. We know that $\p$ is the ground state since it is strictly positive everywhere. Hence $\exc{0}(v) = \sign(v) = 0$.}
 \begin{align*}
	 \spnm{k} \df \acs{ v \in \spa \bigst \dim \Ker \bpa{ \hn(v) - \exc{k}(v)} \ge 1}.
 \end{align*}
Those are all locally endowed with the norm of $\spp$, recalled below in \eqref{ddef}. 
 They satisfy $\spn{k} \subset \spf{k} \subset  \spnm{k} \subset \spp$, and we will not work on $\spnm{k}$ in this document.

\subsection{Path-connectedness of the space of binding potentials}

By perturbation theory \cite{Kato,ReeSim4}, $\spn{k}$ and $\spf{k}$ are open in $\spa \df \spp$. The manifold structure is then canonical and locally flat. The injections $\spn{k}\xhookrightarrow{} \spa$ and $\spf{k} \xhookrightarrow{}\spa$ make them smooth embedded manifolds of $\spa$, and $\tx{T}_v \spn{k} = \spa$. We now show that the set of trapping electric potentials is path-connected.

\begin{theorem}[Path-connectedness of the space of binding potentials]\label{bindos} Take $\spa = (\spp)(\R^d,\R)$, $p$ as in \eqref{expot}, and take $w \in \sppe$ with $w \ge 0$. Then $\cap_{n=1}^N \spfn{0}{n}$ is path-connected.
\end{theorem}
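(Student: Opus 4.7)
The plan is to join any $v_0, v_1 \in \cap_{n=1}^N \spfn{0}{n}$ by a three-segment path passing through deeply-shifted versions. I pick once and for all $\chi = \sum_{j=1}^N \chi_j$ with $0 \le \chi_j \in C_c^\infty(\R^d)$ smooth bumps supported in $N$ pairwise disjoint balls separated by a large distance, so that for $T$ sufficiently large $-T\chi$ provides $N$ single-particle bound states well below its essential spectrum. The path I would construct is
\begin{equation*}
v_0 \leftrightarrow v_0 - T\chi \leftrightarrow v_1 - T\chi \leftrightarrow v_1,
\end{equation*}
with the first and third segments given by the rays $t \in [0,T] \mapsto v_i - t\chi$ and the middle segment by the straight line $s \in [0,1] \mapsto (1-s)(v_0 - T\chi) + s(v_1 - T\chi)$.

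For the outer segments, I would exploit that $\sum_i \chi(x_i)$ is a nonpositive perturbation whose single-particle multiplication operator is relatively compact with respect to $-\Delta$ under the hypotheses on $p$. Testing the energy functional on a ground state $\psi_n$ of $\hnn{n}(v_i)$ yields
\begin{equation*}
\excn{0}{n}(v_i - t\chi) \le \excn{0}{n}(v_i) - t \int_{\R^d} \chi\, \ro_{\psi_n},
\end{equation*}
with the integral strictly positive by unique continuation \cite{Garrigue19}, so $t \mapsto \excn{0}{n}(v_i - t\chi)$ is strictly decreasing. Combining this with the HVZ identity $\signn{n}(v_i - t\chi) = \excn{0}{n-1}(v_i - t\chi)$ (valid under $w \ge 0$) and the Hellmann--Feynman formula
\begin{equation*}
\tfrac{d}{dt}\bigl[\signn{n}(v_i-t\chi) - \excn{0}{n}(v_i-t\chi)\bigr] = \int \chi\,(\ro_{\psi_n^{(t)}} - \ro_{\psi_{n-1}^{(t)}}),
\end{equation*}
I would show that the binding gap stays strictly positive along the ray by choosing the support of $\chi$ to encompass essentially all the mass of the densities involved, so that $\int \chi (\ro_{\psi_n^{(t)}} - \ro_{\psi_{n-1}^{(t)}}) \approx 1 > 0$.

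For the middle segment, a purely variational argument suffices: choose $T$ large enough depending only on $C := \max(\|v_0\|_{\spp}, \|v_1\|_{\spp})$ so that each well $\chi_j$ supports a single-particle state with energy at most $-\alpha T$, uniformly in $s \in [0,1]$ for the full operator $-\Delta - T\chi + (1-s) v_0 + s v_1$. An antisymmetrized product of one such state per well, for any $n \le N$ wells, gives $\excn{0}{n}((1-s) v_0 + s v_1 - T\chi) \le -n\alpha T + O(1)$, with the $O(1)$ absorbing the kinetic, bulk and pairwise contributions; the interaction term is controlled using $w \ge 0$ together with decay of $w$ at infinity and the large separation between wells. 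The HVZ lower bound $\signn{n}(\cdot) \ge \excn{0}{n-1}(\cdot) \ge -(n-1)\alpha T - O(1)$ then produces a binding gap of order $\alpha T$, uniformly in $s \in [0,1]$ and $n \le N$, once $T$ is taken large enough. Concatenating the three segments gives a continuous path in the $\spp$ norm entirely inside $\cap_{n=1}^N \spfn{0}{n}$.

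The main obstacle is the outer segment analysis: while the ground state energy and the essential spectrum each move continuously along $t \mapsto v_i - t\chi$, keeping the strict inequality $\excn{0}{n} < \signn{n}$ for every $n \le N$ requires the density-ordering estimate on $\ro_{\psi_n^{(t)}} - \ro_{\psi_{n-1}^{(t)}}$ above, or equivalently a concavity--continuity argument exploiting that the gap $\signn{n}(v_i-t\chi) - \excn{0}{n}(v_i-t\chi)$ is strictly positive at both $t = 0$ and $t = T$. The delicate point is that this must be ensured simultaneously for all $n \in \{1,\dots,N\}$, which is what dictates the particular choice of $\chi$ as a sum of $N$ well-separated deep bumps.
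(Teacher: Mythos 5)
There is a genuine gap, and it sits exactly where you flag your ``main obstacle'': the outer segments $t \mapsto v_i - t\chi$. What you need there is that deepening the potential preserves binding of $n$ particles for every $n \le N$ and every intermediate $t$, and this is not something you can get from the sketched Hellmann--Feynman argument. For small and intermediate $t$ the ground-state densities $\ro_{\psi_n^{(t)}}$ are essentially those of $v_i$, which is arbitrary and has nothing to do with the location of your bumps, so $\int \chi\,(\ro_{\psi_n^{(t)}}-\ro_{\psi_{n-1}^{(t)}})$ has no reason to be close to $1$, or even nonnegative; moreover along such a ray the energies need not be differentiable (crossings/degeneracies can occur, only Dini derivatives exist, cf.\ Theorem \ref{helf}), so the identity you differentiate is itself unjustified. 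In fact the monotonicity you are implicitly invoking -- ``$v \in \spfn{0}{n}$ and $u \ge 0$ imply $v-u \in \spfn{0}{n}$'' -- is of the same nature as statements the paper explicitly leaves as conjectures (e.g.\ that $v \le u$ with $u$ binding should imply $v$ binding, which the paper only deems ``natural'' for Coulomb and notes may be special to that interaction). This is precisely why the paper's path only ever \emph{adds} nonnegative potentials and constant gauge shifts: Lemma \ref{addos} works because adding $u \ge 0$ pushes $\signn{n}$ up by min--max, while it raises $\excn{0}{n}$ by at most $\int u\,\ro_{\p}$, which is made small by putting $u$ where the ground states have exponentially little mass. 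The direction of the deformation is not a cosmetic choice; reversing it, as your outer segments do, removes the one-sided monotonicity the whole argument rests on.

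A second, related problem is your use of HVZ. Both on the outer rays and on the middle segment you bound the threshold from below via $\signn{n}(\cdot) \ge \excn{0}{n-1}(\cdot)$, but this (the hard direction of HVZ) requires the potential to lie in $\sppe$, i.e.\ to vanish at infinity in the $L^p+L^\ii_\ep$ sense; the theorem allows general $v_0,v_1 \in \spp$, for which the inequality can simply fail (a constant negative background already gives $\signn{1}(v) < \excn{0}{0}(v) = 0$). The paper's Step 1 exists to fix exactly this: truncate the $L^p$ part, raise the potential to a constant outside a large ball (using exponential decay of the ground states together with Lemma \ref{addos}), and shift by a constant, so that HVZ in the form of \cite[Theorem 3.1]{Lewin11} is only ever applied to compactly supported potentials such as $V_{\ell,R}=(v_3-\ell)\indic_{B_R}$. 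Your middle-segment energy estimate ($\excn{0}{n} \le -n\alpha T + O(1)$ versus a threshold of order $-(n-1)T$) is sound in spirit and close to the paper's Step 3 computation, but as written it inherits both defects: it presupposes HVZ for non-decaying potentials, and it is only reached through the outer segments, which are unproven. To repair the scheme you would essentially have to reintroduce the paper's ingredients: a reduction to compactly supported data before invoking HVZ, and deformations that only add nonnegative potentials supported where the relevant ground states have negligible density.
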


\begin{remark}
	For instance we can connect all the elements to a well $-c_{d,N} \indic_{B_1}$, where the constant $c_{d,N} > 0$ is chosen large enough so that $-c_{d,N}\indic_{B_1}$ belongs to $\cap_{n=1}^N \spfn{0}{n}$. We naturally conjecture that $\spfn{0}{N+1} \subset \spfn{0}{N}$ 
	for any $N  \ge 1$ and any interaction $w \ge 0$, which would imply $\cap_{n=1}^N \spfn{0}{n} = \spf{0}$.
\end{remark}
It would be interesting to know whether the same result holds for $\spn{0}$. We present other remarks in Section \ref{secpot}, where we provide the proof of Theorem \ref{bindos}. Also, we will explain in the proofs that to any path connecting two given binding potentials, there is a corresponding piecewise real analytic path $t \mapsto \p\ex{0}(t)$ of ground states connecting two of the initial ground states. Hence there also exists a corresponding path of densities $t \mapsto \ro_{\p\ex{0}(t)}$.
\begin{corollary}\label{rkb}
The set of $v$-representable densities
 \begin{align*}
	 \acs{ \ro_{\p} \in L^1(\R^d,\R_+) \bigst \p \tx{ is the ground state of $\hn(v)$ for some } v \in \cap_{n=1}^N \spfn{0}{n}}
 \end{align*}
is path-connected.
\end{corollary}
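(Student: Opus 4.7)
The plan is to deduce the corollary from Theorem \ref{bindos} together with the piecewise real analytic lift of ground states that accompanies it. Let $\ro_0=\ro_{\p_0}$ and $\ro_1=\ro_{\p_1}$ be two $v$-representable densities, with $\p_0,\p_1$ normalized ground states of $\hn(v_0),\hn(v_1)$ for some $v_0,v_1\in\cap_{n=1}^N\spfn{0}{n}$. Theorem \ref{bindos} supplies a continuous path $t\mapsto v(t)$ in $\cap_{n=1}^N\spfn{0}{n}$ joining $v_0$ to $v_1$, and the construction entering its proof provides a piecewise real analytic path $t\mapsto\tilde\p(t)$ of normalized ground states of $\hn(v(t))$; in particular $\tilde\p(0)$ and $\tilde\p(1)$ are normalized ground states of $v_0$ and $v_1$ that need not coincide with $\p_0$ and $\p_1$.

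To bridge this possible mismatch I would work inside the finite-dimensional degenerate eigenspace $\Ker\bpa{\hn(v_0)-\exc{0}(v_0)}$. Its unit sphere is path-connected (being $S^{2m-1}$ in the complex case, with $m\ge 1$), and every vector on a continuous arc joining $\p_0$ to $\tilde\p(0)$ on that sphere is itself a normalized ground state of $\hn(v_0)$, whose density is therefore $v$-representable. The analogous interpolation at $t=1$ inside $\Ker\bpa{\hn(v_1)-\exc{0}(v_1)}$ joins $\tilde\p(1)$ to $\p_1$. Concatenating these three pieces yields an $L^2$-continuous curve $t\mapsto\p(t)$ of normalized antisymmetric wavefunctions satisfying $\p(0)=\p_0$, $\p(1)=\p_1$, and such that $\p(t)$ is a ground state of some $v(t)\in\cap_{n=1}^N\spfn{0}{n}$ for every $t$.

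The final step is to push this curve through the density map. Since $\bigl| |\p|^2-|\phi|^2 \bigr|\le(|\p|+|\phi|)|\p-\phi|$ pointwise, Cauchy-Schwarz yields the $L^1$-Lipschitz bound $\|\ro_\p-\ro_\phi\|_{L^1(\Omega)}\le N\|\p-\phi\|_{L^2}\,\|\p+\phi\|_{L^2}$, so $\p\mapsto\ro_\p$ is continuous from bounded sets of $L^2\ind{a}(\Omega^N)$ into $L^1(\Omega,\R_+)$. Applied to $t\mapsto\p(t)$, this produces the desired continuous curve of $v$-representable densities from $\ro_0$ to $\ro_1$.

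The hardest step is the middle path itself: crossings of the ground energy with excited levels along $t\mapsto v(t)$ could a priori obstruct any continuous eigenfunction selection. This is exactly what the piecewise real analytic lift built into the proof of Theorem \ref{bindos} handles; once that ingredient is granted, the endpoint interpolation inside each degenerate ground eigenspace and the $L^2\to L^1$ continuity of $\p\mapsto\ro_\p$ are routine.
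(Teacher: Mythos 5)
Your overall route is the paper's: follow the potential path from Theorem \ref{bindos}, lift it to a path of ground states, repair mismatches inside the (path-connected) unit sphere of a degenerate ground eigenspace, and push the result through the $L^2\to L^1$ continuity of $\p\mapsto\ro_{\p}$. The genuine gap is in the middle step, which you treat as already available: the proof of Theorem \ref{bindos} constructs only the path of potentials $t\mapsto v(t)$; no ground-state lift is built there, and constructing one is precisely the substance of the proof of Corollary \ref{rkb}. The paper obtains it by applying Rellich's theorem on one-parameter analytic families to each linear segment of the (piecewise linear) potential path, which yields eigenbranches that are only analytic on the left and on the right, with possibly different one-sided limits at level crossings; it then shows there are only finitely many such jump times (the ground energy stays uniformly away from the essential spectrum along the path, so finitely many branches are involved, and two analytic branches crossing infinitely often must have equal energies), and bridges each jump by a path on the unit sphere of $\Ker\bpa{\hn(v(t_i))-\exc{0}(v(t_i))}$, which contains both one-sided limits.

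Even granting piecewise analyticity, the curve you concatenate is not yet continuous: you interpolate inside degenerate eigenspaces only at $t=0$ and $t=1$, while $\tilde{\p}(t)$ can jump at every interior crossing time. The repair is exactly your own endpoint trick, applied at each of the finitely many interior crossings together with the finiteness-of-jumps argument; this must be carried out rather than delegated to Theorem \ref{bindos}. (Your final $L^1$ estimate is fine up to a cosmetic point: the Cauchy--Schwarz step controls $\nor{\ro_{\p}-\ro_{\phi}}{L^1}$ by $N\nor{\p-\phi}{L^2}\bpa{\nor{\p}{L^2}+\nor{\phi}{L^2}}$, i.e.\ via $\nor{\ab{\p}+\ab{\phi}}{L^2}$, not via $\nor{\p+\phi}{L^2}$; this matches the bound proved in Lemma \ref{smoothrop}.)
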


\subsection{Local weak-strong continuity of $v \mapsto \p\ex{0}(v)$}\label{chprop} 
Let us define
 \begin{align*}
	 \Sbb \df \acs{ \p \in L\ind{a}^2\bpa{\Omega^N,\C} \st \nord{\p} = 1}, \bhs  H^k\ind{p} \df \frac{H^k(\Omega^N) \cap \Sbb}{ S^1},
 \end{align*}
 where $S^1$ represents the circle of phase factors. The space of rays $H^k\ind{p}$ identifies two vectors equal up to a global phase. We define the maps
\begin{align*}
	\p\ex{k} : 
\begin{array}{ccc}
\spn{k} & \lra & H^1\ind{p}(\Omega) \\
	v & \longmapsto & \p\ex{k}(v), \\
\end{array}
\end{align*}
where $\p\ex{k}(v)$ denotes the non-degenerate $k\expo{th}$ excited eigenstate of $H_N(v)$, being the ground state when $k=0$. We define
 \begin{align}\label{defpm}
	  \bpa{\hn(v)-\exc{k}(v)}\iv_{\perp}
 \end{align}
to be $0$ on $\C \p\ex{k}(v)$, and on $\acs{\p\ex{k}(v)}^{\perp}$ to be the inverse of the restriction of $\hn(v)-\exc{k}(v)$ to $\acs{\p\ex{k}(v)}^{\perp}$. We define an equivalence relation $\sim$ on sets of potentials, by $v \sim u$ if $v-u$ is constant. The following theorem is the main one of our work.

\begin{theorem}[Regularity and local weak-strong continuity]\label{proposs} We take $p$ as in \eqref{expot}, $\spa = (\spp)(\Omega,\R)$ and $w \in (\spp)(\R^d,\R)$. \smallskip

$(i - \tx{Smoothness}).$ The map $\p\ex{k}$ is $\cC^{\ii}$ from $\spn{k}$ to $H^1\ind{p}$. When $\spa = (\spp)/\sim$ and $p > \max(2d/3,2)$, the map $\p\ex{0}$ is injective. \smallskip

	$(ii - \tx{Compactness of the differential}).$ For any $v \in\spn{k}$, the differential $\d_v \p\ex{k} : (\spp)(\Omega,\R) \ra \acs{\p\ex{k}(v)}^{\perp} \cap H^1$ equals
\begin{align}\label{peq}
	\bpa{\d_v \p\ex{k}} u = -\bpa{\hn(v)-\exc{k}(v)}\iv_{\perp} \bpa{\Sigma_{i=1}^N u(x_i)} \p\ex{k}(v),
\end{align}
	where we considered the definition \eqref{defpm}. Moreover, for all $v \in \spn{k}$ $\d_v \p\ex{k}$ is compact, and
\begin{align*}
	\nor{\bpa{\d_v \p\ex{k}}u}{H^1}^2 \le c_{v} \nor{u}{\spp} \int_{\Omega} \ab{u} \ro_{\p\ex{k}(v)}.
 \end{align*}
	If $p > \max(2d/3,2)$, $\pa{\d_v \p\ex{0}} u = 0$ implies that $u$ is constant.

	$(iii - \tx{Local weak-strong continuity}).$ Let $p$ be as in \eqref{expot}, with $p > d/2$ when $d \ge 3$, $w \in \sppe$, $w \ge 0$ and $\spa = (\sppe)(\Omega,\R)$ in the definition \eqref{ddd} of $\spn{0}$. Let $\Lambda \subset \Omega$ be a bounded open subset of $\Omega$. Assume that $v,v_n \in \spf{0}$ with
 \begin{align*}
	  	v_n \wra v, \bhs\bhs v_n \indic_{\Omega \backslash \Lambda} \ra v\indic_{\Omega \backslash \Lambda},
 \end{align*}
resp. weakly and strongly in $(L^p+L^{\ii})(\Omega,\R)$. Then $\exc{0}(v_n) \ra \exc{0}(v)$ and for $n$ large enough $v_n \in \spf{0}$. Moreover, for any sequence $\p_n$ of approximate minimizers, that is satisfying $\cE_{v_n}(\p_n) \le \exc{0}(v_n) + \ep_n$ where $0 \le \ep_n \ra 0$ and $\nor{\p_n}{L^2} = 1$, then 
 \begin{align*}
	 P_{\Ker \bpa{\hn(v)-\exc{0}(v)}^{\perp}} \p_n \ra 0
 \end{align*}
strongly in $H^1(\Omega^N)$.  \smallskip

	$(iv - \tx{Compactness for } \Omega \tx{ bounded}).$ Let $p$ be as in \eqref{expot}, with $p > d/2$ when $d \ge 3$, and $\spa = (\spp)/\sim$. If $\Omega$ is bounded, $v \mapsto \p\ex{0}(v)$ is compact and $(\p\ex{0})\iv$ is discontinuous.
\end{theorem}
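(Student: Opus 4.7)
\emph{Parts (i)–(ii).} The plan is to invoke Kato's analytic perturbation theory. For $p$ as in \eqref{expot}, the multiplication operator $u\mapsto \sum_{i=1}^N u(x_i)$ is relatively form-bounded with respect to $\sum_i -\Delta_i$ on $\Omega^N$ with arbitrarily small relative bound, so $t\mapsto\hn(v+tu)$ is a real-analytic family (type B) of self-adjoint operators. Since for $v\in\spn{k}$ the eigenvalue $\exc{k}(v)$ is simple and isolated from the remainder of the spectrum, both the eigenvalue and the rank-one spectral projector are real-analytic in $v$, yielding the $\cC^\infty$ regularity of $\p\ex{k}$ into $H^1\ind{p}$ after a local choice of phase. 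Formula \eqref{peq} follows by differentiating $(\hn(v)-\exc{k}(v))\p\ex{k}(v)=0$ in $v$ and projecting onto $\acs{\p\ex{k}(v)}^\perp$. For injectivity of $\p\ex{0}$ modulo constants, if $\p\ex{0}(v)=\p\ex{0}(u)$ then subtracting the two eigenvalue equations gives $\pa{\sum_i(v-u)(x_i)}\p\ex{0}=c\,\p\ex{0}$ with $c=\exc{0}(v)-\exc{0}(u)$; unique continuation for the many-body Schrödinger operator \cite{Garrigue19} (valid when $p>\max(2d/3,2)$) gives $\p\ex{0}\neq 0$ almost everywhere, hence $\sum_i(v-u)(x_i)\equiv c$, which by separation of variables forces $v-u$ constant. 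The same reasoning applied at the linearized level yields the last claim of (ii).

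\emph{Norm bound and compactness of the differential.} The estimate on $\nor{(\d_v\p\ex{k})u}{H^1}$ reduces to a Cauchy–Schwarz-type bound $\nor{\sum_i u(x_i)\p\ex{k}(v)}{L^2}^2 \lesssim \nor{u}{\spp}\int|u|\,\ro_{\p\ex{k}(v)}$, combined with the continuous invertibility of $\hn(v)-\exc{k}(v)$ on $\acs{\p\ex{k}(v)}^\perp$ (a consequence of the spectral gap). Compactness of $\d_v\p\ex{k}$ then follows from compactness of the linear functional $u\mapsto\int|u|\,\ro_{\p\ex{k}(v)}$ on $\spp$: since $\exc{k}(v)<\sign(v)$, an Agmon estimate provides exponential decay of $\p\ex{k}(v)$ and thus of $\ro_{\p\ex{k}(v)}$, which makes the weighted embedding $\spp\hookrightarrow L^1(\ro_{\p\ex{k}(v)}\d x)$ compact by a truncation and tightness argument.

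\emph{Part (iii).} This is the heart of the statement. I would handle the convergence of energies separately from above and from below. For the upper bound, $\p\ex{0}(v)$ serves as a trial state: $\ro_{\p\ex{0}(v)}\in L^1\cap L^{p'}$ (Sobolev embedding plus Agmon decay), so the weak convergence $v_n\wra v$ in $\sppe$ gives $\int v_n\,\ro_{\p\ex{0}(v)}\to\int v\,\ro_{\p\ex{0}(v)}$ and hence $\limsup\exc{0}(v_n)\le\exc{0}(v)$. For the lower bound, approximate minimizers $\p_n$ are bounded in $H^1$; extract $\p_n\wra\p_\infty$ in $H^1$. The technical heart is to show that no mass escapes to infinity, i.e.\ $\nor{\p_\infty}{L^2}=1$. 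I would use an IMS-type localization with cut-offs $\chi\ind{in}$ supported in a large ball $B_R\supset\Lambda$ and $\chi\ind{out}=\sqrt{1-\chi\ind{in}^2}$. On the inner piece, Rellich compactness gives strong $L^2$-convergence of $\chi\ind{in}\p_n$ and of $\ro_{\chi\ind{in}\p_n}$, which paired with weak convergence of $v_n$ in $L^p(\Lambda)$ yields convergence of the interior potential energy. On the outer piece, the strong convergence of $v_n$ outside $\Lambda$ combined with the HVZ variational principle bounds the outside energy below by $\sign(v)\nor{\chi\ind{out}\p_n}{L^2}^2+o(1)$. The strict inequality $\exc{0}(v)<\sign(v)$ then forces $\nor{\chi\ind{out}\p_n}{L^2}\to 0$ as $R\to\infty$, so $\nor{\p_\infty}{L^2}=1$ and $\p_\infty$ is a ground state of $\hn(v)$. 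Convergence of $L^2$-norms and of kinetic energies upgrades weak $H^1$-convergence to strong, giving the final claim on $P_{\Ker(\hn(v)-\exc{0}(v))^\perp}\p_n$.

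\emph{Part (iv) and main obstacle.} When $\Omega$ is bounded, $\spp(\Omega)=L^p(\Omega)$ topologically, and any bounded sequence of potentials admits a weakly convergent subsequence. Choosing $\Lambda=\Omega$ renders the outside condition in (iii) vacuous, so part (iii) directly yields strong $H^1$-convergence of the ground states, i.e.\ compactness of $\p\ex{0}$. Discontinuity of $(\p\ex{0})\iv$ on its image is then formal: continuity of the inverse would make $\p\ex{0}$ a homeomorphism onto its image, but a compact map cannot be a homeomorphism on an open set of an infinite-dimensional Banach space, since bounded neighborhoods would then be precompact in violation of Riesz's theorem. The main obstacle I foresee is the no-escape-of-mass step in (iii): the weak-only convergence of $v_n$ inside $\Lambda$ forbids a naive strong continuity argument for the potential energy, so one must carefully combine the strong $L^{p'}$-convergence of localized densities (from Rellich) with the weak $L^p$-convergence of $v_n$, while simultaneously handling the tail through the spectral gap $\sign(v)-\exc{0}(v)>0$.
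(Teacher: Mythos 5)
Your plans for (i), (the formula and injectivity part of) (ii), (iii) and (iv) are workable in outline, but the compactness argument in (ii) has a genuine gap. You propose to deduce compactness of $\d_v \p\ex{k}$ from the bound $\nor{\bpa{\d_v\p\ex{k}}u}{H^1}^2 \le c_v \nor{u}{\spp}\int \ab{u}\,\ro_{\p\ex{k}(v)}$ together with an alleged compactness of the weighted embedding of $\spp$ into $L^1\bpa{\ro_{\p\ex{k}(v)}\d x}$, attributed to the exponential (Agmon) decay of the density. Decay only provides tightness at infinity; it does nothing against oscillations, and that embedding is in fact not compact. Take $u_n(x) = \sin(n\, x\cdot e_1)$: it is bounded in $L^{\ii}\subset\spp$ and converges weakly to $0$ in $\spp$, yet $\int \ab{u_n}\,\ro_{\p\ex{k}(v)} \to \frac{2}{\pi}N \neq 0$, and $u_n$ has no subsequence converging in $L^1(\ro_{\p\ex{k}(v)}\d x)$ (it has no a.e.\ convergent subsequence on any set of positive measure). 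In particular your bound cannot show that $\bpa{\d_v\p\ex{k}}u_n \to 0$ along weakly null sequences, since its right-hand side does not vanish; note also that your Cauchy--Schwarz step should be formulated at the $H^{-1}$ level, i.e.\ for $\del^{-\ud}\bpa{\sum_{i=1}^N u(x_i)}\p\ex{k}(v)$, rather than for the $L^2$ norm of $\sum_i u(x_i)\p\ex{k}(v)$, which is not controlled by $\nor{u}{\spp}\int\ab{u}\ro$. The compactness is true, but it comes from a cancellation mechanism your route discards: one must use the smoothing on both sides of the multiplication operator. The paper proves that $L_n \df \del^{-\ud}u_n\del^{-\ud} \to 0$ strongly whenever $u_n \wra 0$ in $\spp$ (commutator and localization estimates combined with Hardy--Littlewood--Sobolev or Kato--Seiler--Simon), then uses that $\del^{\ud}\gamma_{\psi}\del^{\ud}$ is trace class to get $\nord{\del^{-\ud}u_n(x_i)\p\ex{k}(v)}\to0$, and concludes with the boundedness of $\del^{\ud}\bpa{\hn(v)-\exc{k}(v)}\iv_{\perp}\del^{\ud}$; this is the step you would need to supply.

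For the remaining parts, your route differs from the paper mostly in packaging. In (i)--(ii) you invoke Kato analytic perturbation theory directly, whereas the paper works with the projector map $v\mapsto P_{\p\ex{k}(v)}$ via contour integrals and then transfers regularity through the embedding $\pt$; the detour exists precisely to handle the phase ambiguity rigorously, which your ``local choice of phase'' glosses over but can be made precise. In (iii) you replace the paper's geometric-convergence/HVZ machinery of Lewin by a hands-on IMS localization; this is a legitimate alternative, but you will need $w\ge0$ to discard the inside--outside interaction terms in the lower bound, and you must run the argument also for the $n$-particle energies, $n<N$, to obtain $\sign(v_n)\to\sign(v)$ and hence the claim $v_n\in\spf{0}$ for $n$ large, which your sketch does not address. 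Part (iv) coincides with the paper's argument.
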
 
In \cite{Lampart20}, Lampart proved a weak-strong continuity result in the dynamical case. In $iii)$ above and in all this document, $P_V$ denotes the orthogonal projection onto the vector subspace $V$, and $P^{\perp}_V \df 1- P_V$. In particular when $v_n \wra v$ weakly in $L^p$ with $v_n, v \in \spn{0}$ and under the above assumptions, then $\p(v_n) \ra \p(v)$ strongly in $H^1\ind{p}$. Such input-output maps involving second order differential equations are generically locally compact \cite{HasAleRom17}. In particular, Theorem \ref{proposs} $(iii)$ implies that quantum particles are insensitive to highly oscillating local electric fields.

If $\Omega = \R^d$, $\p\ex{0}$ is not weak-strong continuous because of simple counterexamples. For instance by taking $u,v \in \spn{0}$ with $\exc{0}(u) \sle \exc{0}(v)$, and $v_n(x) \df u(x-n)$, then $\p\ex{0}(v_n + v) \wra 0 \neq \p\ex{0}(v)$. However, up to translations and for $w \ge 0$, it would be possible to state a weak-strong continuity result when $d \ge 3$. 
To prove it, one could use concentration-compactness principles, see for instance \cite{Lions84,Lions84b,Lions85a,Lions85b,Lieb83,Struwe,Lewin10,Lewin11}. Assuming that $\nor{v_n}{L^p}$ is bounded, this would consist in extracting $K \in \N$ \apo{bubbles} $v_1,\dots,v_K$ from the sequence $v_n$, with $K$ large enough so that 
\begin{align*}
\sup \acs{ \nor{v}{L^p} \st \exists \acs{x_k} \subset \R^d, v_{n_k}\pa{\cdot - x_k} \us{L^p}{\wra} v} \sle 1/c\ind{CLR}.
\end{align*}
Then by the CLR bound \cite{Cwikel77,Lieb97,Roz72}, the remaining potentials to which subsequences can weakly converge, up to translations, are not able to bind any electron. Hence the system will split in adequacy with
\begin{align*}
\exc{0}(v_{n_k}) \lra \mymin{N_1, \dots, N_K \in \acs{0,\dots,N}\\\sum_{i=1}^K N_i = N} \sum_{i=1}^K \excn{0}{N_i}(v_i),
\end{align*}
and the ground wavefunctions would follow the binding subsystems.

Hellman-Feynman formulas \cite{Guttinger32,Hellman37,Feynman39} in the Gateaux sense can be derived by using a standard method \cite[Theorem 2.6]{LieSim77b} (see also \cite{MouRui88} and references therein for more general studies). The same method, used with the previously established regularity properties, enables to deduce a Hellmann-Feynman formula in the Fréchet sense.

\begin{corollary}[Hellmann-Feynman]\label{coco}
Let $p$ be as in \eqref{expot}, and choose $\spa = \spp$. The ground energy $v \mapsto \exc{0}(v)$ is Lipschitz continuous, concave and weakly upper semi-continuous on $\spp$. When $p > \max(2d/3,2)$, it is strictly concave and strictly increasing on $\spf{0}$. The energies $v \mapsto \exc{k}(v)$ are $\cC^{\ii}$ on $\spn{k}$, and for all $u \in \spp$,
\begin{align*}
\pa{\d_v \exc{k}} u = \int_{\Omega} u \ro_{\p\ex{k}(v)}.
\end{align*}
\end{corollary}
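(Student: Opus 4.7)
The plan is to exploit the variational formula for $\exc{0}$ and then import the regularity of $\p\ex{k}$ from Theorem \ref{proposs} to handle the excited energies. First I would record the affine decomposition
\begin{align*}
\cE_v(\p) \,=\, \ps{\p,\, (-\Delta_N + W)\,\p} + \int_\Omega v\,\ro_{\p},
\end{align*}
where $W$ denotes the interaction term, so $v \mapsto \cE_v(\p)$ is affine for each fixed normalized $\p$; concavity of $\exc{0}=\inf_\p \cE_\cdot(\p)$ is then immediate as an infimum of affine functionals. For normalized $\p \in H^1\ind{a}(\Omega^N)$, Sobolev embedding places $\ro_\p \in L^1 \cap L^{p'}(\Omega)$ with $p'$ conjugate to the exponent $p$ of \eqref{expot}, putting $\ro_\p$ in the topological dual of $\spp$; each slice $v \mapsto \int v\,\ro_\p$ is therefore weakly continuous, and the infimum of such a family is weakly upper semi-continuous. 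Lipschitz continuity on bounded subsets of $\spp$ then falls out of the sandwich
\begin{align*}
\int_{\Omega} (v-u)\,\ro_{\p_v^\ep} - \ep \,\le\, \exc{0}(v) - \exc{0}(u) \,\le\, \int_{\Omega} (v-u)\,\ro_{\p_u^\ep} + \ep
\end{align*}
obtained by testing with approximate minimizers $\p_v^\ep,\p_u^\ep$, combined with a KLMN form bound on $\hn(u)$ that controls $\nor{\p_u^\ep}{H^1}$ uniformly on bounded sets of potentials.

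For strict concavity and strict monotonicity on $\spf{0}$ when $p > \max(2d/3,2)$, I would rely on unique continuation. If $\exc{0}(tv+(1-t)u) = t\exc{0}(v) + (1-t)\exc{0}(u)$ for some $t \in (0,1)$ and $v,u \in \spf{0}$, any ground state $\p_t$ of $\hn(tv+(1-t)u)$ must saturate both variational principles, hence is a common ground state of $\hn(v)$ and $\hn(u)$. Subtracting the two eigenvalue equations gives
\begin{align*}
\sum_{i=1}^N (v-u)(x_i)\,\p_t \,=\, \bpa{\exc{0}(v) - \exc{0}(u)}\,\p_t.
\end{align*}
The many-body unique continuation of \cite{Garrigue19}, valid precisely in this range of $p$, forces $\p_t \neq 0$ a.e.\ on $\Omega^N$, so $\sum_i (v-u)(x_i)$ is a.e.\ constant, and a Fubini argument in the $x_i$ variables then forces $v-u$ constant on $\Omega$. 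Strict monotonicity then follows from the same sandwich together with the a.e.\ positivity of $\ro_{\p\ex{0}(v)}$ granted by unique continuation: if $v \ge u$ with $v \neq u$, the left-hand integrand is nonnegative, nonzero and paired with an a.e.\ positive density.

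Finally, for the excited energies, Theorem \ref{proposs}\,$(i)$ already supplies the $\cC^{\ii}$ map $v \mapsto \p\ex{k}(v) \in H^1\ind{p}$ on $\spn{k}$, and $v \mapsto \hn(v)$ is affine, so the composition $\exc{k}(v) = \ps{\p\ex{k}(v),\,\hn(v)\,\p\ex{k}(v)}$ is $\cC^{\ii}$. Differentiation by the product rule gives
\begin{align*}
(\d_v\exc{k})\,u \,=\, 2\Re \ps{(\d_v\p\ex{k})u,\, \hn(v)\,\p\ex{k}(v)} + \ps{\p\ex{k}(v),\, \sum_{i=1}^N u(x_i)\,\p\ex{k}(v)},
\end{align*}
and the eigenvalue equation $\hn(v)\p\ex{k}(v) = \exc{k}(v)\p\ex{k}(v)$ together with the vanishing of $\d_v\nor{\p\ex{k}(v)}{L^2}^2$ kills the first bracket, leaving exactly $\int_\Omega u\,\ro_{\p\ex{k}(v)}$ after using the symmetry of $|\p\ex{k}|^2$ in its $N$ arguments. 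The hardest step will be producing an honest uniform Lipschitz estimate over bounded subsets of $\spp$, which requires the careful interplay between the KLMN form bound on near-minimizers and the Sobolev-based duality $\ro_\p \in (L^p + L^{\ii})^*$; the remaining claims then reduce to the unique continuation input and standard differential calculus.
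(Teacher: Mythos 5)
Your argument for strict concavity contains a genuine gap: you invoke ``any ground state $\p_t$ of $\hn(tv+(1-t)u)$'', but nothing guarantees that such a ground state exists. The hypotheses only give $u,v \in \spf{0}$, and the set of binding potentials is not known to be stable under convex combinations (the paper treats much weaker stability properties of $\spf{0}$ as nontrivial or conjectural); moreover, in the generality of the corollary ($w \in \spp$, not necessarily in $\sppe$) one cannot use the HVZ identity $\sign = \excn{0}{N-1}$ to compare $\exc{0}\bpa{tv+(1-t)u}$ with the bottom of the essential spectrum, so $\exc{0}$ at the intermediate potential may fail to be an eigenvalue and your $\p_t$ may simply not exist. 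The paper avoids the intermediate point altogether: by concavity $\exc{0}(v)-\exc{0}(u) \le {^+}\delta_u \exc{0}(v-u)$, and by Theorem \ref{helf} this one-sided derivative equals $\min_{\p \in \cD\ex{0}(u)} \int \ro_{\p}(v-u) = \cE_v(\p_{u,v}) - \exc{0}(u)$ for a minimizer $\p_{u,v}$, which exists precisely because $u \in \spf{0}$; equality then forces $\p_{u,v}$ to be a ground state of $\hn(v)$ as well, and subtracting the two Schr\"odinger equations plus unique continuation yields $v-u$ constant. Your proof can be repaired along these lines (equality at an interior $t$ makes $\exc{0}$ affine on the segment, hence saturates the Dini derivative at the endpoint $u$), but as written the existence of $\p_t$ is an unjustified step, and it is exactly the step the paper's route is designed to bypass.

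The rest is essentially sound, partly by a different route. Concavity, weak upper semi-continuity and (local) Lipschitz continuity you prove directly from the affine slices $v \mapsto \cE_v(\p)$, the duality $\ro_{\p} \in L^1 \cap L^{p'}$, and the approximate-minimizer sandwich, where the paper simply cites Lieb; this is fine. Strict monotonicity (testing with the exact ground state of the larger, binding potential and using that its density is a.e.\ positive by unique continuation) is the paper's argument. For the Hellmann--Feynman formula you identify the derivative by the product rule applied to $\exc{k}(v) = \cE_v\bpa{\p\ex{k}(v)}$, killing the first term via $\ps{(\d_v \p\ex{k})u, \p\ex{k}(v)} = 0$; this is legitimate given Theorem \ref{proposs} (the energy form is jointly smooth on $\spp \times H^1$ and phase-invariant, and $(\d_v\p\ex{k})u$ takes values in $\acs{\p\ex{k}(v)}^{\perp}$), and is shorter than the paper's identification, which follows Lieb--Simon: the two-sided variational inequality $\int u\bpa{\ro\ex{k}(v+u)-\ro\ex{k}(v)} \le \exc{k}(v+u)-\exc{k}(v)-\int u\,\ro\ex{k}(v) \le 0$ combined with the smoothness of $\ro\ex{k}$ and a Gagliardo--Nirenberg estimate.
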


The previous expression can be formally written $\bpa{\d_v \exc{k}}^* = \ro_{\p\ex{k}(v)}$, where $\tx{}^*$ denotes the dual representation at stake in Riesz' theorem, and this corresponds to the notation $\restr{\f{\delta \exc{k}}{\delta u(x)}}{v}$ used in the physics litterature. The second differential is $\bpa{\d^2_v \exc{k}}(h,u) =\ps{ h, \bpa{\d_v \bpa{s \mapsto \ro_{\p\ex{k}(s)}} } u}$.

\subsection{Singularities on degenerate potentials}\label{sc3}
We want to study the map $v \mapsto \exc{k}(v)$ on singular potentials $\spf{k} \backslash \spn{k}$, to complete our general picture. To this purpose, we use the Rayleigh-Schrödinger series, that is the power series in $\alpha \in \R$ of $\p\ex{k}(v+ \alpha u)$ and $\exc{k}(v + \alpha u)$. We need to define a slightly weaker version of Gateaux derivation, because the ground state of $H+\alpha G$ in a neighborhood of $0^+$ is in general different from the one at $0^-$. Take a manifold $X$ locally modelled on a real vector space $Y$. We say that a function $f : X \ra \R$ is Dini differentiable at $x \in X$ if for any direction $y \in Y$, 
\begin{align}\label{dinidef}
	\lim_{0 \le t \ra 0^+} \pa{f(x+ty)-f(x)}/t =: \pa{{^+}\delta_x f}(y)
\end{align}
exists, i.e. $f$ has an upper Dini derivative in every direction. We also define ${^-} \delta_x f (y) \df - \pa{{^+}\delta_x f}(-y)$. Higher Dini derivatives $^+\delta_x^n f(y)$ are defined similarly. For possibly degenerate potentials $v \in \spf{k}$, we consider the real sphere of real eigenstates 
\begin{align}\label{bloch}
	\cD\ex{k}(v) \df \acs{ \p \in \Ker \bpa{\hn(v)-\exc{k}(v)} \st \p(X) \in \R, \mediumint \ab{\p}^2 = 1}.
 \end{align}
We define the integers $m_k$ and $M_k$ by
\begin{align}\label{cdff}
	 \exc{m_k - 1}(v) \sle \exc{m_k}(v) = \dots = \exc{k}(v) = \dots = \exc{M_k}(v) \sle \exc{M_k+1}(v),
 \end{align}
	so $\dim \cD\ex{k}(v) = M_k-m_k+1$, with $\exc{-1}(v) \df -\ii$ by convention. 

\begin{theorem}[Degenerate Hellman-Feynman]\label{helf}
	Let $p$ be as in \eqref{expot}, $\cV = \spp$, $w \in \spp$, take a possibly degenerate potential $v \in \spf{k}$, and consider the definitions \eqref{dinidef}, \eqref{bloch} and \eqref{cdff}. The energy $\exc{k}$ is infinitely Dini differentiable on $v$, with
\begin{align}\label{degdeg}
^+\delta_v \exc{k} (u)  & =  \mymax{\p_0,\dots,\p_{M_k-k} \in \cD\ex{k}(v) \\ \p_i \perp \p_j \\ 0\le i,j \le M_k-k}  \mymin{\p = \sum_{i=0}^{M_k-k} \lambda_i \p_i\\ \lambda_i \in \C, \sum_i \ab{\lambda_i}^2 = 1} \int \ro_{\p} u   \nonumber\\
& =  \mymin{\p_0,\dots,\p_{k-m_k} \in \cD\ex{k}(v) \\ \p_i \perp \p_j\\ 0\le i,j \le k-m_k}  \mymax{\p = \sum_{i=0}^{k-m_k} \lambda_i \p_i\\ \lambda_i \in \C, \sum_i \ab{\lambda_i}^2 = 1} \int \ro_{\p} u.
\end{align}
In particular, ${^+}\delta_v \exc{0}(u) = \min_{\p \in \cD\ex{0}(v)} \int \ro_{\p} u$. Moreover, $^+\delta_v \exc{0}$ is concave and if $w \ge0$, it is also weakly upper semi-continuous.
	If in \eqref{cdff} we take $M_k = m_k +1$ and $k=m_k$, so that $\dim \cD\ex{k}(v) = 2$ and if $\p,\Phi$ is an orthonormal basis of $\cD\ex{k}(v)$, then we have
\begin{align}\label{coucou}
	& ^{\pm}\delta_v \exc{k}(u) = \ud\int u \pa{ \ro_{\p} + \ro_{\Phi}} \mp \ud \sqrt{ \pa{ \int \hspace{-0.1cm} u\pa{ \ro_{\p} - \ro_{\Phi}}}^2 \hspace{-0.2cm} + 4 \ab{ \ps{ \p, \vv{u} \Phi}}^2}.
\end{align}
\end{theorem}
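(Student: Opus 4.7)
The plan is to apply degenerate analytic perturbation theory to the self-adjoint family $H_N(v + tu)$ for $t \ge 0$ small. Since $u \in \spp$ is infinitesimally form-bounded with respect to $\sum_i -\Delta_i$, this is a Kato analytic family of type (B). As $v \in \spf{k}$, the eigenvalue $\exc{k}(v)$ is isolated with finite total multiplicity $M_k - m_k + 1$, and Kato's isolated-eigenvalue-group theorem provides real-analytic branches $E_{m_k}(t), \ldots, E_{M_k}(t)$ on some $[0, t_0)$, all equal to $\exc{k}(v)$ at $t=0$, that collectively describe the spectrum of $\hn(v+tu)$ in a small neighborhood of $\exc{k}(v)$. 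Standard first-order degenerate perturbation theory identifies the slopes $E'_j(0)$ as the eigenvalues of the finite Hermitian matrix
\[
M(u)_{ij} \df \ps{\p_i, \bpa{\sum_{l=1}^N u(x_l)} \p_j} = \int u \ro_{\p_i, \p_j},
\]
for any orthonormal basis $\{\p_i\}_{i=m_k}^{M_k}$ of $\cD\ex{k}(v)$, with $\ro_{\p_i,\p_j}$ the transition one-body density.

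Sorting these slopes, the $k$-th global eigenvalue $\exc{k}(v+tu)$ for small $t > 0$ is the branch whose first-order slope is the $(k - m_k + 1)$-th smallest eigenvalue of $M(u)$; this proves existence of the right Dini derivative and identifies it with that eigenvalue. The Courant-Fischer min-max characterization then expresses this eigenvalue both as a max-min over orthonormal families of size $M_k - k + 1$ in $\cD\ex{k}(v)$ and dually as a min-max over orthonormal families of size $k - m_k + 1$, producing the two formulas in \eqref{degdeg} once one uses the identity $\ps{\p, M(u) \p} = \int u \ro_\p$, valid for every $\p = \sum_i \lambda_i \p_i \in \cD\ex{k}(v)$ and immediate from the definition of the density. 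Higher Dini derivatives $^+\delta_v^n \exc{k}(u)$ follow by iterating: within each sub-block of branches sharing the same first-order slope, a second-order Feshbach-Schur reduction yields a finite Hermitian matrix whose eigenvalues determine the second-order splittings, and so on. Since all effective problems are finite-dimensional, the branches $E_j(t)$ are analytic on $[0, t_0)$ and supply derivatives of all orders.

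Concavity of $^+\delta_v \exc{0}$ is immediate: the $k=0$ specialization gives $^+\delta_v \exc{0}(u) = \min_{\p \in \cD\ex{0}(v)} \int u \ro_\p$, an infimum of linear functionals of $u$. For weak upper semi-continuity when $w \ge 0$, elliptic regularity for $\hn(v)$ forces each ground state $\p \in \cD\ex{0}(v)$ to satisfy $\ro_\p \in L^1 \cap L^{p'}$, which lies in the predual of $\spp$; each linear functional $u \mapsto \int u \ro_\p$ is then weakly continuous, so their infimum is weakly upper semi-continuous. For the $2 \times 2$ case \eqref{coucou}, the matrix $M(u)$ has diagonal entries $\int u \ro_\p$ and $\int u \ro_\Phi$ and off-diagonal $\ps{\p, \bpa{\sum_i u(x_i)} \Phi}$; a direct diagonalization then gives \eqref{coucou}.

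The main obstacle is to rigorously apply Kato's analytic perturbation theory on the unilateral interval $[0, t_0)$ within the form-domain setting used here, with $u$ only in $\spp$, and to match the global spectral ordering $\exc{k}(v+tu)$ to the local ordering of the analytic branches $E_j(t)$. Both points are classical but require uniform control of the resolvent away from the deformed eigenvalue group for small $t$, which must be checked carefully in the form-boundedness framework.
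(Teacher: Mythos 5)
Your proposal is correct and follows essentially the same route as the paper: analytic (Rellich--Kato) eigenvalue branches for the one-parameter family, reduction of the first-order behaviour to the finite Hermitian matrix $\bigl(\int u(x_1)\,\p_i\p_j\bigr)_{ij}$ on the degenerate eigenspace, min--max (Courant--Fischer) on that matrix to obtain \eqref{degdeg}, and an explicit $2\times2$ computation for \eqref{coucou} (the paper optimizes over the Bloch sphere but itself notes that direct diagonalization gives the same eigenvalues). Your alternative argument for weak upper semi-continuity (each $\ro_{\p}$ lies in $L^1\cap L^{p'}$, the predual of $L^p+L^{\ii}$, so the infimum of weakly continuous linear functionals is weakly upper semi-continuous) is a legitimate substitute for the paper's Mazur-type argument.
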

Similar properties hold for $\p\ex{k}$, which is infinitely Dini differentiable. If $^+\delta_v \exc{k}(u) \sle \tx{ } ^-\delta_v \exc{k}(u)$, then the perturbation of $\hn(v)$ by $u$ decreases the degeneracy by at least one, since locally at least two eigenvalues split. The degeneracy of $\cD\ex{0}(v)$ is completely broken at first order if and only if $\min_{\p \in \cD\ex{0}(v)} \int \ro_{\p}u$ has a unique minimizer up to a phase factor, because then, locally there is exactly one eigenvalue becoming lower than all the others. Given a real orthonormal basis $(\p_i)_{1 \le i \le D}$ of $\Ker \bpa{\hn(v)-\exc{k}(v)}$, we can parametrize $\p = \sum_{i=1}^D \lambda_i \p_i$ with complex $\lambda = (\lambda_i)_{1 \le i \le D}$ satisfying $\sum_{i=1}^D \ab{\lambda_i}^2 = 1$, and we have $\int u\ro_{\p} = \ps{\lambda, M_u \lambda} \in \R$ where $M_u \df \bpa{\int u(x_1) \p_i \p_j}_{1 \le i,j \le D}$ is symmetric and real.





Given some degenerate potential $v \in \spf{k} \backslash \spn{k}$, we want now to know whether there is a direction in which one can break the degeneracy. We also want to know whether $\exc{k}$ is differentiable at those degenerate potentials.

\begin{corollary}[Degeneracy breaking and differentiability of $\exc{k}$]\label{propcandif} Let $p$ be as in \eqref{expot}, $\spa = \spp$, $w \in \spp$ and $v \in \spf{k}$, and consider $\cD\ex{k}(v)$ as defined in \eqref{bloch}. Assume that $k=0$, that $\exc{k-1}(v) \sle \exc{k}(v)$, or that $\exc{k}(v) \sle \exc{k+1}(v)$.
\smallskip

\textup{($i -$ Breaking in a direction.)} Take a direction $u \in \spp$. The following statements are equivalent

\hspace{1cm}\bul $\lambda \mapsto {^+}\delta_v \exc{k}(\lambda u)$ is linear on $\R$

\hspace{1cm}\bul the degeneracy is not broken at first order in the direction $u$

\hspace{1cm}\bul the integral $\int u \ro_{\p}$ is constant over $\p \in \cD\ex{k}(v)$

\hspace{1cm}\bul for any $\p,\Phi \in \cD\ex{k}(v)$ we have $N \int_{\R^{dN}} u(x_1) \p \Phi = \ps{\p,\Phi} \int u \ro_{\p}$

\textup{($ii -$ Generic breaking.)} The following statements are equivalent

\hspace{1cm}\bul $\exc{k}$ is differentiable at $v$

\hspace{1cm}\bul $\lambda \mapsto {^+}\delta_v \exc{k}(\lambda u)$ is linear on $\R$, for any $u \in \cC^{\ii}\ind{c}$

\hspace{1cm}\bul the degeneracy is never broken at first order, in any direction

\hspace{1cm}\bul the density $\ro_{\p} =: \ro$ is constant over $\p \in \cD\ex{k}(v)$

\hspace{1cm}\bul for any $\p,\Phi \in \cD\ex{k}(v)$, $N \int_{\R^{d(N-1)}} \p \Phi = \ps{\p,\Phi} \ro_{\p}$ 

\textup{($iii -$ The energy is not differentiable when $w=0$.)} Let $v \in \spf{k} \backslash \spn{k}$ be a degenerate potential, and $w=0$, let $\ell$ be the smallest $j \in \N$ such that $\exc{0}(v)\sle \exc{j}(v)$. If $k$ is such that $\exc{k}(v) \in \acs{\exc{0}(v),\exc{\ell}(v)}$, then $\exc{k}$ is not differentiable at $v$. If $N=1$, $\exc{k}$ is not differentiable at $v$ for any $k$.
\end{corollary}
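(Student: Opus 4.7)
The plan is to derive everything from the degenerate Hellmann--Feynman formulas of Theorem~\ref{helf}. Under any one of the three assumptions of the corollary, one of the two optimization ranges in \eqref{degdeg} reduces to a single unit vector of $\cD\ex{k}(v)$, so the Dini derivative collapses to ${}^+\delta_v\exc{k}(u) = \min_{\p\in\cD\ex{k}(v)}\int u\,\ro_\p$ at the bottom of the degenerate block and to the corresponding maximum at the top. Expanding $\p=\sum\lambda_i\p_i$ in a real orthonormal basis of $\Ker\bpa{\hn(v)-\exc{k}(v)}$ and using antisymmetry, this integral coincides with the Rayleigh quotient of the symmetric real matrix $M_u\df\bpa{N\int u(x_1)\p_i\p_j}_{i,j}$, which by degenerate Rayleigh--Schrödinger theory is the first-order perturbation matrix of $\hn(v)$ on the eigenspace.

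For (i), the linearity of $\lambda\mapsto{}^+\delta_v\exc{k}(\lambda u)$ is equivalent to ${}^+\delta_v\exc{k}(-u)=-{}^+\delta_v\exc{k}(u)$, hence to $\min=\max$ of $\p\mapsto\int u\,\ro_\p$ on $\cD\ex{k}(v)$, i.e., to the constancy of this functional, which via the Rayleigh quotient identification is the statement that $M_u$ is a multiple of the identity, hence that the $D=M_k-m_k+1$ perturbed eigenvalues share a single first-order slope (degeneracy not broken at first order). Polarizing the constant quadratic form yields the bilinear identity $N\int u(x_1)\p\Phi=\ps{\p,\Phi}\int u\,\ro_\p$. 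Part (ii) is (i) quantified over all directions: a Fréchet differential of $\exc{k}$ at $v$ has to be a bounded linear functional on $\cV$, so the Dini derivative must be linear in every direction, which forces (i) for every $u$; running $u$ over $\cC^{\ii}\ind{c}$ and using the duality against $L^1$ promotes the scalar identities to the pointwise equalities of densities and of reduced off-diagonal integrals.

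For (iii), when $w=0$ the eigenstates of $\hn(v)$ are Slater determinants of eigenfunctions of $h(v)\df-\Delta+v$ and the eigenvalues are sums $\sum_{j\in S}\lambda_j$ over $N$-subsets $S$ of the ordered one-body spectrum. A degenerate $\exc{k}(v)\in\acs{\exc{0}(v),\exc{\ell}(v)}$ necessarily arises either from a degenerate one-body eigenspace $E$ of $h(v)$ partially filled, or from an accidental equality of two distinct admissible sums at $\exc{\ell}(v)$; in both cases I would exhibit two explicit Slater determinants $\p,\p'\in\cD\ex{k}(v)$ whose density difference has the form $\sum_i|\chi_i|^2-\sum_i|\chi_i'|^2$ for orthonormal families in a common one-body eigenspace. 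By (ii), it then suffices to show this difference is not identically zero as an $L^1$ function. The case $N=1$ is immediate, since $\cD\ex{k}(v)$ is the unit sphere of a degenerate one-body eigenspace of $-\Delta+v$ and the density is just $|\p|^2$.

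The main obstacle is ruling out the accidental equality $|\chi|^2=|\chi'|^2$ a.e.\ for two orthonormal real eigenfunctions of $h(v)$ at the same energy. From it one obtains $(\chi-\chi')(\chi+\chi')=0$ a.e., so the two eigenfunctions $\chi\mp\chi'$, still at the same energy, have essentially disjoint supports, and at least one of them vanishes on a set of positive measure. The strong unique continuation property for Schrödinger operators at the exponent~\eqref{expot}, which is exactly why this $p$ was chosen, see~\cite{Garrigue19}, then forces it to be identically zero, contradicting the orthonormality of $\chi,\chi'$. Hence $\ro_\p\neq\ro_{\p'}$ in $L^1$, part~(ii) applies, and $\exc{k}$ is not differentiable at $v$.
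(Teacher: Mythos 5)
Your treatment of (i) and the forward implications of (ii) follows essentially the paper's route: under the standing assumption one side of \eqref{degdeg} collapses to ${^+}\delta_v \exc{k}(u)=\min_{\p\in\cD\ex{k}(v)}\int u\,\ro_{\p}$, linearity in $\lambda$ is equivalent to $\min=\max$, hence to constancy of $\p\mapsto\int u\,\ro_{\p}$, and polarization gives the bilinear identity; testing against $u\in\cC^{\ii}\ind{c}$ upgrades the scalar statements to pointwise ones. However, in (ii) you only prove the implications starting from differentiability (differentiable $\Rightarrow$ linear Dini derivatives $\Rightarrow$ constancy of $\ro_{\p}$ and the bilinear identity); you never close the cycle. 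The equivalence also requires the converse: if $\ro_{\p}=:\ro$ is constant on $\cD\ex{k}(v)$, then ${^+}\delta_v\exc{k}(u)=\int u\,\ro$ for every $u$, which is linear and bounded, whence $\exc{k}$ is differentiable at $v$. The paper supplies exactly this step; as written, your (ii) establishes only one half of the stated equivalence (this does not harm your (iii), which only uses the direction you proved).

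The more serious gap is in (iii). You correctly identify that the degenerate level may come either from a partially filled degenerate one-body level or from an ``accidental'' equality of two distinct configuration sums, but you then assert that in \emph{both} cases the density difference of two determinants in the eigenspace has the form $\sum_i|\chi_i|^2-\sum_i|\chi_i'|^2$ with all $\chi_i,\chi_i'$ in a \emph{common} one-body eigenspace. That is false in the accidental case: there the two configurations differ in orbitals belonging to different one-body eigenvalues, and your key step (ruling out $|\chi|^2=|\chi'|^2$ for two orthonormal real eigenfunctions at the \emph{same} energy via disjoint supports of $\chi\pm\chi'$ and unique continuation) no longer applies. What is actually needed — and what the paper asserts and uses — is the structural fact that for the levels $\exc{0}(v)$ and $\exc{\ell}(v)$ the $N$-body degeneracy forces the one-body Fermi level or the next level to be degenerate, so that one can pick two Slater determinants in $\cD\ex{k}(v)$ differing in exactly one orbital inside a single degenerate one-body eigenspace; only then does your argument (or the paper's variant, which applies the cross-term identity $N\int u(x_1)\p\Phi=\ps{\p,\Phi}\int u\,\ro_{\p}$ to get $\vp_{\ell}\vp_m=0$ and contradicts the zero-measure nodal sets from one-body unique continuation) go through. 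You neither prove this reduction nor explain how your method would handle the accidental scenario, so (iii) is incomplete as proposed. Once the single-orbital-swap determinants are in hand, your disjoint-support argument is a perfectly good alternative to the paper's cross-term computation, and your $N=1$ case is fine.
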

In the case $iii)$ we have $^+\delta_v \exc{k}(u) \sle \tx{ }^-\delta_v \exc{k}(u)$ for at least one direction $u \in \spp$. We conjecture that at those degenerate potentials, $\exc{k}$ is not differentiable in the interacting case either, that is, there is a direction in which the left and right derivatives are different. The constraints that have to be satisfied to not break degeneracy at first order are strong. We think that the ground degeneracies are generically broken at some order and even that $\spn{k}$ is dense in $\spf{k}$.

\section{Main results: consequences for the inverse problem}

From the weak-strong continuity of $\p\ex{0}$, we can deduce negative results about the inverse continuity. We define the potential-to-ground state density map
\begin{align}\label{defro}
\ro : \begin{array}{rcl}
\spn{0} & \lra & W^{1,1}\pa{\Omega,\R_+} \cap \acs{ \int \cdot = N} \\
	v & \longmapsto & \ro(v) \df \ro_{\p\ex{0}(v)}. \\
\end{array}
\end{align}
It can also be defined on $\spf{0}$ as a multivalued map. The space $W^{\ell,1}(\Omega) \cap \acs{ \int_{\Omega} \cdot = N}$ is a closed embedded submanifold of $W^{\ell,1}(\Omega)$, hence a smooth manifold. 
 The main property of $v \mapsto \ro(v)$, lying at the heart of DFT, is its injectivity (when $\spa = (\spp)/ \sim$ instead of $\spa = \spp$), this is the Hohenberg-Kohn theorem proved in \cite{HohKoh64,Lieb83b,Garrigue19}, when $p > \max(2d/3,2)$. 

In 1965, Kohn and Sham postulated the existence of effective one-body potentials, removing the electronic interaction while keeping the same ground state density \cite{KohSha65}, by adding a one-body potential. The resulting non-interacting problem $\sum_{i=1}^N -\Delta_i + v\ind{ks}(x_i)$ is then much easier to study than $\hn(v)$. We will also denote by $\ro$ the multivalued density map defined on $\spfn{0}{N}$, and by $\ro\iv$ its inverse, which exists by the Hohenberg-Kohn theorem. Let us denote by $\ro_{w=0}$ the map $\ro$ for which $w=0$, then $\ro_{w=0}(\cV^{(0)}_{\partial,N,w=0})$ is the set of non-interacting potential-representable densities. Considering elements
 \begin{align*}
v \in \ro\iv\pa{ \ro(\spf{0}) \cap \ro_{w=0}(\spfn{0}{N,w=0})} = \spf{0} \cap \ro\iv \rond \ro_{w=0} \pa{ \spfn{0}{N,w=0}},
 \end{align*}
which is possibly empty, the Kohn-Sham potential is defined as
 \begin{align*}
v\ind{ks}(v) \df \ro_{w=0}\iv \rond \ro(v).
 \end{align*}
As wanted, $\ro(v) = \ro_{w=0}(v\ind{ks}(v))$. Knowing $\ro(\spf{0}) \cap \ro_{w=0}(\spfn{0}{N,w=0})$, as raised by Lieb in \cite[Question 8]{Lieb83b}, is thus an important open problem. 
 In the case that the Fermi level of $v$ for $w=0$ is filled, the map $\ro\iv$ also enables to express the self-consistent field (SCF) equations, which are two equivalent fixed-point relations fulfilled by (resp.) potentials in $\cV^{(0)}_{N,w=0}$ and densities in $\ro_{w=0}\bpa{\cV^{(0)}_{N,w=0}}$. They are formally written
\begin{align*}
v = \ro_{w=0}\iv\pa{x \mapsto \indic_{-\Delta +v \le\ep\ind{F} }(x,x)}, \bhs \ro = x \mapsto \indic_{-\Delta + \ro_{w=0}\iv(\ro) \le \ep\ind{F}}(x,x),
\end{align*}
where the Fermi level ${\ep\ind{F} \in \R}$ is such that only the first $N$ orbitals are taken, and $\indic_{A}$ denotes the spectral projection of a self-adjoint operator $A$. 

  The direct problem $\ro$ is well-posed in the standard sense \cite{Hadamard32,CouHil08,Hadamard64} by injectivity and regularity, and the Kohn-Sham problem is its corresponding inverse problem. 

 The linearization of this inverse problem is ill-posed because $\ro$ has a compact differential by Theorem \ref{maa}, which indicates the problematic nature of the existence of Kohn-Sham potentials. In bounded domains, the Kohn-Sham problem is ill-posed in the sense of Hadamard \cite[Definition p8]{HasAleRom17}, because $\ro\iv$ is discontinuous.

\begin{corollary}[The set of $v$-representable densities is topologically small]\label{smallcor} Let $p$ be as in \eqref{expot}, with $p > d/2$ when $d \ge 3$, and consider $\ro$ as defined in \eqref{defro}. When the system lives in a bounded open connected set $\Omega \subset \R^d$, then $v \mapsto \ro(v)$ is compact, its inverse $\ro\iv$ is discontinuous, and $\ro(\spn{0})$ is a countable union of compact sets. In particular, $\ro(\spn{0})$ has empty interior in $W^{1,1} \cap \acs{\int \cdot = N}$.
\end{corollary}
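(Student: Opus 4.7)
The plan is to deduce all three assertions from Theorem~\ref{proposs}$(iv)$ by factoring $\ro = R \circ \p\ex{0}$, where the density map $R : H^1\ind{p} \ra W^{1,1}(\Omega) \cap \acs{\int \cdot = N}$ sends a wavefunction ray to its one-body density. The map $R$ is well-defined (it depends only on $\ab{\p}^2$) and locally Lipschitz: integrating $\ab{\p}^2 - \ab{\phi}^2 = (\p-\phi)\bar\p + \phi \overline{(\p-\phi)}$ and a similar formula for the gradient over $x_2,\dots,x_N$ and applying Cauchy--Schwarz yields an estimate of the form
\begin{align*}
\nor{\ro_{\p} - \ro_{\phi}}{W^{1,1}} \le c \, \bpa{\nor{\p}{H^1} + \nor{\phi}{H^1}} \nor{\p - \phi}{H^1}.
\end{align*}
By Theorem~\ref{proposs}$(iv)$, $\p\ex{0}$ is compact from $\spn{0}$ to $H^1\ind{p}$, hence $\ro$ is compact as a composition of a compact map with a continuous one.

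For the $K_{\sigma}$ structure of the image, I would write $\spn{0} = \bigcup_{n \in \N} B_n$ with $B_n \df \acs{v \in \spn{0} \st \nor{v}{\spa} \le n}$. Each $B_n$ is bounded in $\spa$, so by compactness of $\ro$ the closure $K_n \df \overline{\ro(B_n)}$ is compact in the target. Thus $\ro(\spn{0}) \subset \bigcup_n K_n$ is contained in a countable union of compact sets. The empty interior conclusion then follows from Baire category: the target $W^{1,1}(\Omega) \cap \acs{\int \cdot = N}$ is a closed affine hyperplane of the infinite-dimensional separable Banach space $W^{1,1}(\Omega)$, hence a complete infinite-dimensional metric space in which every compact subset is nowhere dense by Riesz's lemma. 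A countable union of closed nowhere-dense sets has empty interior in a Baire space, so $\bigcup_n K_n$, and a fortiori $\ro(\spn{0})$, has empty interior.

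For the discontinuity of $\ro\iv$, I invoke the last statement of Theorem~\ref{proposs}$(iv)$, which supplies $v \in \spn{0}$ and a sequence $v_n \in \spn{0}$ with $\p\ex{0}(v_n) \ra \p\ex{0}(v)$ in $H^1\ind{p}$ but $v_n \not\ra v$ in $(\spp)/\sim$. Applying the continuous $R$ turns the first convergence into $\ro(v_n) \ra \ro(v)$ in $W^{1,1}$, while the non-convergence of $v_n$ to $v$ persists. Since $\ro$ is injective on $(\spp)/\sim$ by Hohenberg--Kohn (recalled in Theorem~\ref{proposs}$(i)$), $\ro\iv$ is single-valued and therefore discontinuous at $\ro(v)$. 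The real work is already absorbed into Theorem~\ref{proposs}$(iv)$; what remains for the present corollary is essentially the Baire argument, whose only delicate point is to verify that the target is a genuine complete infinite-dimensional metric space, which is immediate from its description as a closed affine hyperplane of $W^{1,1}(\Omega)$.
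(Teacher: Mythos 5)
Your route genuinely differs from the paper's, which disposes of the corollary in two lines: for bounded $\Omega$ one has $\spn{0}=\spp=L^p$, the map $\ro$ is weak-strong continuous by Theorem \ref{proposs}$(iii)$ and Theorem \ref{maa}$(iii)$, and the abstract Lemma \ref{comps} then gives everything at once (weak-strong continuity from a dual space implies compactness; compactness implies an image contained in a countable union of compacts with empty interior, and a discontinuous inverse). You instead obtain compactness by factoring $\ro=\rop\rond\p\ex{0}$, combining the compactness of $\p\ex{0}$ from Theorem \ref{proposs}$(iv)$ with the Lipschitz-type bound for the wavefunction-to-density map, which is exactly estimate \eqref{estiro} of Lemma \ref{smoothrop}; and your Baire argument re-derives the content of Lemma \ref{comps}$(iii)$ by hand. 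These parts are correct (note that, exactly like the paper's lemma, what you actually prove is containment of $\ro(\spn{0})$ in a countable union of compact sets, which is all the empty-interior claim requires), and the factorization makes transparent where the compactness comes from.

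The weak point is the discontinuity of $\ro\iv$. You transfer the discontinuity of $(\p\ex{0})\iv$ through $\rop$ and then invoke Hohenberg--Kohn injectivity to make $\ro\iv$ single-valued; but in this paper injectivity of $\ro$ (and of $\p\ex{0}$) is only established for $p>\max(2d/3,2)$, whereas the corollary assumes only $p$ as in \eqref{expot} with $p>d/2$ when $d\ge3$: for instance $d=3$ with $p\in\,]3/2,2]$, or $d=1$ with $p=1$, are allowed here but lie outside the unique-continuation range needed for Hohenberg--Kohn. So your final step imports an ingredient not covered by the stated hypotheses. The repair is to argue as in Lemma \ref{comps}$(iv)$, directly from the compactness of $\ro$ you already have: if $\ro\iv$ were continuous, then for a ball $B$ of the infinite-dimensional space $L^p$ the set $\ro\iv\pa{\ro(B)}\supset B$ would be relatively compact, contradicting Riesz's theorem; no injectivity is needed, and indeed the statement about $(\p\ex{0})\iv$ in Theorem \ref{proposs}$(iv)$ must be read in this preimage sense for the same reason. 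With that substitution your proof is complete under the corollary's hypotheses.
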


By Corollary \ref{smallcor}, $\ro_{w=0}(\sps{0}{N,w=0}) \cap \ro(\spn{0})$ is included in a countable union of compact sets, hence it is meagre in the sense of Baire. The Kohn-Sham potential thus seems to be defined on a sparse set, possibly empty, under our conditions on $p$. The situation cannot be much better when $\Omega$ is unbounded. 


Despite the previous negative results, we can still prove a weak inverse continuity property. 

\begin{proposition}[Weak inverse continuity of $\p\ex{k}$]\label{propru}
	Let $p > \max(2d/3,2)$. Let $v_n \in \spfn{k}{N}$ be a sequence of potentials such that $v_n - \exc{k}(v_n)/N$ is bounded in $\spp$. Take normalized eigenstates $\psi\ex{k}(v_n) \in \Ker \bpa{\hn(v_n) - \exc{k}(v_n)}$ such that $\psi\ex{k}\pa{v_n} \ra \psi\ex{k}(v)$ strongly in $H^2(\R^{dN})$ for some $v \in \spn{k}$ and some normalized $\psi\ex{k}(v) \in \Ker \bpa{\hn(v) - \exc{k}(v)}$. Then we can deduce that 
	\begin{align}\label{reelm}
	 \int_{\Omega} \pa{v_n-v- \f{\exc{k}(v_n)-\exc{k}(v)}{N}  }^2 \ro_{\psi\ex{k}(v)} \ra 0
 \end{align}
and $v_n  \ra v$ a.e. in $\Omega$, up to a constant and a subsequence.
\end{proposition}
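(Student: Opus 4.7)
Set $\psi_n := \p\ex{k}(v_n)$, $\psi := \p\ex{k}(v)$, $E_n := \exc{k}(v_n)$, $E := \exc{k}(v)$, $u_n := v_n - v - (E_n - E)/N$ and $V_n(\mathbf x) := \sum_{i=1}^N u_n(x_i)$. Subtracting the two eigenvalue equations gives
\begin{align*}
\bpa{\hn(v) - E}(\psi_n - \psi) = -V_n\,\psi_n.
\end{align*}
Since $\psi_n \to \psi$ in $H^2$ and $\hn(v) - E : H^2 \to L^2$ is bounded, the left side tends to $0$ in $L^2(\Omega^N)$, hence $V_n\,\psi_n \to 0$. The uniform Kato-type multiplication bound $\|\sum_i u(x_i)\phi\|_{L^2} \le C\|u\|_{\spp}\|\phi\|_{H^1}$, applied to $u_n$ bounded in $\spp$ and $\phi = \psi_n - \psi \to 0$ in $H^1$, then transfers this to $V_n\,\psi \to 0$ strongly in $L^2(\Omega^N)$.

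\textbf{Weak limit via Hohenberg-Kohn.} The shifted sequence $\tilde v_n := v_n - E_n/N$ is bounded in $\spp$; extract a weak-$\ast$ subsequential limit $\tilde v_\ast$. Passing to the limit in $\hn(\tilde v_n)\,\psi_n = 0$, using the strong $H^2$-convergence of $\psi_n$ together with the Kato bound, yields $\hn(\tilde v_\ast)\,\psi = 0$. Combining with $\hn(v - E/N)\,\psi = 0$ and subtracting, the function $\sum_i\bpa{\tilde v_\ast - (v - E/N)}(x_i)$ vanishes on $\acs{\psi\ne 0}$, a set of full measure by the unique continuation result of \cite{Garrigue19} (requiring $p > \max(2d/3,2)$). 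Separation of variables forces $\tilde v_\ast - (v-E/N)$ to be a constant, which must vanish because $\hn(\tilde v_\ast)$ and $\hn(v-E/N)$ share the eigenvalue $0$ at $\psi$. Therefore $u_n \wra 0$ weakly in $\spp$.

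\textbf{Compactness closure.} Expanding
\begin{align*}
\|V_n\,\psi\|_{L^2}^2 = \int_\Omega u_n^2\,\ro_\psi + \int_{\Omega^2} u_n(x)\,u_n(y)\,\ro_\psi^{(2)}(x,y)\,\d x\,\d y,
\end{align*}
the kernel $\ro_\psi^{(2)}$ lies in $L^1\cap L^\infty(\Omega^2) \subset L^2(\Omega^2)$ by the decay of the bound state, so the integral operator $R : u \mapsto \int \ro_\psi^{(2)}(\cdot,y)\,u(y)\,\d y$ on $L^2(\Omega)$ is Hilbert-Schmidt, hence compact. Because $p > 2$, $u_n$ is bounded in $L^2_{\mathrm{loc}}(\Omega)$ with weak limit $0$, so the cross term $\ps{u_n, R u_n}$ vanishes in the limit; combined with $\|V_n\,\psi\|_{L^2}^2 \to 0$, this yields \eqref{reelm}. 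A further subsequence gives $u_n\sqrt{\ro_\psi}\to 0$ almost everywhere in $\Omega$, and unique continuation (applied again to $\psi$) ensures $\ro_\psi > 0$ almost everywhere, whence $u_n \to 0$ a.e., i.e. $v_n \to v$ a.e.\ up to the subtracted constant.

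\textbf{Main obstacle.} The critical step is controlling the pair-density cross term when $\Omega$ is unbounded, since $\spp$-weak convergence of $u_n$ does not translate to $L^2(\Omega)$-weak convergence globally. I expect the argument to close via a localization exploiting the spatial decay of $\ro_\psi^{(2)}$ at infinity: Hilbert-Schmidt compactness applied on compact subregions together with the smallness of the kernel outside should provide the needed convergence.
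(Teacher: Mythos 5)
Your first step reproduces the paper's: subtracting the two eigenvalue equations and letting the operator act on $\psi_n-\psi$, then using the relative boundedness of the potentials to convert strong $H^2$-convergence into $\nor{\sum_i u_n(x_i)\,\psi}{L^2}\ra 0$. (One slip there: for $\max(2d/3,2)\sle p\sle d$ the multiplication bound $\nor{\sum_i u(x_i)\phi}{L^2}\le C\nor{u}{\spp}\nor{\phi}{H^1}$ is false in general; you need the $H^2$ norm on the right, which is harmless since $H^2$-convergence is assumed.) Your middle step — extracting a weak-$*$ limit of $v_n-\exc{k}(v_n)/N$, passing to the limit in the eigenvalue equation and identifying the limit by unique continuation — is a legitimate variant of the paper's route; the paper instead uses that the nodal set of $\psi$ has zero measure to get $\sum_i u_n(x_i)\ra 0$ a.e. on $\Omega^N$ and then its Lemma \ref{lemale} (a Fubini argument) to descend to $u_n\ra 0$ a.e. on $\Omega$, which yields the a.e. conclusion of the statement directly rather than a posteriori from \eqref{reelm}, and then gets the weak convergence from a.e. convergence plus boundedness.

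The genuine gap is in your ``compactness closure''. First, the claim that $\rod_{\psi}\in L^1\cap L^{\ii}(\Omega^2)$, hence $L^2$, ``by the decay of the bound state'' is not justified: exponential decay controls the tails, not the local size, and $\psi\in H^2(\R^{dN})$ only guarantees $\rod_{\psi}\in L^1$ together with $L^q$ for exponents $q$ close to $1$ (via Sobolev/Hoffmann--Ostenhof-type bounds on $\sqrt{\rod_{\psi}}$), so the Hilbert--Schmidt property of the kernel operator $R$ is unproven. Second, even granting compactness of $R$ on $L^2(\Omega)$, you would need $u_n\wra 0$ weakly in $L^2(\Omega)$ globally, and weak-$*$ convergence in $(L^p+L^{\ii})(\Omega)$ only gives this on sets of finite measure; the proposition does not assume $\Omega$ bounded, and you acknowledge yourself that the localization needed to handle the unbounded case is not carried out. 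As written, \eqref{reelm} — and therefore also the a.e. convergence, which you derive from it — is not established. The paper avoids both issues: having $V_n\ra V$ a.e. with $V_n-V$ bounded in $\spp$, it deduces $V_n\wra V$ and kills the pair-density cross term by pairing this weak convergence directly against the fixed kernel $\rod_{\psi}$, with no Hilbert--Schmidt or $L^2$-kernel hypothesis.
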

The relation \eqref{reelm} implies that for any $c > 0$ and on level sets $X_c \df \{x \in \Omega \st \ro_{\p\ex{k}(v)}(x) \ge c \}$, we have $v_n \ra v + a$ strongly in $L^2(X_c)$ for some constant $a$. Since $|\{x \in \Omega \st \ro_{\psi\ex{k}(v)}(x)=0 \}|=0$ by unique continuation \cite{Garrigue19}, then $X_c$ will  \apo{approach} $\Omega$ as $c \ra 0$.

\section{Proof of Theorem \ref{bindos}}\label{secpot} 
We recall that the natural norm of $L^p+L^{\ii}$ is
\begin{align}\label{ddef}
\nor{v}{L^p+L^{\ii}} = \mymin{f \in L^p, g \in L^{\ii} \\ f + g = v} \bpa{ \nor{f}{L^p} + \nor{g}{L^{\ii}}}.
\end{align}
The set $\sppe$ is a closed subspace, it is the closure of $L^p$ in $\spp$.

\begin{remark}
	If in the definition of $\spf{0}$, we replace $\spp$ by $\sppe$, then $\cap_{n=1}^N \spfn{0}{n}$ is path-connected as well, as can be seen from our proof.
\end{remark}

\begin{remark}
As proved in \cite[Thm 3.11, Thm 3.12, Thm 3.13]{Lieb83b} the set of binding potentials $\spf{0} \cap L^p$ is dense in $L^p$. We can see it by approaching $v \in L^p$ with a sequence $v_n = v-\sum_{i=1}^ N L_n \indic_{B_{r_n}(y_i^n)}$ where $L_n$ and $r_n$ are chosen such that $v_n \in \spf{0}$, $0 \le L_n \ra 0$, $r_n \ra +\ii$, $y_i^n \in \R^d$, $\ab{y_i^n} \ra +\ii$, $|y_i^n - y^n_j|\ra +\ii$. This result also holds in $\sppe$ by density of $L^p$ in this space. 
\end{remark}

\begin{remark}
	Theorem \ref{bindos} raises the question of path-connectedness of $\spn{0}$. Adiabatic processes are deformations of the potential when the initial system is in its ground state, slowly enough so that the system remains in the ground state thanks to the adiabatic theorem \cite{Kato50,Teufel03}. The time scale of change in $v$ needs to be small with respect to the energy difference between the first two levels, hence a necessary and sufficient condition for this process to be possible is to remain in $\spn{0}$ during the deformation, without crossing the degenerate potentials $\spf{0} \backslash \spn{0}$, otherwise excited states can be populated. By analogy with other areas of quantum physics we say that two potentials $v,u \in \spn{0}$ are \textit{adiabatically equivalent} if they are path-connected in $\spn{0}$. This defines equivalence classes in $\spn{0}$ and it would be interesting to know whether there is only one class. We remark that in classical mechanics this is the case. Graphically, degenerate potentials $\spf{0} \backslash \spn{0}$ constitute a \apo{web} in the space of binding potentials. 
\end{remark}

\begin{remark}
	The proof of Theorem \ref{bindos} does not extend to the case of excited states because we use the HVZ theorem, which only involves ground energies.
\end{remark}

We now prepare for the proof of Theorem \ref{bindos}. The following lemma will allow us to modify potentials while remaining bound.

\begin{lemma}\label{addos}
If $v \in \spf{0}$, $0 \le u \in \spp$ and 
\begin{align}\label{infker}
\mymin{\p \in \Ker \pa{\hn(v) - \exc{0}(v)} \\ \int \ab{\p}^2 = 1} \int u \ro_{\p} \sle \sign(v) - \exc{0}(v),
\end{align}
	then $\exc{0}(v+u) \sle \sign(v+u)$.
\end{lemma}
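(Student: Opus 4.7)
The plan is to use a minimizer of the left-hand side of \eqref{infker} as a variational trial state for $\exc{0}(v+u)$, and then to exploit the positivity of $u$ to compare the bottoms of the essential spectra of $\hn(v+u)$ and $\hn(v)$. The whole argument is variational and rests on two separate inequalities that, combined, give the strict binding $\exc{0}(v+u) < \sign(v+u)$.

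First, since $v \in \spf{0}$, the ground energy $\exc{0}(v)$ lies strictly below $\sign(v)$, so $\Ker\bpa{\hn(v)-\exc{0}(v)}$ is finite dimensional. On its compact unit sphere, the functional $\p \mapsto \int u \ro_{\p}$ is continuous: indeed, $u \in \spp$ is relatively form-bounded with respect to $-\Delta$ in our range of $p$, and normalized elements of the kernel lie in $H^1$, so the integral is finite and depends continuously on $\p$. A normalized minimizer $\p_0$ therefore exists and saturates the bound in \eqref{infker}, that is $\int u \ro_{\p_0} < \sign(v) - \exc{0}(v)$.

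Next, I would use $\p_0$ as a trial state. Since $\hn(v) \p_0 = \exc{0}(v) \p_0$ and $\ps{\p_0, \sum_i u(x_i) \p_0} = \int u \ro_{\p_0}$ by definition of the one-body density, the Rayleigh--Ritz principle yields
\begin{align*}
\exc{0}(v+u) \le \ps{\p_0, \hn(v+u) \p_0} = \exc{0}(v) + \int u \ro_{\p_0} < \sign(v),
\end{align*}
the last strict inequality being the hypothesis.

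The only remaining step, and the one I expect to be the main (if mild) obstacle, is to check the monotonicity $\sign(v+u) \ge \sign(v)$ under the nonnegative form perturbation $u$. Because $u \ge 0$ and $u$ is relatively form bounded with infinitesimal bound, one has the quadratic form inequality $\hn(v+u) \ge \hn(v)$ on the common form domain $(H^1_0 \cap L^2\ind{a})(\Omega^N)$. The min-max characterization of $\inf \sigma\ind{ess}$ (as the limit, or supremum, of the min-max levels) is monotone in the form order, which gives $\sign(v+u) \ge \sign(v)$. Combining with the previous display yields $\exc{0}(v+u) < \sign(v) \le \sign(v+u)$, i.e.\ $v+u \in \spf{0}$, as claimed.
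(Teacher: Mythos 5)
Your proposal is correct and follows essentially the same route as the paper: a normalized minimizer of \eqref{infker} (which exists since the kernel is finite dimensional) is used as a trial state to get $\exc{0}(v+u) \le \exc{0}(v) + \int u \ro_{\p_0} \sle \sign(v)$, combined with the min-max monotonicity $\sign(v) \le \sign(v+u)$ coming from $u \ge 0$.
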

\begin{proof}
	By the min-max theorem, $\sign(v) \le \sign(v+u)$. Let $\p_v$ be one ground state minimizing the left-hand side of \eqref{infker}, which is a minimization problem in a compact set since $\dim \Ker \bpa{\hn(v)-\exc{0}(v)} \sle +\ii$. We compute
\begin{align*}
\exc{0}(v+u) \le \cE_{v+u}(\p_v) = \exc{0}(v) + \int u \ro_{\p_v} \sle \sign(v) \le \sign(v+u),
\end{align*}
and consequently $v+u \in \spf{0}$.
\end{proof}


\begin{proof}[Proof of Theorem \ref{bindos}] We split the proof into several steps. Consider a binding potential $v \in \cap_{n=1}^N \spfn{0}{n}$. We will deform it continuously into a hole of type $-c \indic_{B}$, such that it remains in $\cap_{n=1}^N \spfn{0}{n}$ during the deformation. \\

	\textit{Step 1: connection to a negative bounded potential with compact support.} We decompose $v = v_p + v_{\ii}$ where $v_p \in L^p$ and $v_{\ii} \in L^{\ii}$. We start by transforming $v_p$ to $v_p \indic_{\ab{v_p} \le M}$, and when $M$ is large enough, this operation changes infinitesimally $\exc{0}(v)$ and $\sign(v)$ by classical perturbation theorems \cite{Kato,ReeSim4}, so $\sign - \exc{0}$ remains strictly positive during the modification. More precisely, we can take
 \begin{align*}
 v(t) \df (1-t) v + t \pa{v_p \indic_{\ab{v_p} \le M} + v_{\ii}} = v_{\ii} + v_p\indic_{\ab{v_p} \le M} +(1-t) v_p \indic_{\ab{v_p} > M}  
 \end{align*}
which links $v$ to $v_1 \df v_p \indic_{\ab{v_p} \le M} + v_{\ii} \in L^{\ii}$ by a line on which $v(t) \in \cap_{n=1}^N \spfn{0}{n}$ for all $t \in \seg{0,1}$.

	Let us denote by $\p^n_{v_1}$ a ground state of $v_1$, for all $n \in \acs{1, \dots,N}$. We take some $L \ge \nor{v_1}{L^{\ii}}$, and consider the path of positive potentials $u(t) = t (L-v_1)\indic_{\R^{d} \backslash B_r} \ge 0$ for $t\in\seg{0,1}$. We know that $\p^n_{v_1}$ decays exponentially at infinity \cite{Agmon,Simon82}, so we can choose $r = r(L,v_1,w,N)$ large enough such that 
\begin{align*}
\mysup{t \in \seg{0,1}} \int u(t)\ro_{\p^n_{v_1}} \le \bpa{L+ \nor{v_1}{L^{\ii}}} \int_{\R^{d} \backslash B_{r}} \ro_{\p^n_{v_1}} \sle \sign(v_1) - \exc{0}(v_1),
\end{align*}
	for any $n \in \acs{1, \dots, N}$. Hence by Lemma \ref{addos}, $v_1 + u(t) \in \cap_{n=1}^N \spfn{0}{n}$ for any $t \in \seg{0,1}$, and we redefine 
 \begin{align*}
 v_2 \df v_1 +u(1) = L \indic_{\R^{d} \backslash B_r} + v_1 \indic_{B_r},
 \end{align*}
for the rest of the proof. We then use that $\sign(v_2) - \exc{0}(v_2)$ is invariant under the gauge transformation $v_2 \ra v_2 + c$ so we move the potential by adding $-tL$, $t \in \seg{0,1}$. We \apo{filled $v_2$ up to the roof} and obtain 
 \begin{align*}
v_3 \df v_2 - L = (v_1-L) \indic_{B_r} \le 0.
 \end{align*}
 We have thus linked our potential to a negative potential with compact support. \\

	\textit{Step 2: build the wall.} Next we raise some big wall again, further away. We want to apply Lemma \ref{addos} to $u(t) = t\ell \indic_{\R^d \backslash B_R}$. We choose $R \in \R$ with $R\ge \max(r, \diam \supp v_3)$ and $\ell \ge 1$ so that
\begin{align}\label{lk}
\ell \int_{\R^d \backslash B_R} \ro_{\p^n_{v_3}} \le \sign(v_3) - \exc{0}(v_3)
\end{align}
	where $\p^n_{v_3}$ is one of the ground states of $v_3$. We know that there exist $\alpha, \beta > 0$ such that $\int_{\R^d \backslash B_R} \ro_{\p^n_{v_3}} \le \alpha e^{-\beta R}$ \cite{Simon82,Agmon,Hislop00}, hence we link $\ell$ and $R$ by taking 
	\begin{align}\label{link}
R = c (1+\ln \ell),
 \end{align}
	with $c$ large enough so that \eqref{lk} holds. In all the following steps, if our properties hold for some pair $R$, $\ell$ large enough, such as described, then they hold for any pair $R'$, $\ell'$ where $\ell' \ge \ell$ and $R' \ge c (1+\ln \ell')$. By Lemma \ref{addos} we deduce that $t\ell \indic_{\R^d \backslash B_R} +v_3 \in \cap_{n=1}^N \spfn{0}{n}$ for any $t \in \seg{0,1}$. In particular, $\ell \indic_{\R^d \backslash B_R} +v_3 \in \cap_{n=1}^N \spfn{0}{n}$. Shifting this last potential by the constant $-t\ell$ for $t \in \seg{0,1}$, we also have 
\begin{align*}
\ell (1 - t) \indic_{\R^d \backslash B_R} + (v_3-t\ell) \indic_{B_R} \in \cap_{n=1}^N\spfn{0}{n}. 
\end{align*}
	In particular, $(v_3-\ell) \indic_{B_R}  \in \cap_{n=1}^N\spfn{0}{n}$. We can hence choose $\ell$ as large as we want, and $R(\ell)$ will also be large.\\

	\textit{Step 3: seal the hole.} Next we define $V_{\ell,R} \df \pa{v_3-\ell} \indic_{B_R}$, for any $\ell \ge 1$ and $R = R(\ell)$ linked by \eqref{link}. By the HVZ theorem \cite{Zhislin60b,VanWinter64,Hun66} used in the form of \cite[Theorem 3.1]{Lewin11}, we have $\Sigma_n(V_{\ell,R}) = \excn{0}{n-1}(V_{\ell,R})$ for any $n \ge 1$, $\ell \ge 0$ (with the convention $\excn{0}{0} \df 0$).   

	Take $a > 0$ fixed and let us denote by $(\vp_i)_{1\le i \le N}$, $\vp_i \in H^1(B_a,\C)$ an orthonormal familly of functions. For $R \ge a$ and for any $n \in \segint{1}{N}$ we have
 \begin{align*}
	 -\ell n \le \exc{0}(-\ell \indic_{B_R}) \le \cE_0\pa{\wedge_{i=1}^n \vp_i} - \ell n,
 \end{align*}
	and we deduce that $\exc{0}(-\ell \indic_{B_R}) = -\ell n + O(1)$ when $\ell \ra +\ii$. Since $\nor{v_3}{L^{\ii}} \le M$ by the first step, then
 \begin{align*}
	 & \exc{0}(V_{\ell,R}) - \signn{n}(V_{\ell,R}) = \exc{0}(V_{\ell,R}) - \excn{0}{n-1}(V_{\ell,R})\\
	 & \bhs \le \exc{0}\pa{- \ell \indic_{B_R}} - \excn{0}{n-1}\bpa{- (\ell + \nor{v_3}{L^{\ii}}) \indic_{B_R}} \\
	 & \bhs \le \cE_0\pa{\wedge_{i=1}^n \vp_i} - \ell n + \pa{\ell + \nor{v_3}{L^{\ii}}} (n-1) \\
	 & \bhs \le - \ell + c_{M,w,N},
 \end{align*}
where $c_{M,w,N}$ does not depend on $\ell$ or on $R$.
 We thus showed that 
\begin{align*}
\mymax{n \in \acs{1,\dots,N}} \bpa{\exc{0}(V_{\ell,R}) - \signn{n}(V_{\ell,R})} \ra - \ii
\end{align*}
when $\ell\ra +\ii$. We take $\ell$ large enough so that 
\begin{align}\label{equ}
  \int v_3 \ro_{\p^{n}(V_{\ell,R})} \le M N \sle \ell - c_{M,w,N} \sle \signn{n}(V_{\ell,R}) - \exc{0}(V_{\ell,R}),
\end{align}
for any $n \in \acs{1, \dots , N}$. 

Then again applying Lemma \ref{addos} to $u(t) = -tv_3 \ge 0$, and using \eqref{equ}, we have $V_{\ell,R} - tv_3 \in \cap_{n=1}^N \spfn{0}{n}$ for any $t \in \seg{0,1}$. In particular, $V_{\ell,R} -v_3 = -\ell\indic_{B_R}\in \cap_{n=1}^N \spfn{0}{n}$. \\

\textit{Step 4: connect any two potentials.} We showed how to connect an initial binding potential $s$ to one well $-\ell_s \indic_{B_{R_s}}$, where $R_s = c_s(1+\ln \ell_s)$, and we know that $s$ it is still connected to $-\ell \indic_{B_{R}}$ for any $\ell \ge \ell_s$ and any $R \ge c_s (1+\ln \ell)$. To connect $v$ to another binding potential $u$, we can connect $v$ to $-\ell_v \indic_{B_{R_v}}$ where $R_v = \max(c_u,c_v) (1+\ln \ell_v)$, $u$ to $-\ell_u \indic_{B_{R_u}}$, where $R_u = \max(c_v,c_u) (1 + \ln \ell_u)$. Then we know that $v$ and $u$ are connected to $-\ell \indic_{B_{R}}$ for any $\ell \ge \max(\ell_v,\ell_u)$ and any $R \ge \max(c_v,c_u) (1+ \ln \max(\ell_u,\ell_v))$ so we can connect $v$ and $u$ to common wells.
\end{proof}

 \begin{proof}[Proof of Corollary \ref{rkb}.] We show here how to find a continuous path of ground states which links any initial and final ground states corresponding to the path of potentials in the previous proof. We follow an argument used in the proof of \cite[Theorem 4]{Lewin04b}. We take a ground state $\p_0$ of the initial potential $v = v(0)$ and a ground state $\p_1$ of the final potential well $-c \indic_{B_R} = v(1)$. We consider the previous path of potentials $t \in [0,1] \mapsto v(t)$, which is piecewise linear. 

We now use a result due to Rellich \cite[Theorem 1.4.4, Corollary 1.4.5]{Simon15}, showing that for a Hamiltonian depending on one parameter, the singularities corresponding to eigenvalues crossings can be \apo{removed} in the sense that in the neighborhood of any crossing, there exist analytic branches representing those eigenvalues. So let us denote by $(\vp_i(t))_{i \in D(t)}$ an orthonormal basis of $\Ker \bpa{\hn(v)-\exc{0}(v)}$, which by Rellich's theorem can be chosen analytic on the left and on the right at each point, with possibly different limits. There is only a finite number of jumps. Indeed, the essential spectrum is strictly separated from $\exc{0}(v(t))$, uniformly in $t \in \seg{0,1}$, thus only a finite number of eigenfunctions are involved. So if the left and right limits are different an infinite number of times, this is because two eigenfunctions $\phi_j(t)$ and $\phi_{\ell}(t)$ cross an infinite number of times at the ground state energy level. Since they are analytic, their energies must be equal and hence their energies \textit{a posteriori} do not cross.

By the above argument, we have $k \in \N$ and $t_0, \dots, t_k \in \seg{0,1}$ such that $(\vp_i(t))_{i \in D(t)}$ is piecewise analytic on $\seg{t_i,t_{i+1}}$. On $]t_i,t_{i+1}[$, we choose a ground eigenfunction path $\p(t) \in \ran (\vp_i(t))_{i \in D(t)}$. When there is a crossing of eigenvalues at $t_i \in [0,1]$, by analyticity of the eigenfunctions, there are definite limits $(\vp_i(t^-_i))_{i \in D(t^-_i)}$ and $(\vp_i(t^+_i))_{i \in D(t^+_i)}$ on the left and on the right of $t_i$.  Let us denote by $\p(t_0^-)$ and $\p(t_0^+)$ the ground state limits on the left and on the right. At the interface, the ground eigenspace of $v(t_0)$ is $\ran(\vp_i(t^-_i))_{i \in D(t^-_i)}  + \ran (\vp_i(t^+_i))_{i \in D(t^+_i)}$. The set of normalized ground eigenstates is the unit sphere of this vector space, and we can add a path of ground eigenfunctions staying on this sphere to connect $\p(t_i^-)$ and $\p(t_i^+)$. 
\end{proof}

We conjecture that $\spfn{0}{N+1} \subset \spfn{0}{N}$. This is striking that such an intuitive fact is not direct to show. We also remark that the convexity of $N \mapsto \exc{0}(v)$ would imply it, this last statement being a conjecture as well \cite{PerParLev82,BacDel14}. A counterexample to the convexity of $N \mapsto \exc{0}(v)$ is given in a remark after \cite[Theorem 4.1]{Lieb83b} when the interaction $w$ is a soft core and for $d=3$. But in this case $\spfn{0}{N+1} \subset \spf{0}$ still holds, so we conjecture that $\spfn{0}{N+1}\subset \spfn{0}{N}$ holds for any interaction $w \ge 0$.

It also seems natural that if $w = \ab{\cdot}\iv$, $v,u \in \spa$, $v \le u$ and $u \in \spn{0}$, then $v \in \spn{0}$. We hence conjecture that $v \mapsto \excn{0}{N+1}(v) - \excn{0}{N}(v)$ is increasing on $\spfn{0}{N} \cap \spfn{0}{N+1}$ for the Coulomb interaction. This could be a special property of this interaction, because as we mentioned before, $N \mapsto \exc{0}(v)$ is not convex for soft core interactions for instance. 

\section{Proofs: the wavefunction-to-projector map}\label{wtp} 
In this section we present several basic facts about the space of orthogonal projectors $\acs{ \proj{\p} \st \p \in H^1(\Omega^N), \nord{\p}^2 =1}$ and the map $\p \mapsto \proj{\p}$.

\subsection{Main properties}

Quantum pure states are rays of projective Hilbert spaces \cite[Section 2.1]{Weinberg96}. By nature, the map $v \mapsto \proj{\p(v)}$ has no information on the phase of the ground states, so we will adapt the projective approach to regular pure states. We denote by 
 \begin{align*}
	  \Sbb \df \acs{ \p \in L\ind{a}^2(\R^{dN}) \st \nord{\p} = 1}, \bhs  H^k\ind{p} \df \frac{H^k \cap \Sbb}{ S^1},
 \end{align*}
respectively the unit sphere of the set of antisymmetric wavefunctions $L\ind{a}^2(\R^{dN})$, and the Sobolev spaces corresponding to physical wavefunction, where $S^1$ is the unit circle of dimension one representing the phase of a pure state. We denote by $[\cdot]$ the canonical projection of $H^k$ onto $H^k\ind{p}$. The indice \apo{p} can either stand for \apo{physical} or \apo{projective}. On this space, the natural metric is
 \begin{align*}
	 \D_k(\p,\Phi) & \df   \myinf{\psi, \phi \in H^k \cap \Sbb \\ \seg{\psi} = \p, \seg{\phi} = \Phi} \nor{\phi - \psi}{H^k} = \D\pa{ (-\Delta+1)^{\f{k}{2}} \p, (-\Delta+1)^{\f{k}{2}} \Phi},
 \end{align*}
 where
 \begin{align*}
	 \D(\p,\Phi)^2  \df \D_0(\p,\Phi)^2 = \nor{\p}{L^2}^2 + \nor{\vp}{L^2}^2  -2\ab{\ps{\vp,\psi}}.
 \end{align*}
In the case $k=0$, the main properties of these objects are well-known \cite{Pflaum19}, and we adapt them for $k\ge 1$. The next proposition shows that $H^k\ind{p}$ is a smooth manifold, on which one can use differential geometry.
\begin{proposition}\label{lka}
	The space $H^k\ind{p}$ is a completely metrizable (via $\D_k$) smooth manifold modelled on a Hilbert space isomorphic to each of the Hilbert spaces $\acs{\psi}^{\perp} \cap H^k$ where $\psi \in H^k$. Moreover, for any $\p \in H^k\ind{p}$, $\Td_{\p} H^k\ind{p} \simeq \acs{\psi}^{\perp} \cap H^k$ locally, where $[\psi] = \p$.
\end{proposition}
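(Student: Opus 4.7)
The strategy is to realize $H^k\cap\Sbb$ as a smooth codimension-one real Hilbert submanifold of $H^k$, then view $H^k\ind{p}$ as the orbit space of the smooth, free, proper circle action $(e^{i\theta},\psi)\mapsto e^{i\theta}\psi$; an atlas is produced by exhibiting an explicit slice through each orbit. First I would apply the implicit function theorem to $F\colon H^k\to\R$, $F(\psi)\df\nor{\psi}{L^2}^2$, which is smooth because $H^k\hookrightarrow L^2$, with derivative $dF_\psi=2\,\Re\ps{\psi}{\cdot}_{L^2}$ surjective for $\psi\neq 0$; thus $H^k\cap\Sbb=F\iv(1)$ is a smooth Hilbert submanifold with tangent space $\acs{\phi\in H^k:\Re\ps{\psi}{\phi}_{L^2}=0}$. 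Compactness of $S^1$ yields properness of the action, freeness is immediate from $\nor{\psi}{L^2}=1$, and the quotient is Hausdorff.

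For the charts, given $\psi_0\in H^k\cap\Sbb$ I set $U_{\psi_0}\df\acs{[\phi]\in H^k\ind{p}:\ps{\psi_0}{\phi}_{L^2}\neq 0}$, which is open because its preimage in $H^k\cap\Sbb$ is open and $S^1$-saturated. Each class in $U_{\psi_0}$ has a unique representative $\hat\phi$ with $\ps{\psi_0}{\hat\phi}_{L^2}\in(0,1]$, obtained from any lift $\phi$ upon multiplication by $\overline{\ps{\psi_0}{\phi}_{L^2}}/\ab{\ps{\psi_0}{\phi}_{L^2}}$. I then define
\begin{align*}
\kappa_{\psi_0}([\phi])\df\hat\phi-\ps{\psi_0}{\hat\phi}_{L^2}\,\psi_0\in B\df\acs{f\in\acs{\psi_0}^\perp\cap H^k:\nor{f}{L^2}\sle 1},
\end{align*}
with explicit inverse $f\mapsto[\sqrt{1-\nor{f}{L^2}^2}\,\psi_0+f]$. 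The transition $\kappa_{\psi_1}\circ\kappa_{\psi_0}\iv$ is the composition of this inverse, of phase-normalization with respect to $\psi_1$, and of orthogonal projection onto $\acs{\psi_1}^\perp\cap H^k$; each is smooth on its domain because the phase factor $\xi\mapsto\overline{\ps{\psi_1}{\xi}_{L^2}}/\ab{\ps{\psi_1}{\xi}_{L^2}}$ is smooth on the open set where $\ps{\psi_1}{\xi}_{L^2}\neq 0$. Differentiating $\kappa_{\psi_0}\iv$ at $f=0$ yields the identity on $\acs{\psi_0}^\perp\cap H^k$, so $\Td_{[\psi_0]}H^k\ind{p}\simeq\acs{\psi_0}^\perp\cap H^k$; all such closed hyperplanes of $H^k$ are pairwise isomorphic as Hilbert spaces, which fixes the model Hilbert space.

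For $\D_k$, symmetry and the triangle inequality follow from the $S^1$-invariance of $\nor{\cdot}{H^k}$ together with the usual triangle inequality applied after a rotation; positivity uses that $\theta\mapsto\nor{e^{i\theta}\psi-\phi}{H^k}$ is continuous on the compact circle, so its infimum is attained and $\D_k([\psi],[\phi])=0$ forces $\phi=e^{i\theta}\psi$. Agreement of the metric and chart topologies at $[\psi_0]$ follows from two direct implications: $\kappa_{\psi_0}([\phi_n])=f_n\ra 0$ in $H^k$ gives $\sqrt{1-\nor{f_n}{L^2}^2}\psi_0+f_n\ra\psi_0$ in $H^k$ and hence $\D_k([\phi_n],[\psi_0])\ra 0$, while $\D_k([\phi_n],[\psi_0])\ra 0$ produces representatives $\phi_n'\ra\psi_0$ in $H^k$, for which the phase-rotated lift converges to $\psi_0$ and $\kappa_{\psi_0}([\phi_n])\ra 0$. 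For completeness I would extract from a $\D_k$-Cauchy sequence $[\phi_n]$ a subsequence with $\D_k([\phi_{n_k}],[\phi_{n_{k+1}}])\sle 2^{-k}$, then inductively rotate lifts $\tilde\phi_{n_k}\in H^k\cap\Sbb$ so that $\nor{\tilde\phi_{n_k}-\tilde\phi_{n_{k+1}}}{H^k}\sle 2^{-k+1}$; the resulting sequence is Cauchy in $H^k$ with limit $\phi$ satisfying $\nor{\phi}{L^2}=1$ (since $H^k\hookrightarrow L^2$), and $[\tilde\phi_{n_k}]\ra[\phi]$ in $\D_k$, which forces the original sequence to converge.

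The main difficulty is the rigorous smoothness verification for the transition maps — specifically the smoothness of the phase-fixing factor in this infinite-dimensional Hilbert setting — and the equivalence of the metric $\D_k$ with the chart topology; once these are in hand, the principal $S^1$-bundle structure and completeness of $(H^k\ind{p},\D_k)$ reduce to standard quotient-manifold arguments.
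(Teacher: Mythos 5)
Your argument is correct, and it reaches the statement by a genuinely different (more self-contained) route than the paper. The paper works directly on the quotient with the projective-type charts $h_{\vp}(\pi(\psi)) = (1-P_{\vp})\psi/\ps{\vp,\psi}$, whose image is the whole model hyperplane $\acs{\vp}^{\perp}\cap H^k$ and whose transition maps have a closed fractional-linear form; the finer analytic points (agreement of the $\D_k$-topology with the chart topology, completeness) are delegated to the reference \cite{Pflaum19} for $H^0\ind{p}$, and the identification $\Td_{\p} H^k\ind{p}\simeq\acs{\psi}^{\perp}\cap H^k$ is obtained there as the quotient $\Td_{\psi}(H^k\cap\Sbb)/\Td_{\psi}(S^1\cdot\psi)$, a computation the paper actually records in the proof of Proposition \ref{embedding}. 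You instead first realize $H^k\cap\Sbb$ as a submanifold via the implicit function theorem, view $H^k\ind{p}$ as the orbit space of the free proper $S^1$-action, and build phase-fixed ``hemisphere'' charts $[\phi]\mapsto\hat\phi-\ps{\psi_0,\hat\phi}\,\psi_0$ onto the open $L^2$-ball of $\acs{\psi_0}^{\perp}\cap H^k$, reading the tangent space off the chart. Your charts cost a bit of extra bookkeeping (the phase-normalization factor in the transition maps, a ball rather than the full hyperplane as chart image), whereas the paper's charts are cleaner for the later computations with $\pt$; in exchange, your write-up supplies exactly the details the paper inherits from the literature, namely the equivalence of the metric and manifold topologies and the completeness of $(H^k\ind{p},\D_k)$ via rotated lifts made Cauchy in $H^k$. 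The metric axioms themselves (attainment of the infimum over the compact circle for separation, triangle inequality after a rotation) are handled the same way in both proofs, so on that portion the two arguments coincide.
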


 We will denote by $\norop{\cdot}$ the operator norm. We define the $k^{\tx{th}}$ Sobolev space of operators $\chh$ as being the linear space of bounded self-adjoint operators $\cB\pa{L^2(\R^{dN})}$ which norm
 \begin{align*}
\nor{A}{\chh} \df \norop{ (-\Delta+1)^{\f{k}{2}} A (-\Delta+1)^{\f{k}{2}}}
 \end{align*}
 is finite, it is a Banach space. Similarly, we define the Sobolev-Schatten spaces $\sch_{k,p}$ on self-adjoint operators via their norms
 \begin{align*}
	 \nor{A}{\sch_{k,p}}^p \df \tr \ab{ (-\Delta+1)^{\f{k}{2}} A (-\Delta+1)^{\f{k}{2}}}^p.
 \end{align*}
 We then define the state-to-projector map
\begin{align*}
\pt :
\begin{array}{ccl}
H^k\ind{p} & \lra & \chh \cap \acs{\tr \cdot = 1} \cap \acs{\norop{\cdot}=1} \\ 
\p & \longmapsto & \proj{\p},
\end{array}
\end{align*}
and can show that it is very regular.
\begin{proposition}[$\pt$ is an embedding]\label{embedding}
$\pt$ is a smooth embedding, $\pt\iv$ is globally H\"older and $\cC^{\ii}$.
\end{proposition}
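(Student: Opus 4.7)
The plan is to verify the four ingredients of the statement (smoothness of $\pt$, injectivity and immersion, global H\"older continuity of $\pt\iv$, smoothness of $\pt\iv$) via explicit local or global formulas, exploiting the chart structure on $H^k\ind{p}$ provided by Proposition \ref{lka}.

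First I handle smoothness and the immersion property in one stroke. Using the chart at $[\psi_0]$ given by $\phi \in \{\psi_0\}^{\perp} \cap H^k \mapsto [(\psi_0 + \phi)/\sqrt{1 + \|\phi\|^2}]$, the map $\pt$ reads
\[
\phi \longmapsto \frac{|\psi_0 + \phi\rangle\langle\psi_0 + \phi|}{1 + \|\phi\|_{L^2}^2},
\]
a rational expression with polynomial numerator of degree two and smooth nonvanishing denominator, hence $\cC^{\ii}$ into $\chh$. Its differential at $\phi=0$ is $\phi \mapsto |\psi_0\rangle\langle\phi| + |\phi\rangle\langle\psi_0|$, which applied to $\psi_0$ yields $\phi + \langle\psi_0,\phi\rangle\psi_0 = \phi$ by orthogonality, so the differential is injective. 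Global injectivity of $\pt$ is the standard fact that a rank-one projector is determined by its range, so $P_\psi = P_\phi$ forces $\psi = e^{i\theta}\phi$ and hence $[\psi] = [\phi]$.

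Next I turn to the H\"older estimate on $\pt\iv$, choosing representatives so that $\langle\psi,\phi\rangle \ge 0$. When $k=0$, a two-line computation in $\mathrm{span}(\psi,\phi)$ gives $\|P_\psi - P_\phi\|_{\mathrm{op}}^2 = 1 - \langle\psi,\phi\rangle^2$ and $\D_0([\psi],[\phi])^2 = 2(1 - \langle\psi,\phi\rangle)$, whence $\D_0([\psi],[\phi]) \le \sqrt{2}\,\|P_\psi - P_\phi\|_{\mathrm{op}}^{1/2}$ globally. For $k \ge 1$ I conjugate by $\alpha := (-\Delta + 1)^{k/2}$ and diagonalize the rank-two self-adjoint operator $\alpha(P_\psi - P_\phi)\alpha = |\alpha\psi\rangle\langle\alpha\psi| - |\alpha\phi\rangle\langle\alpha\phi|$ on $\mathrm{span}(\alpha\psi,\alpha\phi)$. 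Setting $s := \|\psi\|_{H^k}^2 - \|\phi\|_{H^k}^2$ and $b := \|\alpha\phi - (\langle\alpha\psi,\alpha\phi\rangle/\|\alpha\psi\|_{L^2}^2)\alpha\psi\|_{L^2}$, the eigenvalues read $(s \pm \sqrt{s^2 + 4\|\psi\|_{H^k}^2 b^2})/2$, so $|s|$ and $\|\psi\|_{H^k}\,b$ are both controlled by $\|P_\psi - P_\phi\|_{\chh}$. Inserting these bounds into the identity $\D_k([\psi],[\phi])^2 = (\|\psi\|_{H^k} - a)^2 + b^2$ with $a := \langle\alpha\psi,\alpha\phi\rangle/\|\alpha\psi\|_{L^2}$, and using $\|\psi\|_{H^k},\|\phi\|_{H^k} \ge 1$, yields the claimed H\"older-type inequality.

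Finally, $\cC^{\ii}$ regularity of $\pt\iv$ follows from an explicit local formula. Near $P_0 = \pt([\psi_0])$ I set $\psi(P) := P\psi_0/\|P\psi_0\|_{L^2}$ and $\pt\iv(P) := [\psi(P)]$. The key bound is that $P \mapsto P\psi_0$ is bounded linear from $\chh$ into $H^k$: for any $f \in L^2$,
\[
\|Pf\|_{H^k} = \|\alpha P \alpha \cdot \alpha\iv f\|_{L^2} \le \|P\|_{\chh}\|\alpha\iv f\|_{L^2} \le \|P\|_{\chh}\|f\|_{L^2},
\]
using $\alpha \ge 1$. Since $\|P_0\psi_0\|_{L^2} = 1$, the denominator stays bounded below on a neighborhood of $P_0$, so $P \mapsto \psi(P)$ is $\cC^{\ii}$ into $H^k$; composing with the quotient map provides a $\cC^{\ii}$ local inverse into $H^k\ind{p}$, and $[\psi(P)] = \pt\iv(P)$ on the image is verified from $P\psi_0 = \langle\psi,\psi_0\rangle\psi$ for $P = |\psi\rangle\langle\psi|$. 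The main obstacle in this scheme is the sharp form of the H\"older estimate at $k \ge 1$: the $L^2$-unit normalization that trivialized the $k=0$ computation is lost after conjugation by $\alpha$, so the $2\times 2$ Gram-matrix analysis must be done carefully and the resulting bound stays uniform only modulo control on $\|\psi\|_{H^k},\|\phi\|_{H^k}$, which is the natural strength of the statement.
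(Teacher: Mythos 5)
Your smoothness computation for $\pt$ in charts, the injectivity of $\pt$ and of its differential, and your explicit local inverse $g(P) := \bcro{P\psi_0/\nord{P\psi_0}}$ (a genuinely different route from the paper, which instead invokes the inverse function theorem) are fine as far as they go. The genuine gap is in the embedding claim itself: in this Banach-manifold setting, ``$\pt$ is an embedding'' requires more than an injective differential — one must show that $\im \d_{[\psi_0]}\pt$ is a \emph{closed} subspace of $\chh$ which \emph{splits} it, i.e.\ is the range of a bounded projector; an injective bounded operator between Banach spaces need not have closed, complemented range. This is exactly where the paper's proof spends most of its effort (the map $\gamma G = P_{\acs{\psi}^{\perp}} G P_{\psi} + P_{\psi} G P_{\acs{\psi}^{\perp}}$, closedness of $\im\gamma$ and $\im(1-\gamma)$, and the closed graph theorem), and your proposal never addresses it. It is recoverable from what you already built, but only if you draw the conclusion explicitly: since $g$ is $\cC^{\ii}$ on an open $\chh$-neighborhood of $P_0$ (not merely on $\im\pt$) and $g \rond \pt = \tx{id}$ near $[\psi_0]$, the chain rule gives a bounded left inverse $L := \d_{P_0}g$ of $\d_{[\psi_0]}\pt$, and then $Q := \d_{[\psi_0]}\pt \rond L$ is a bounded projector of $\chh$ with range $\im \d_{[\psi_0]}\pt$, which is precisely the splitting. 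Without this step (or the paper's $\gamma$), the immersion/embedding part is unproved.

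The second shortfall is the one you flag yourself: global H\"older continuity of $\pt\iv$. Keeping only the bounds $\ab{\nor{\psi}{H^k}^2-\nor{\phi}{H^k}^2} \le \nor{P_\psi-P_\phi}{\chh}$ and $\nor{\psi}{H^k}\, b \le \nor{P_\psi-P_\phi}{\chh}$ and conceding that the estimate is ``uniform only modulo control on $\nor{\psi}{H^k},\nor{\phi}{H^k}$'' does not deliver the stated \emph{global} bound. The same rank-two diagonalization gives it cleanly, with no restriction on the $H^k$ norms: writing $\tilde\psi = \delk\psi$, $\tilde\phi = \delk\phi$,
\begin{align*}
\D_k\bpa{[\psi],[\phi]}^2 & = \nord{\tilde\psi}^2+\nord{\tilde\phi}^2-2\ab{\ps{\tilde\psi,\tilde\phi}} \le \sqrt{\bpa{\nord{\tilde\psi}^2+\nord{\tilde\phi}^2}^2-4\ab{\ps{\tilde\psi,\tilde\phi}}^2} \\
& = \tr\ab{P_{\tilde\psi}-P_{\tilde\phi}} \le 2\norop{P_{\tilde\psi}-P_{\tilde\phi}} = 2\nor{P_{\psi}-P_{\phi}}{\chh},
\end{align*}
which is the paper's \eqref{mmm} obtained from Lemma \ref{eqqq} and \eqref{tops}; this yields $\D_k \le \sqrt{2}\,\nor{P_\psi-P_\phi}{\chh}^{1/2}$ everywhere. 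A minor related point: your phase convention $\ps{\psi,\phi}\ge 0$ is in $L^2$ and does not make $\ps{\tilde\psi,\tilde\phi}$ nonnegative, so your $a$ must be taken as $\ab{\ps{\tilde\psi,\tilde\phi}}/\nord{\tilde\psi}$. The remaining ingredients — the $k=0$ computation, the bound $\nor{Pf}{H^k}\le \nor{P}{\chh}\nord{f}$, and the smooth local inverse formula — are correct.
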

For the definition of an embedding, see \cite[p559]{Zeidler4}. As a corollary of the previous results, the space $\im \pt$ is smooth.
 \begin{corollary}\label{cc}
	 $\im \pt$ is a submanifold of $\chh \cap \acs{\tr \cdot = 1} \cap \acs{\norop{\cdot}=1}$, $\tx{T}_{\pt(\p)} \im \pt = \im \d_{\p} \pt$, and all the topologies $(\sch_{p,k})_{p \in [1,+\ii]}$ on $\chh$ are equivalent.
\end{corollary}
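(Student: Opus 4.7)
The plan is to derive this corollary essentially as a direct consequence of Proposition \ref{embedding}, together with an elementary finite-rank argument for the topology equivalence.

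First, I would handle the submanifold structure and tangent space identification. Since Proposition \ref{embedding} tells us that $\pt$ is a smooth embedding in the Banach-manifold sense (a smooth injective immersion with smooth inverse on its image), I would invoke the standard fact that the image of a smooth embedding between Banach manifolds is automatically a smooth embedded submanifold of the codomain, modelled locally on the image of the differential. This gives that $\im \pt \subset \chh \cap \{\tr \cdot = 1\} \cap \{\norop{\cdot}=1\}$ inherits a smooth manifold structure compatible with the subspace topology. The tangent space identity $\tx{T}_{\pt(\p)} \im \pt = \im \d_{\p}\pt$ then follows from the chain rule applied to the local diffeomorphism $\pt : H^k\ind{p} \to \im \pt$ provided by the embedding, together with the fact that $\d_\p \pt$ is an injective bounded linear operator with closed range (by the open mapping theorem applied to the inverse).

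Next, for the equivalence of Schatten-Sobolev topologies on $\im \pt$, the key observation is that elements of $\im\pt$ are rank-one projectors. For any $\p, \Phi \in H^k \cap \Sbb$, the difference $\proj{\p}-\proj{\Phi}$ is a self-adjoint operator of rank at most $2$. Conjugating by $(-\Delta+1)^{k/2}$ preserves this rank bound, since
\begin{align*}
(-\Delta+1)^{\f{k}{2}}\bpa{\proj{\p}-\proj{\Phi}}(-\Delta+1)^{\f{k}{2}} = \proj{(-\Delta+1)^{\f{k}{2}}\p}-\proj{(-\Delta+1)^{\f{k}{2}}\Phi},
\end{align*}
which is still a rank-$\le 2$ self-adjoint operator. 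On the set of self-adjoint operators of rank at most $2$, all Schatten $p$-norms are equivalent with universal constants depending only on the rank bound, because the only nonzero spectrum consists of at most two real eigenvalues and $\sum_{i=1}^2 |\lambda_i|^p$ defines equivalent norms on $\R^2$ as $p$ varies in $[1,+\ii]$. Hence $\sch_{k,p}$ and $\sch_{k,q}$ induce the same topology on $\im\pt$ for all $p,q\in[1,+\ii]$.

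The main technical point to be careful about is ensuring that the Banach-manifold embedding theorem can be applied cleanly: one needs $\im \d_\p \pt$ to split topologically inside $\chh$, which is automatic once we know $\d_\p\pt$ realizes an isomorphism onto its closed image. No essentially new computation is required beyond what Proposition \ref{embedding} already furnishes, and the topology-equivalence statement reduces to a two-dimensional linear algebra remark, so I expect no serious obstacle.
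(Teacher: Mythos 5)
Your proposal is correct and follows essentially the same route as the paper: the submanifold and tangent-space statements are exactly what the paper obtains from the embedding property of $\pt$ (via \cite[Theorem 73.E]{Zeidler4} and the chain rule $\d_{\pt(\p)}\pt\iv = (\d_\p \pt)\iv$), and your rank-$\le 2$ observation after conjugating by $(-\Delta+1)^{k/2}$ is the same mechanism the paper uses through Lemma \ref{eqqq} and \eqref{tops} to get equivalence of all $\sch_{k,p}$ norms on $\im\pt$ (which is indeed the intended reading of the statement). No gap.
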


\subsection{Proofs of Propositions \ref{lka} and \ref{embedding}}

In the literature, the case $H^0\ind{p}$ is studied in \cite{Pflaum19}. Its natural inner product is the projective inner product 
 \begin{align*}
	 \pa{[\psi],[\vp]} \df \f{\ab{\ps{\psi,\vp}}}{\nord{\psi}\nord{\vp}}.
 \end{align*}
The space $H^k\ind{p}$ is in bijection with $\PP H^k$, but we will not directly endow it with the same structure as in general projective Hilbert spaces theory. Indeed, in this case the metric would be 
 \begin{align*}
	 \pa{\p,\Phi} \mapsto \myinf{ [\psi] = \p, [\phi] = \Phi \\ \nor{\psi}{H^k} = \nor{\phi}{H^k} = 1} \nor{\psi - \phi}{H^k}
 \end{align*}
 but it is not the one we want to work with. The relevant one is $\D_k$. On $H^{\ell} \times H^{\ell}$, we have $\D_k \le \D_{\ell}$ for $k \le \ell$. A property of $\D$ is that 
 \begin{align}\label{ici2}
	 \ab{ \nord{\p}^2 - \nord{\Phi}^2} \le \D(\p,\Phi)^2.
 \end{align}
Moreover, 
 \begin{align*}
	 \D_k\pa{[\psi],[\vp]} & = \myinf{\theta \in \segod{0,2\pi} } \nor{(-\Delta+1)^{\f{k}{2}} \pa{\vp - e^{i\theta} \psi}}{L^2} \\
	 & = \sqrt{\nor{\psi}{H^k}^2 + \nor{\vp}{H^k}^2  -2\ab{\ps{(-\Delta+1)^{\f{k}{2}} \psi,(-\Delta+1)^{\f{k}{2}} \vp}}} \\
	 & = \D\pa{\seg{(-\Delta+1)^{\f{k}{2}} \psi}, \seg{(-\Delta+1)^{\f{k}{2}} \vp}}.
 \end{align*}
 First we prove Proposition \ref{lka}.
\begin{proof}[Proof of Proposition \ref{lka}]\tx{ }

	\bul Let us denote by $\pi$ the canonical projection from $H^k \cap \Sbb$ onto $H^k\ind{p}$. For each unit vector $\vp \in H^k \cap \Sbb$ we define the open sets $U_{\vp} \df \pi\bpa{H^k \backslash \acs{\vp}^{\perp}} \subset H^k\ind{p}$. The charts $h_{\vp} : U_{\vp} \ra \acs{\vp}^{\perp}$ are defined by 
 \begin{align*}
	 h_{\vp}( \pi(\psi) ) \df \f{(1-P_{\vp})\psi}{\ps{\vp,\psi}} = \f{\psi}{\ps{\vp,\psi}} - \vp
 \end{align*}
	 for any $\psi \in H^k \backslash \acs{\vp}^{\perp}$, where $(U_{\vp})_{\vp \in H^k \cap \Sbb}$ covers $H^k\ind{p}$. Those charts are $\cC^{\ii}$, we can verify that they are also injective and that their inverses are the maps $\acs{\vp}^{\perp} \ra U_{\vp}, \psi \mapsto \pi(\psi + \vp)$, which are also $\cC^{\ii}$, hence $h_{\vp}$ are smooth diffeomorphisms. For $\vp, \psi \in H^k \cap \Sbb$, the transition maps
\begin{align*}
	 h_{\vp} \rond h_{\psi}\iv : 
\begin{array}{ccc}
	h_{\psi}\pa{ U_{\vp} \cap U_{\psi}} & \lra & h_{\vp}\pa{ U_{\vp} \cap U_{\psi}} \\
	\phi & \longmapsto & \f{(1-P_{\phi})(\vp+\psi)}{\ps{\vp+\psi,\phi}}
\end{array}
 \end{align*}
	 are $\cC^{\ii}$ by composition. More precisely, the proofs follow from \cite{Pflaum19}.

	 \bul $\D_k$ is positive and symmetric. Assume that for $\p, \Phi \in H^k\ind{p}$, $\D_k(\p,\Phi) = 0$. Then for given $\psi,\phi \in H^k \cap \Sbb$ such that $\seg{\psi} = \p$ and $\seg{\phi} = \Phi$, there exists a sequence $\theta_n \in \segod{0,2\pi}$ such that
 \begin{align*}
\nor{\psi-e^{i\theta_n} \phi}{H^k} \ltend{n \ra +\ii} 0.
 \end{align*}
	 Up to a subsequence, $\theta_n \ltend{n \ra +\ii} \theta \in \segod{0,2\pi}$, then $\nor{\psi-e^{i\theta} \phi}{H^k}$ and $\p = \Phi$. Now for $\p,\Phi,\Xi$ and $\xi \in H^k\ind{p}$ such that $\seg{\xi} = \Xi$, we have
 \begin{align*}
	 & \D_k\pa{ \Phi, \p } = \myinf{\psi, \phi \in H^k \cap \Sbb \\ \seg{\psi} = \p,\seg{\phi} = \Phi} \nor{\phi - \psi}{H^k} \le \myinf{\psi, \phi \in H^k \cap \Sbb \\ \seg{\psi} = \p,\seg{\phi} = \Phi} \bpa{ \nor{\phi - \xi}{H^k} + \nor{\xi - \psi}{H^k} } \\
	 & \bhs = \myinf{\phi \in H^k \cap \Sbb \\\seg{\phi} = \Phi}  \nor{\phi - \xi}{H^k} + \myinf{\psi\in H^k \cap \Sbb \\\seg{\psi} = \p}  \nor{\xi - \psi}{H^k} = \D_k\pa{ \Phi,\Xi} + \D_k\pa{\Xi,\p},
 \end{align*}
	 and we can conclude that $\D_k$ is a metric.
 \end{proof}
Next, our goal is to relate vectors in $H^k\ind{p}$ with their corresponding rank-one projectors. For $k=0$ \cite{Pflaum19}, $\pt$ is bi-Lipschitz, with constants
 \begin{align}\label{bili}
	 2^{-\ud} \D(\p,\Phi)  \le  \norop{ P_{\p} - P_{\Phi}} \le \D(\p,\Phi).
 \end{align}
For our application we will need to work at $k=1$. We first make some short preliminary computations.
 
 \begin{lemma}\tx{ }

	 $i)$ For any $\chi, \phi \in L^2$, $\norop{ \ketbra{\phi}{\chi} } = \nord{\phi} \nord{\chi}$.

	 $ii)$ If moreover $\chi \perp \phi$, then $\norop{ \ketbra{\chi}{\phi} + \ketbra{\phi}{\chi}} = \nord{\chi} \nord{\phi}$.
 \end{lemma}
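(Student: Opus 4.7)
Both statements are standard facts about the operator norm of rank-one and rank-two self-adjoint operators. I would first dispose of the trivial cases where $\chi=0$ or $\phi=0$ (both norms vanish) and then normalize.

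For $(i)$, the plan is to bound $\norop{\ketbra{\phi}{\chi}}$ from above via Cauchy-Schwarz: for any $\psi \in L^2$,
\begin{equation*}
\nord{\ketbra{\phi}{\chi}\psi} = \nord{\phi}\cdot \ab{\ps{\chi,\psi}} \le \nord{\phi}\nord{\chi}\nord{\psi},
\end{equation*}
and to match this upper bound by plugging in the test vector $\psi=\chi/\nord{\chi}$, which yields $\nord{\ketbra{\phi}{\chi}\psi}=\nord{\phi}\nord{\chi}$. Together these two inequalities give the equality.

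For $(ii)$, set $B \df \ketbra{\chi}{\phi}+\ketbra{\phi}{\chi}$. The key observation is that $\ran B \subset \Span\acs{\chi,\phi}$, and since $B$ is self-adjoint it also vanishes on $\Span\acs{\chi,\phi}^{\perp}$, so the operator norm can be computed on the two-dimensional invariant subspace $\Span\acs{\chi,\phi}$. Using the orthogonality $\chi\perp\phi$, the unit vectors $\hat\chi \df \chi/\nord{\chi}$ and $\hat\phi \df \phi/\nord{\phi}$ form an orthonormal basis of this subspace, and a direct computation gives $B\hat\chi = \nord{\phi}\nord{\chi}\,\hat\phi$ and $B\hat\phi = \nord{\phi}\nord{\chi}\,\hat\chi$. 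Hence the matrix of $B$ in this basis is $\nord{\phi}\nord{\chi}\bpa{\begin{smallmatrix}0 & 1\\ 1 & 0\end{smallmatrix}}$ whose eigenvalues are $\pm\nord{\phi}\nord{\chi}$, so $\norop{B}=\nord{\phi}\nord{\chi}$.

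There is no real obstacle here; the only subtlety worth stating explicitly is the use of orthogonality in $(ii)$ to ensure that the operator restricted to $\Span\acs{\chi,\phi}$ is off-diagonal in the basis $\acs{\hat\chi,\hat\phi}$, which is what yields the exact value $\nord{\phi}\nord{\chi}$ rather than merely the upper bound $2\nord{\phi}\nord{\chi}$ coming from the triangle inequality applied to each term separately.
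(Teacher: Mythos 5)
Your proof is correct. Part $(i)$ is the paper's argument verbatim: write the operator norm as a supremum over unit vectors, bound $\ab{\ps{\chi,\psi}}$ by Cauchy--Schwarz, and saturate with $\psi = \chi/\nord{\chi}$. For part $(ii)$ you take a slightly different but equivalent route: the paper computes the square, $\bigl(\ketbra{\chi}{\phi} + \ketbra{\phi}{\chi}\bigr)^2 = \nord{\chi}^2 \ketbra{\phi}{\phi} + \nord{\phi}^2 \ketbra{\chi}{\chi}$, which by orthogonality has norm $\nord{\chi}^2\nord{\phi}^2$, and then uses the self-adjoint identity $\norop{B}^2 = \norop{B^2}$; you instead diagonalize the restriction of $B$ to the two-dimensional invariant subspace spanned by $\chi$ and $\phi$, obtaining the off-diagonal $2\times 2$ matrix with eigenvalues $\pm\nord{\chi}\nord{\phi}$. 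Both are elementary and exploit orthogonality in the same essential way; your version makes the eigenvectors explicit, the paper's avoids choosing a basis. Your handling of the degenerate cases ($\chi=0$ or $\phi=0$) and the remark that $B$ vanishes on the orthogonal complement of the span are correct and complete the argument.
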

 \begin{proof}
	 $i)$ We have
 \begin{align*}
	 \norop{ \ketbra{\phi}{\chi} } & = \mysup{ \xi \in L^2 \cap \Sbb} \nord{ \ketbra{\phi}{\chi} \xi} = \nord{\phi} \mysup{ \xi \in L^2 \cap \Sbb} \ab{\ps{\chi,\xi}} \\
	 & = \nord{\phi} \ab{\ps{\chi,\f{\chi}{\nord{\chi}}}} =\nord{\chi} \nord{\phi}.
 \end{align*}

	 $ii)$ We can compute the norm by using the equality
 \begin{align*}
	 \bpa{\ketbra{\chi}{\phi} + \ketbra{\phi}{\chi}}^2 = \nord{\chi}^2 \proj{\phi} + \nord{\phi}^2 \proj{\chi},
 \end{align*}
	 and the fact that $\phi \perp \chi$.
\end{proof}

 Now we establish our main estimates, relating the metric $\D_k$ with the $\chh$ norm on rank one projectors.

\begin{lemma}\label{eqqq} For any $\psi,\vp \in L^2$, we have
	\begin{align}\label{inq}
\bpa{ \tr \ab{ P_{\vp} - P_{\psi}}}^2 & = \pa{ \nord{\psi}^2 + \nord{\vp}^2}^2 -4 \ab{\ps{\psi,\vp}}^2 \nonumber \\
	& = \D\bpa{[\psi],[\vp]}^4 + 4\ab{\ps{\psi,\vp}} \D\bpa{[\psi],[\vp]}^2,
\end{align}
and
\begin{align*}
\norop{P_{\psi} - P_{\vp}} & = \ud  \tr \ab{P_{\psi} - P_{\vp}} +\ud\ab{\nord{\psi}^2 - \nord{\vp}^2}.
\end{align*}
\end{lemma}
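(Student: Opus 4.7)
The key observation is that the self-adjoint operator $A \df P_\vp - P_\psi = \ketbra{\vp}{\vp} - \ketbra{\psi}{\psi}$ has rank at most two, with range contained in the subspace $V \df \mathrm{span}\{\psi,\vp\} \subset L^2$. Consequently, every nonzero eigenvalue of $A$ as an operator on $L^2$ is an eigenvalue of the restriction $A|_V$, and the trace norm $\tr |A|$, the operator norm $\norop{A}$, and the trace $\tr A$ can all be computed inside this (at most) two-dimensional subspace. The plan is therefore to compute the two eigenvalues $\lambda_1, \lambda_2$ of $A|_V$ explicitly from $\tr A$ and $\tr A^2$, and then read off the two identities.

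First I would observe that $\tr A = \nord{\vp}^2 - \nord{\psi}^2$ and, expanding the product of the two rank-one terms,
\begin{align*}
\tr A^2 = \nord{\vp}^4 + \nord{\psi}^4 - 2 |\ps{\psi,\vp}|^2.
\end{align*}
Combining these via $2\lambda_1 \lambda_2 = (\tr A)^2 - \tr A^2$ yields
\begin{align*}
\lambda_1 \lambda_2 = -\bpa{ \nord{\psi}^2 \nord{\vp}^2 - |\ps{\psi,\vp}|^2 } \le 0
\end{align*}
by Cauchy--Schwarz, so the two eigenvalues have opposite signs (one may vanish, which is the equality case). This sign information is precisely what allows the absolute-value trace to be computed algebraically.

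Because $\lambda_1$ and $\lambda_2$ have opposite signs, $\tr |A| = |\lambda_1| + |\lambda_2| = |\lambda_1 - \lambda_2|$, so
\begin{align*}
(\tr|A|)^2 = (\lambda_1+\lambda_2)^2 - 4\lambda_1 \lambda_2 = \bpa{\nord{\vp}^2 - \nord{\psi}^2}^2 + 4\bpa{ \nord{\psi}^2 \nord{\vp}^2 - |\ps{\psi,\vp}|^2}.
\end{align*}
A short algebraic manipulation rewrites this as $\bpa{\nord{\psi}^2 + \nord{\vp}^2}^2 - 4|\ps{\psi,\vp}|^2$, which establishes the first equality of \eqref{inq}. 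The second equality in \eqref{inq} is then a direct factorisation: with $\D^2 \df \D([\psi],[\vp])^2 = \nord{\psi}^2 + \nord{\vp}^2 - 2|\ps{\psi,\vp}|$ one has $\D^2 \bpa{\D^2 + 4|\ps{\psi,\vp}|} = \bpa{\nord{\psi}^2+\nord{\vp}^2}^2 - 4|\ps{\psi,\vp}|^2$.

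For the operator-norm identity, I would exploit the same sign information: assuming $\lambda_1 \ge 0 \ge \lambda_2$, one has $|\lambda_1| - |\lambda_2| = \lambda_1 + \lambda_2 = \nord{\vp}^2 - \nord{\psi}^2$ and $|\lambda_1| + |\lambda_2| = \tr|A|$. Solving this $2 \times 2$ linear system yields $|\lambda_1|$ and $|\lambda_2|$, and taking the maximum gives
\begin{align*}
\norop{A} = \max(|\lambda_1|,|\lambda_2|) = \ud \tr|A| + \ud \bab{\nord{\vp}^2 - \nord{\psi}^2},
\end{align*}
as desired. There is no real obstacle here: once one recognises that everything reduces to a two-dimensional self-adjoint problem, the whole lemma is a one-page exercise, the only mild subtlety being the Cauchy--Schwarz step that pins down the sign of $\lambda_1 \lambda_2$.
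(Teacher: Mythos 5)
Your proof is correct, and it follows the same basic strategy as the paper: everything is reduced to the two eigenvalues of the self-adjoint, rank-at-most-two operator $P_{\vp}-P_{\psi}$ acting on $\mathrm{span}\{\psi,\vp\}$, and the three quantities $\tr\ab{\cdot}$, $\norop{\cdot}$ and the claimed identities are read off from those eigenvalues. The execution differs in one respect worth noting. The paper obtains the eigenvalues explicitly by the ansatz $\chi=\alpha\psi+\beta\vp$, solving the resulting $2\times 2$ system and the characteristic quadratic, which forces it to momentarily assume $\psi\neq\vp$, $\ps{\vp,\psi}\neq 0$, $\alpha\neq 0$, $\beta\neq 0$ and to dismiss the degenerate cases with a remark that the same conclusions hold there. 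You instead extract the needed information from the invariants $\tr A=\nord{\vp}^2-\nord{\psi}^2$ and $\tr A^2=\nord{\vp}^4+\nord{\psi}^4-2\ab{\ps{\psi,\vp}}^2$, using Cauchy--Schwarz to conclude $\lambda_1\lambda_2\le 0$, so that $\tr\ab{A}=\ab{\lambda_1-\lambda_2}$ and $\norop{A}=\tfrac12\tr\ab{A}+\tfrac12\ab{\tr A}$ follow without ever writing the eigenvalues explicitly; this handles the degenerate configurations (parallel vectors, orthogonal vectors, one vector zero) uniformly and with no case analysis. What the paper's more computational route buys is the explicit formulas for $\lambda_{\pm}$ themselves, which it records in \eqref{klo}; your argument recovers the symmetric functions of the eigenvalues, which is exactly what the lemma needs, in a slightly cleaner way.
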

\begin{proof}
	The operator $P_{\psi} - P_{\vp}$ has its eigenvectors of the form $\chi = \alpha \psi + \beta \vp$ for some $\alpha, \beta \in \C$. Hence the system $\pa{P_{\psi} - P_{\vp}} \chi = \lambda \chi$ can be written
 \begin{align*}
	 \begin{cases} 
		 \alpha \nord{\psi}^2 + \beta \ov{z} - \lambda \alpha = 0\\ 
	 \alpha z + \beta \nord{\vp}^2 + \lambda \beta = 0,
	 \end{cases}
 \end{align*}
	where $z \df \ps{\vp,\psi}$. We assume that $\psi \neq \vp$ (in $(L^2 \cap \Sbb) / S^1$), $z \neq 0$, $\alpha \neq 0$ and $\beta \neq 0$, because the same conclusions will hold in those cases. Expressing $\alpha$ using the second equation, replacing it in the first one, multiplying by $z$ and dividing by $\beta$, we obtain
 \begin{align*}
	 \lambda^2 + \lambda \pa{ \nord{\vp}^2 - \nord{\psi}^2} + \ab{z}^2 - \nord{\psi}^2 \nord{\vp}^2 = 0.
 \end{align*}
	The eigenvalues are thus $\lambda_{\pm} = \ud\pa{\nord{\psi}^2 - \nord{\vp}^2 } \pm \sqrt{\Delta}$ where
 \begin{align*}
	  	\Delta \df \pa{\nord{\psi}^2 + \nord{\vp}^2}^2 - 4 \ab{z}^2. 
 \end{align*}
Since $\ab{ \nord{\psi}^2 - \nord{\vp}^2 } \le \sqrt{\Delta}$, we have
\begin{equation}\label{klo}
\left\{
\begin{array}{l}
	 \norop{P_{\psi} - P_{\vp}}  = \max\bpa{ \ab{\lambda_-},\ab{\lambda_+}} = \ud\ab{\nord{\psi}^2 - \nord{\vp}^2} + \ud \sqrt{\Delta}, \\
	 \tr \ab{P_{\psi} - P_{\vp}}  = \ab{\lambda_-} + \ab{\lambda_+} = \sqrt{\Delta}.
\end{array}
\right.
\end{equation}
This implies the conclusion of the lemma.
\end{proof}

In particular, this lemma shows that 
 \begin{align}\label{tops}
	  \ud  \tr \ab{P_{\psi} - P_{\vp}} \le \norop{P_{\psi} - P_{\vp}} \le \tr \ab{P_{\psi} - P_{\vp}}
 \end{align}
and proves that the $\sch_{k,p}$ norms are all equivalent, for all $p \in \seg{1,+\ii}$, on the space $\acs{ P_{\p} \st \p \in H^k}$. This also implies
 \begin{align*}
	 \D\bpa{[\psi],[\vp]}^4 \le \pa{ \tr \ab{ P_{\vp} - P_{\p}}}^2 \le (1+\ep) \D\bpa{[\psi],[\vp]}^4 + \f{4\ab{\ps{\psi,\vp}}^2}{\ep}
 \end{align*}
for any $\ep > 0$. Finally, for any $c > 1$, and any $\psi, \vp \in H^k\ind{p}$,
\begin{align}\label{mmm}
	\ud \D_k\bpa{\p,\Phi}^2 \le \nor{ P_{\p} - P_{\Phi} }{\chh } \le (1+\ep) \D_k\bpa{\p,\Phi}^2 + \f{2}{\sqrt{c^2-1}}\nor{\p}{H^k} \nor{\Phi}{H^k}.
 \end{align}

We now consider the state-to-projector map $\pt$. We cannot directly work on $\chh \cap \acs{ \norop{\cdot} = 1}$ because this is not a manifold since $\norop{\cdot}$ is not differentiable, and we cannot choose $\chh \cap \acs{\tr \cdot = 1}$ either because we cannot prove it to be a manifold by applying the preimage theorem \cite[Theorem 73.C]{Zeidler4} since the trace norm is not controlled by the operator norm. Proposition \ref{embedding} states that $\im \pt$ has a convenient geometric structure on which we can use differential geometry without complications.

\begin{proof}[Proof of Proposition \ref{embedding}]\tx{ }

	\textit{\bul Regularity.} First, $\pt$ is injective. The map $\psi \mapsto \proj{\psi}$ from $H^1 \cap \Sbb$ to $\sch_{\ii,1}$ is $\cC^{\ii}$. Since $H^1\ind{p}$ is the quotient of $H^1 \cap \Sbb$ by the action of the proper and compact group $S^1$, then $\pt$ is also $\cC^{\ii}$. The tangent space of $H^k \cap \Sbb$ at some point $\psi \in H^k \cap \Sbb$ is
 \begin{align*}
\Td_{\psi}\bpa{ H^k \cap \Sbb} = H^k \cap \Td_{\psi} \Sbb = \acs{ \phi \in H^k \st \re \ps{\psi,\phi} = 0 },
 \end{align*}
and the tangent space of $H^k\ind{p}$ at some $\psi\in H^k\ind{p}$ is
 \begin{align}\label{ed}
	 \Td_{\pi(\psi)} H^k\ind{p} & \simeq  \pa{\Td_{\psi}\bpa{ H^k \cap \Sbb}} / \pa{ \Td_{\psi} (S^1 \cdot \psi)} \simeq \acs{ \phi \in H^k \st \ps{\psi,\phi} = 0 } \\ \nonumber
	 & = H^k \cap \acs{\psi}^{\perp}. 
 \end{align}
Finally, the differential of $\pt$, defined on each chart $U_{\seg{\psi}}$, is given by
\begin{align*}
\d \pt :
\begin{array}{ccc}
	U_{\seg{\psi}} \subset H^k\ind{p} & \lra & \cB\pa{ H^k\ind{p} \cap \acs{\psi}^{\perp} , \chh }  \\
	\seg{\psi} & \longmapsto & \bpa{\vp \mapsto \ketbra{\vp}{\psi} + \ketbra{\psi}{\vp} }. \\ 
\end{array}
\end{align*}
We can show that it is injective.


	\textit{\bul Splitting.} To show that $\pt$ is an immersion, it remains to show that for any $\psi \in H^k \cap \Sbb$, $\im \d_{[\psi]} \pt$ splits $\chh$ (see \cite[p766]{Zeidler1}), i.e. that there is a projection from $\chh$ onto $\im \d_{[\psi]} \pt$, where
 \begin{align*}
	 \im \d_{[\psi]} \pt \simeq \acs{ \ketbra{\vp}{\psi} + \ketbra{\psi}{\vp} \bigst \vp \in H^k \cap \acs{\psi}^{\perp}} \subset \chh.
 \end{align*}
	 First, $\im \d_{[\psi]} \pt$ is closed. We define the linear operator
\begin{align*}
	\gamma :
\begin{array}{ccc}
	\chh & \lra & \im \d_{[\psi]} \pt \\
	G & \longmapsto & P_{\acs{\psi}^{\perp}} G P_{\psi} + P_{\psi} G P_{\acs{\psi}^{\perp}}.
\end{array}
\end{align*}
	We decompose $H^k  = \vect \psi \overset{\perp}{\oplus} \acs{\psi}^{\perp}$, where the projections on each parts are continuous in $H^k \ra H^k$. We can represent an element $G \in \chh$ as 
 \begin{align*}
	  G = \alpha \proj{\psi} + \ketbra{\vp}{\psi} + \ketbra{\psi}{\vp} + M 
 \end{align*}
where $\alpha \in \R$, $\vp \in H^k \cap \acs{\psi}^{\perp}$ and $M \in \chh( \acs{\psi}^{\perp} )$. This is the division
 \begin{align*}
	 \chh = \im \gamma \oplus \im (1-\gamma),
 \end{align*}
	where $\oplus$ means that $\im \gamma \cap \im (1-\gamma) = \acs{0}$. We have $\gamma G = \ketbra{\vp}{\psi} + \ketbra{\psi}{\vp}$, thus $\gamma^2 = \gamma$.

	\textit{\bul $\im \gamma$ and $\im (1-\gamma)$ are closed in $\chh$.}
	For $\vp \in H^k \cap \acs{\psi}^{\perp}$, we define $G_{\vp} \df \ketbra{\psi}{\vp} + \ketbra{\vp}{\psi} \in \im \gamma$. Let $\vp_n \in H^k \cap \acs{\psi}^{\perp}$ be a sequence such that $G_{\vp_n} \ltend{\chh} G$ for some $G \in \chh$. We define $\vp \df G \psi$. We have $\vp - \vp_n = (G-G_{\vp_n}) \psi$ and then $\nor{\vp-\vp_n}{H^k} \le \nor{G-G_{\vp_n}}{\chh} \nor{\psi}{H^{-k}}$ and $\vp_n \ltend{H^k} \vp$. We have $G-G_{\vp_n} = G_{\vp-\vp_n}$ so $\nor{G_{\vp}-G_{\vp_n}}{\chh} \le 2 \nor{\psi}{H^k} \nor{\vp-\vp_n}{H^k}$ and $G_{\vp_n} \ltend{\chh} G_{\vp}$, so $G = G_{\vp}$. We conclude that $\im \gamma$ is closed in $\chh$.

	For $\alpha \in \R$ and $M \in \chh(\acs{\psi}^{\perp})$, we define $G_{\alpha,M} \df \alpha P_{\psi} + M \in \im (1-\gamma)$. Let $(\alpha_n,M_n) \in \R \times \chh(\acs{\psi}^{\perp})$ be such that $G_{(\alpha_n,M_n)} \ltend{\chh} G$ for some $G \in \chh$. We define $\alpha \df \ps{\psi, G \psi}$ and $M \df G - \alpha P_{\psi}$. We have
 \begin{align*}
	 & \alpha - \alpha_n = \ps{\psi,(G-G_{(\alpha_n,M_n)})\psi} \\
	 & \hs\hs\hs\hs = \ps{(-\Delta+1)^{-\f{k}{2}} \psi, \delk(G-G_{(\alpha_n,M_n)}) \delk (-\Delta+1)^{-\f{k}{2}} \psi}
 \end{align*}
so $\ab{\alpha-\alpha_n} \le \nor{\psi}{H^{-k}}^2 \nor{G-G_{(\alpha_n,M_n)}}{\chh}$ and $\alpha_n \ra \alpha$. Moreover, $M_n-M = G_n-G+(\alpha_n-\alpha) P_{\psi}$ so 
 \begin{align*}
\nor{M_n-M}{\chh} \le \nor{G_{(\alpha_n,M_n)}-G}{\chh} + \ab{\alpha_n-\alpha} \nor{\psi}{H^k}^2
 \end{align*}
and $M_n \ltend{\chh} M$. Eventually, 
 \begin{align*}
\nor{G_{(\alpha_n,M_n)} -G_{(\alpha,M)}}{\chh} \le \nor{M_n-M}{\chh} + \ab{\alpha_n-\alpha} \nor{\psi}{H^k}^2,
 \end{align*}
and $G_{(\alpha_n,M_n)} \ltend{\chh} G_{(\alpha,M)}$ so $G = G_{(\alpha,M)} \in \im (1-\gamma)$. We conclude that $\im (1-\gamma)$ is closed.

\textit{\bul $\gamma$ is continuous.}
Let $\cG_{\gamma} \df \acs{ (G, \gamma G) \st G \in \chh}$ be the graph of $\gamma$. Let $(G_n)_{n \in \N} \in \chh^{\N}$ be a sequence such that $G_n \ltend{\chh} G$ and $\gamma G_n \ltend{\chh} F$ for some $G,F \in \chh$. $\im \gamma$ is closed so $F \in \im \gamma$ and $\gamma F = F$. Also, $G_n-\gamma G_n = (1-\gamma) G_n \ltend{\chh} G-F$, but since $\im (1-\gamma)$ is closed, then $G-F \in \im (1-\gamma)$ so $0 = \gamma(G-F) = \gamma G - F$ and $\gamma G = F$. This proves that $\cG_{\gamma}$ is closed in $\chh$, and thus by the closed graph theorem, $\gamma$ is continuous.

\textit{\bul Conclusion.}
$\gamma$ is thus a projector and $\im \d_{[\psi]} \pt$ splits $\chh$. We conclude that $\pt$ is an embedding. \\

\textit{\bul $\pt\iv$ is $\cC^{\ii}$.} We know that $\pt$ is a $\cC^1$-embedding, thus at any point $\p$ and working in local charts, $\d_{\p} \pt$ is invertible and its image splits, so we can apply the inverse function theorem. We refer to \cite[Theorem 73.E]{Zeidler4} and \cite[Section I.5]{Lang72}. All the degrees of regularity of $\pt$ are passed on its inverse \cite[Proposition 5.3]{Lang72}.
\end{proof}
The proof of Corollary \ref{cc} consists in applying \cite[Theorem 73.E]{Zeidler4}, and the fact that the topologies are equivalent by \eqref{tops}. Since $\pt$ and $\pt\iv$ are $\cC^1$, then $\d_{\pt(\p)} \pt\iv = \pa{\d_{\p} \pt}\iv$ by the chain rule. Also, $\pt$ is bi-Lipschitz for $k=0$ by \eqref{bili}.

\section{Proofs: the wavefunction-to-density map $\rop$}


In this section, we provide a basic property on the map $\p \mapsto \ro_{\p}$ from $H^1\ind{p}(\R^{dN})$ to $H^1(\R^d)$. We define the map from a wavefunction to its one-body density,
\begin{align*}
\rop :
\begin{array}{ccl}
H^k\ind{p}   & \lra & W^{k,1}(\R^d) \cap \acs{ \int \cdot = N} \\
\end{array}
\end{align*}
by $\rop(\C \psi) \df \ro_{\psi}$, and we also use the notation $\ro_{\p} \df \rop(\p)$.
Its differential has to be defined in local charts, by
\begin{align*}
\d \rop :
\begin{array}{ccc}
H^k\ind{p}  & \lra & \cB \pa{ H^k\cap  \acs{\psi}^{\perp}, W^{k,1} \cap \acs{\int \cdot = 0} }\\
	\seg{\psi} & \longmapsto &  \d_{\seg{\psi}} \rop = 2N\re \int_{\R^{d(N-1)}} \ov{\psi} \hs \cdot, \\
\end{array}
\end{align*}
which depends on the point of $H^k$ at which we look, contrarily to $\rop$. The choice of $\psi$ is the choice of a relative phase in the corresponding chart. This section consists in proving that it is smooth as claimed in the next lemma.

\begin{lemma}[Smoothness of $\rop$]\label{smoothrop}
For any $k \in \N$, $\rop$ is $\cC^{\ii}$. For any $\p,\Phi \in H^k\ind{p}$, we have
\begin{align}\label{estiro}
\nor{ \ro_{\p} - \ro_{\Phi} }{W^{k,1}} \le c_{k,d} \bpa{ \nor{\p}{H^{k}\ind{p}} + \nor{\Phi}{H^{k}\ind{p}}}  \D_k\pa{\p,\Phi},
\end{align}
where $c_{k,d}$ is a constant depending only on $k$ and $d$. The map $\rop$ is nowhere injective and not proper. 
\end{lemma}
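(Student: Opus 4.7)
The plan is to derive the Lipschitz-type estimate \eqref{estiro} via a polarization identity followed by Cauchy--Schwarz, deduce smoothness from the chart description of $H^k\ind{p}$ given in Proposition \ref{lka}, and settle nowhere-injectivity and non-properness by exhibiting an infinite family of density-preserving gauge transformations.

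For the estimate, I start from the algebraic identity $|\p|^2-|\Phi|^2 = \p\,\ov{(\p-\Phi)} + \ov{\Phi}(\p-\Phi)$. Integrating in $(x_2,\dots,x_N)$ and differentiating under the integral in $x_1$, the Leibniz rule produces for each multi-index $\alpha$ with $|\alpha|\le k$ a finite sum of terms of the form $\int(\partial_{x_1}^{\beta}\p)(\partial_{x_1}^{\alpha-\beta}\ov{(\p-\Phi)})\,\d x_2\cdots\d x_N$ and symmetric ones with the roles of $\p$ and $\Phi$ swapped, whose $L^1(\R^d_{x_1})$-norm is bounded, via Cauchy--Schwarz in the full $N$-variable domain, by $\nor{\partial_{x_1}^{\beta}\p}{L^2(\R^{dN})}\nor{\partial_{x_1}^{\alpha-\beta}(\p-\Phi)}{L^2(\R^{dN})}\le\nor{\p}{H^k}\nor{\p-\Phi}{H^k}$. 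Summing over $|\alpha|\le k$ gives the non-projective bound $\nor{\ro_{\p}-\ro_{\Phi}}{W^{k,1}}\le c_{k,d}\bpa{\nor{\p}{H^k}+\nor{\Phi}{H^k}}\nor{\p-\Phi}{H^k}$. Since $\rop$ and $\nor{\cdot}{H^k}$ are both $S^1$-invariant, I may minimize the right-hand side over $\Phi\mapsto e^{i\theta}\Phi$, which by definition of the projective metric replaces $\nor{\p-\Phi}{H^k}$ by $\D_k(\p,\Phi)$ and yields \eqref{estiro}. Smoothness then comes for free: the inverse of the chart $h_{\vp}$ of Proposition \ref{lka} reads $\phi\mapsto[(\vp+\phi)/\nord{\vp+\phi}]$, so in coordinates $\rop$ is represented by $\phi\mapsto \ro_{\vp+\phi}/\nord{\vp+\phi}^2$, whose numerator is a smooth quadratic map into $W^{k,1}$ by the same Cauchy--Schwarz computation and whose denominator is a smooth, strictly positive scalar quadratic on $U_{\vp}$.

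For nowhere-injectivity and non-properness, I pick any $\p\in H^k\ind{p}$ and any non-constant real $\theta\in\cC^{\ii}\ind{c}(\R^d)$, and set $\Phi(x_1,\dots,x_N)\df e^{i[\theta(x_1)+\cdots+\theta(x_N)]}\p(x_1,\dots,x_N)$. The phase factor is invariant under permutations of the $x_j$'s, so $\Phi$ remains antisymmetric and in $H^k$; moreover $|\Phi|^2=|\p|^2$ pointwise, so $\rop(\Phi)=\rop(\p)$, while $[\Phi]\neq[\p]$ in $H^k\ind{p}$ because $\theta$ is non-constant; this gives nowhere injectivity at every point. Specializing to a sequence $\theta_n$ with $\nor{\nabla\theta_n}{L^{\ii}}\to+\infty$ (e.g.\ a cut-off linear phase with slope $n$) produces $\Phi_n\in\rop\iv\bpa{\acs{\ro_{\p}}}$ whose $H^k$-norms blow up for $k\ge 1$, while for $k=0$ the $\Phi_n$ converge weakly but not strongly to $0$ in $L^2$ by Riemann--Lebesgue; in either case $\rop\iv\bpa{\acs{\ro_{\p}}}$ is not compact in $H^k\ind{p}$, so $\rop$ is not proper. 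The only substantive point is the phase-optimization step in the estimate, which is immediate from the $S^1$-invariance of both $\rop$ and of the $H^k$-norm; the rest is bookkeeping.
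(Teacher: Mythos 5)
Your estimate and smoothness arguments follow essentially the same route as the paper: the paper also bounds $\ab{\ro_{\p}-\ro_{\Phi}}$ by $\int \ab{\p-\Phi}(\ab{\p}+\ab{\Phi})$, handles derivatives with the product rule and Cauchy--Schwarz in all $N$ variables, and then optimizes over the global phase to convert $\nor{\p-\Phi}{H^k}$ into $\D_k(\p,\Phi)$; for differentiability it likewise represents $\rop$ in the charts of Proposition \ref{lka} and exploits that $\psi\mapsto\ro_{\psi}$ is a continuous quadratic map into $W^{k,1}$ (your explicit treatment of the normalization factor $\nord{\vp+\phi}^{-2}$ is in fact slightly more careful than the paper, which works directly with $\ro_{\psi+\vp}$). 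Where you genuinely diverge is the last claim: the paper proves non-properness by invoking the Harriman--Lieb representation of a fixed smooth density, whose orbitals carry momentum $(k,0,0)$ and hence blow up in $H^1$, and it does not spell out nowhere-injectivity at all; you instead use gauge transformations $\Phi=e^{i\sum_j\theta(x_j)}\p$, which is more elementary (no external construction needed), stays anchored at an arbitrary base state $\p$, and gives the ``nowhere'' part of the injectivity statement directly, at the price of being less explicit than Harriman--Lieb about how rich the fiber over a given density is.

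Two small points to patch. First, your identity $\ab{\p}^2-\ab{\Phi}^2=\p\,\ov{(\p-\Phi)}+\ov{\Phi}(\p-\Phi)$ only holds after taking real parts (the right-hand side has the imaginary part $2i\Im(\ov{\Phi}\p)$); this is harmless since you only use the modulus bound $\bigl|\ab{\p}^2-\ab{\Phi}^2\bigr|\le\ab{\p-\Phi}\,(\ab{\p}+\ab{\Phi})$. Second, ``$[\Phi]\neq[\p]$ because $\theta$ is non-constant'' is too quick: if $\p$ is supported where $\sum_j\theta(x_j)$ happens to be constant, the two rays coincide, so you must choose $\theta$ adapted to $\p$, i.e.\ such that $\sum_j\theta(x_j)$ is non-constant on a positive-measure subset of $\{\p\neq0\}$ (always possible since $\{\p\neq0\}$ has positive measure); the same care is needed in the non-properness step for $k=0$, where the oscillating region must carry a positive fraction of the mass so that the weak limit has strictly smaller norm, and where the quotient by the global phase must be taken into account before concluding that no subsequence converges in $\D_0$. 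With these adjustments your argument is complete.
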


The map $\vp \mapsto \int \psi \vp$, from $L^2$ to $L^2$ for instance, is compact. For this reason we believe that $\d_{\p} \rop$ is a source of compactness in $\d_v \ro = \d_{\p(v)} \rop \rond \d_v \p$, which is itself a source of ill-posedness in the inverse Kohn-Sham problem.

\begin{proof}[Proof of Lemma \ref{smoothrop}]\tx{ }

	\bul \textit{Not proper.} We take $\ro \in \cC^{\ii}(\Omega,\R_+)$ such that $\int \ro = N$ and show that $\rop\iv \pa{\acs{\ro}}$ is not compact. Indeed, considering the Harriman-Lieb representation $\p_k$ of $\ro$, having an orbital with $(k,0,0)$ momentum \cite{Harriman81} \cite[proof of Theorem 1.2]{Lieb83b}, it satisfies $\ro_{\p_k} = \ro$ but $\nor{\p_k}{H^1} \ra + \ii$.

	\bul \textit{Continuity.} Let $Y \df \acs{ (x_2,\dots,x_N) \in \Omega^{N-1}}$. We have
	 \begin{align*}
		 \ab{\ro_{\psi} - \ro_{\vp}} &  \le \int_Y \ab{ \ab{\psi} - \ab{\vp} } \pa{ \ab{\psi} + \ab{\vp} } \d Y \le \int_Y \ab{\psi - \vp } \bpa{ \ab{\psi} + \ab{\vp} } \d Y,
	 \end{align*}
	 thus by integrating in the last variable we obtain
	 \begin{align*}
		 \nor{\sqrt{\ro_{\psi}} - \sqrt{\ro_{\vp}} }{L^2}^2 \le \nor{\ro_{\psi} - \ro_{\vp}}{L_1} &  \le \nor{\psi-\vp}{L^2} \sqrt{ \int \bpa{ \ab{\psi} + \ab{\vp} }^2} \le 2 \nor{\psi-\vp}{L^2}.
	 \end{align*}
	As for the derivatives,
 \begin{align*}
	 \na (\ro_{\psi} - \ro_{\vp}) = 2\re \int_Y \ov{\psi} \na \pa{ \psi-\vp} + 2\re \int_Y \ov{\psi-\vp} \na \vp,
 \end{align*}
	so
 \begin{align*}
	 \nor{\na\pa{\ro_{\psi} - \ro_{\vp}}}{L^1} & \le 2 \nor{\na \pa{\psi-\vp}}{L^2} + 2\nor{\na \vp}{L^2} \nor{\psi - \vp}{L^2} \\
	 & \le 2 \bpa{1 + \nord{\na \vp}} \nord{\del^{\ud} \pa{\psi-\vp}}.
 \end{align*}
	For the double derivatives, we can show that
 \begin{multline*}
	 \nor{\Delta\pa{\ro_{\psi} - \ro_{\vp}}}{L^1} \le 2 \pa{ \nor{\na \psi}{L^2} + \nor{\na \vp}{L^2}} \nor{\na \pa{\psi-\vp}}{L^2}\\
	 + 2 \nor{\psi-\vp}{L^2} \nor{\Delta \psi}{L^2} + 2 \nor{\vp}{L^2} \nor{\Delta \pa{ \psi-\vp}}{L^2},
 \end{multline*}
	and more generally for any $k\in \N$, there is a constant $c_{k,d}$, depending only on $k$ and on the dimension, such that
 \begin{align*}
	 \nor{\ro_{\psi}-\ro_{\vp}}{W^{k,1}} & \le c_{k,d} \sum_{i=1}^k \bpa{ \nor{\psi}{H^{k-i}} + \nor{\vp}{H^{k-i}}} \nor{\psi-\vp}{H^i}\\
	 & \le c_{k,d} \bpa{ \nor{\psi}{H^{k}} + \nor{\vp}{H^{k}}} \nor{\psi-\vp}{H^k}.
 \end{align*}
By changing the global phase for $\vp \ra e^{i\theta} \vp$, this leads to \eqref{estiro}.

	\bul \textit{Differentiability.} We see $H^k\ind{p}$ as a smooth manifold, with charts $c_{\psi}$ for $\psi \in H^k$ as considered in the proof of Proposition \ref{lka}. The description of $\rop$ is then done in those charts. We denote by $\ov{\ro}$ the map $H^k \ra W^{k,1}, \psi \mapsto \ro_{\psi}$. For $\vp \in H^k$ close to $\psi \in H^k \cap \Sbb$, we can represent $\rop$ by $\ov{\ro} \rond c_{\psi}\iv (\vp) = \ov{\ro} \pa{ \pi(\psi + \vp)} = \ro_{\psi + \vp}$, hence $\rop$ is smooth. 
	We have
 \begin{align*}
	 \ro_{\psi + \vp} - \ro_{\psi} - 2 N \re \int_{Y} \ov{\psi} \vp = \int_{Y} \ab{\vp}^2,
 \end{align*}
 with $\nor{  \int_{Y} \ov{\psi} \vp }{W^{k,1}} \le c \nor{\psi}{H^k} \nor{\vp}{H^k}$, therefore $2N\re \int_{Y} \ov{\psi} \cdot$ is bounded. Eventually,
 \begin{align*}
	 \nor{ \int_{Y} \ab{\vp}^2  }{W^{k,1}} \le c_{d,k} \nor{\vp}{L^2}\nor{\vp}{H^k}.
 \end{align*}
	Hence $\ov{\ro}$ has differential $\d_{\psi} \ov{\ro} = 2 N \re \int_Y \ov{\psi} \cdot$, which is a representative of $\d_{[\psi]} \rop$ in the chart $(U_{\psi},c_{\psi})$.
\end{proof}

%
%
%
The conclusions of this section hold for other potential-to-ground state density maps, for instance for the current map $\p \mapsto j_{\p}$ etc.

\section{Proofs: maps from potentials to ground state quantities}\label{mapsproofs} 
In this section, we prove the corresponding results of Theorem \ref{proposs}, but for the map $v \mapsto \proj{\p\ex{k}(v)}$ , and then transport them to the map $v \mapsto \p\ex{k}(v)$ using that $\pt$ is an embedding.

\subsection{The restriction} 

For any operator $A$ of $L^2(\Omega)$, we define $\tilde{A}_{\perp} \df (1-P_{\p}) \restr{A}{\acs{\p}^{\perp}}$ as an operator of $\acs{\p}^{\perp}$, and $A_{\perp} \df A (1-P_{\p})$ as an operator of $L^2$. We need to work in $\acs{\p}^{\perp}$ because it corresponds to the tangent space of $\Sbb$ at $\p$. We split $L^2 = \vect \p \oplus \acs{\p}^{\perp}$. Let us write $H$ and $E$ instead of $\hn(v)$ and $\exc{k}(v)$, $P$ is the orthogonal projection on $\Ker (H-E)$ and $P_{\perp} \df 1-P$.

\begin{proposition}[Properties of $H_{\perp}$]\label{equi} Let $p$ be as in \eqref{expot}, and let $v \in \spn{k}$, $H \df \hn(v)$, $E \df \exc{k}(v)$ and $\lambda \in \R$.

\begin{enumerate}[label=(\roman*),leftmargin=*]
	\item\label{ba1} As an operator on $\acs{\p}^{\perp}$, $\tilde{H}_{\perp}$ is self-adjoint on $D\bpa{H} \cap \acs{\p}^{\perp}$. On $\acs{\p}^{\perp}$, $\widetilde{(H-\lambda)}_{\perp}^{-1/2} = (H-\lambda)^{-1/2} $ for $\lambda \not \in \sigma(H)$. Moreover $\sigma(\tilde{H}_{\perp}) = \sigma(H) \backslash \acs{E}$ and $\sigma\ind{ess}(\tilde{H}_{\perp}) = \sigma\ind{ess}(H)$. 
\item\label{ba2} 
	The operator $(-\Delta +1)^{\ud}(H-E)\iv_{\perp}(-\Delta +1)^{\ud}$ is bounded.
\end{enumerate}
\end{proposition}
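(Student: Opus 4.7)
For part \ref{ba1}, the plan is to exploit that $\psi$ is an eigenvector of $H$ with eigenvalue $E$, so that the orthogonal decomposition $L^{2}(\Omega^{N}) = \mathbb{C}\psi \oplus \{\psi\}^{\perp}$ reduces $H$ in the sense that both summands are invariant under the resolvent of $H$. By the standard theory of reduction of self-adjoint operators, $\tilde{H}_{\perp}$ is then self-adjoint on $D(H) \cap \{\psi\}^{\perp}$ as an operator on $\{\psi\}^{\perp}$. The spectrum splits accordingly as $\sigma(H) = \{E\} \cup \sigma(\tilde{H}_{\perp})$, and because $E < \Sigma_{N}(v)$ makes $E$ an isolated eigenvalue while $\dim \ker(H-E) = 1$ ensures that $\psi$ spans the whole $E$-eigenspace, one concludes $\sigma(\tilde{H}_{\perp}) = \sigma(H) \setminus \{E\}$ and $\sigma_{\mathrm{ess}}(\tilde{H}_{\perp}) = \sigma_{\mathrm{ess}}(H)$. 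The resolvent identity for $\lambda \notin \sigma(H)$ then follows directly from the block-diagonal representation of $H$ through the functional calculus, which commutes with reduction to invariant subspaces.

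For part \ref{ba2}, the plan is to combine the spectral gap around $E$ provided by \ref{ba1} with a quadratic-form equivalence between $H$ and $-\Delta + 1$. Under the assumptions \eqref{expot}, the external potential $\sum_{i} v(x_{i})$ and the interaction $\sum_{i<j} w(x_{i}-x_{j})$ are infinitesimally form-bounded with respect to $\sum_{i}-\Delta_{i}$, via Sobolev embedding combined with an $\varepsilon$-splitting of $v$ and $w$ into their $L^{p}$ and $L^{\infty}$ parts. Hence for $c > 0$ large enough, $H + c$ is positive and satisfies the two-sided estimate $C^{-1}(-\Delta+1) \le H + c \le C(-\Delta+1)$ as quadratic forms on the antisymmetric form domain, so that $A := (-\Delta+1)^{1/2}(H+c)^{-1/2}$ is a bounded operator on $L^{2}$ with bounded inverse. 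I would then factorise
\[
(-\Delta+1)^{1/2} (H-E)^{-1}_{\perp}(-\Delta+1)^{1/2} = A \cdot (H+c)^{1/2} (H-E)^{-1}_{\perp} (H+c)^{1/2} \cdot A^{*},
\]
and observe, via the functional calculus for the self-adjoint operator $\tilde{H}_{\perp}$ obtained in \ref{ba1}, that the middle factor equals $f(\tilde{H}_{\perp})$ with $f(\lambda) = (\lambda+c)/(\lambda - E)$. Since \ref{ba1} gives $\sigma(\tilde{H}_{\perp}) \subset \sigma(H) \setminus \{E\}$ with $E$ isolated in $\sigma(H)$, the spectrum of $\tilde{H}_{\perp}$ stays at positive distance from the singularity of $f$, and since $f$ has finite limit $1$ at infinity, $f$ is bounded on $\sigma(\tilde{H}_{\perp})$. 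The middle factor is therefore bounded, and so is the whole composition.

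The main technical input is the infinitesimal form-boundedness of the $N$-body potential with respect to $\sum_{i}-\Delta_{i}$ on the antisymmetric form domain with Dirichlet boundary conditions, but for $p$ in the range \eqref{expot} this reduces to a classical Sobolev-embedding computation that already underpins the definition of the Friedrichs operator $H_{N}(v)$ used throughout the paper. Once this form equivalence and the spectral-gap output of \ref{ba1} are in place, both statements reduce to routine manipulations.
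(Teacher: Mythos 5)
Your proposal is correct and takes essentially the same route as the paper: part (i) is treated as the standard reduction/spectral-calculus fact, and for part (ii) the paper performs the same kind of sandwich, writing the middle factor as $\lambda(H-E)^{-1}_{\perp}+P_{\perp}$ (bounded via the spectral gap $\operatorname{dist}(E,\sigma(H)\setminus\{E\})>0$) and controlling $\|(-\Delta+1)^{1/2}(H-E+\lambda)^{-1/2}\|^{2}=\|(-\Delta+1)^{1/2}(H-E+\lambda)^{-1}(-\Delta+1)^{1/2}\|$ by exactly the infinitesimal form-boundedness of $v$ and $w$ relative to the Laplacian that you invoke (packaged in its appendix lemma on potentials). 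Your shift by a large constant $c$ instead of by $E-\lambda$ is only a cosmetic variant, and if anything it treats excited levels, where $H-E$ is not positive, slightly more cleanly.
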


\begin{proof} The first part \ref{ba1} is well-known and follows from the spectral calculus \cite{ReeSim1,Lewin18}. Since $H$ and $(H-E)\iv_{\perp}$ commute, we have for $\lambda > 0$,
 \begin{align*}
	 & \bhs \norop{\del^{\ud} (H-E)\iv_{\perp}   \del^{\ud}} \\
	 & = \norop{\del^{\ud} \f{H-E+\lambda}{(H-E+\lambda)^{\ud}}  (H-E)\iv_{\perp}  (H-E+\lambda)^{-\ud} \del^{\ud}} \\
	 &  \le \norop{\del^{\ud}(H-E+\lambda)^{-\ud}}^2 \norop{(H-E+\lambda)(H-E)\iv_{\perp} } \\
	 &  = \norop{\del^{\ud}(H-E+\lambda)^{-\ud}}^2 \norop{\lambda (H-E)\iv_{\perp} + P_{\perp}} \\
	 &  \le \pa{\lambda \dist \pa{E,\sigma(H) \backslash \acs{E}}\iv + 1} \norop{\del^{\ud}(H-E+\lambda)^{-\ud}}^2.
 \end{align*}
 By decomposing
 \begin{align*}
	 & \del^{\ud}(H-E+\lambda)\iv\del^{\ud} \\
	 & \bhs = \pa{(H-E+\lambda)^{-\ud}\del^{\ud}}^* \pa{(H-E+\lambda)^{-\ud}\del^{\ud}},
 \end{align*}
 we obtain
 \begin{multline*}
	 \norop{\del^{\ud}(H-E+\lambda)^{-\ud}}^2 \\
	 =   \norop{\del^{\ud}(H-E+\lambda)\iv\del^{\ud}},
 \end{multline*}
	which is finite by Lemma \ref{continj}.
\end{proof}


\subsection{The potential-to-density matrix map $\gam$}

We define the map
\begin{align*}
\gam :
\begin{array}{ccl}
\spn{k} & \lra & \im \pt \\
	v & \longmapsto & P_{\p\ex{k}(v)} = \ketbra{\p\ex{k}(v)}{\p\ex{k}(v)} \\
\end{array}
\end{align*}
giving the ground state density matrix. Its differential is
\begin{align*}
\d \gam : 
\begin{array}{ccc}
	\spn{k} & \lra & \cB \pa{ \spa , \sch_{1,1}} \\
v & \longmapsto & \pa{u \mapsto \pa{\d_v \gam} u} \\
\end{array}
\end{align*}
and we recall that, if $\psi(v)$ is a representative of $\p\ex{k}(v)$, then
 \begin{align}\label{tn}
	  \pa{\d_v \gam}u & \in \pa{\d_{\p(v)} \pt} \pa{\Td_{\p(v)} H^k\ind{p}} \\
	 & \bhs \bhs =  \acs{ \ketbra{\psi(v)}{\phi} + \ketbra{\phi}{\psi(v)} \bigst \phi \in H^k \cap \acs{\psi(v)}^{\perp} }. \nonumber
 \end{align}
The differential of $\gam$ takes its values in the tangent space of $\im \pt$, corresponding to the tangent space of $H^k\ind{p}$. On this last space, the relevant operator acting on tangent vectors is $\widetilde{\hn(v)}_{\perp}$. For simplicity, we will use the notations
 \begin{align*}
	 \tx{$\ssum$ } v_i \df \sum_{i=1}^N v(x_i), \bhs\tx{$\ssumd$ }  w_{ij} \df \sum_{1 \le i \sle j \le N} w(x_i-x_j). 
 \end{align*}

We define the contour $\cC \df \acs{z \in \C \st \ab{z-\exc{k}(v)}=\eta(v)}$ where $\eta(v) \df \dist \bpa{ \exc{k}(v),\sigma(\hn(v))\backslash \acs{\exc{k}(v)}}/2$.

\begin{theorem}[Properties of $\gam$]\label{propgam} Let $p$ be as in \eqref{expot} and $\spa = \spp$. The potential-to-ground state density matrix map $\gam$ is $\cC^{\ii}$. At some point $v \in \spn{k}$, its differential is
\begin{align}\label{gameq}
	& \pa{\d_v \gam} u  =\inv{2\pi i} \oint_{\cC}\d z \bpa{z-\hn(v)}\iv \vv{u} \bpa{z-\hn(v)}\iv \\
& \hs\hs\hs = \bpa{\exc{k}(v)-\hn(v)}\iv_{\perp}  \vv{u} \gam(v) + \gam(v) \vv{u}  \bpa{\exc{k}(v)-\hn(v)}\iv_{\perp}. 
\end{align}
	Also, for any $v \in \spn{k}$, $\tr \bpa{(\d_v \gam) u} = 0$ for any $u \in \spp$. The differential $\d_v \gam$ is compact from $\spp$ to $\sch_{1,1}$ and not surjective. If $k=0$ and $p>\max(2d/3,2)$, then $\gam$ and $\d_v \gam$ are injective.
\end{theorem}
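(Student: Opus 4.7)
The plan is to read the whole theorem through the Riesz projector representation of $\gam$ and then exploit the rank-one structure it forces on the differential.

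\textbf{Setup via contour integrals.} My first step would be to invoke Kato's analytic perturbation theory: since $\exc{k}(v)$ is an isolated non-degenerate eigenvalue of $\hn(v)$ for $v\in\spn{k}$, I can fix a small circular contour $\cC$ enclosing only $\exc{k}(v)$ and write $\gam(v)=\frac{1}{2\pi i}\oint_{\cC}(z-\hn(v))\iv dz$. Under the assumption \eqref{expot}, $\vv{u}=\sum_i u(x_i)$ is relatively form-bounded with respect to $-\sum_i\Delta_i$ with arbitrarily small relative bound, so $v'\mapsto\hn(v')$ defines a real-analytic family of self-adjoint operators for $v'$ in a neighborhood of $v$, $\cC$ stays in the common resolvent set, and the Neumann series expansion of $(z-\hn(v+tu))\iv$ in $t$ shows both that $\gam$ is $\cC^\infty$ (in fact analytic) and gives the integral form of \eqref{gameq}. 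To reduce to the second form, I would decompose $(z-\hn(v))\iv=P/(z-E)+(z-\hn(v))\iv_\perp$ inside $\cC$, where the second piece is holomorphic there; the $PP$-term has an antiderivative in $z$ and integrates to zero, the two cross-terms yield simple-pole residues producing $P\vv{u}(E-\hn(v))\iv_\perp$ and its adjoint, and the $\perp\perp$ term is holomorphic.

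\textbf{Trace-zero, compactness, non-surjectivity.} The trace-zero property is cleanest from $\tr\gam(v)\equiv 1$, or directly from the cyclicity of the trace together with $(E-\hn(v))\iv_\perp P=0$. For compactness, I would exploit that the explicit formula gives $(\d_v\gam)u=\ketbra{\p}{\phi_u}+\ketbra{\phi_u}{\p}$ with $\phi_u=(\exc{k}(v)-\hn(v))\iv_\perp\vv{u}\p$, hence $\|(\d_v\gam)u\|_{\sch_{1,1}}\le 2\|\p\|_{H^1}\|\phi_u\|_{H^1}$. The weighted resolvent bound of Proposition \ref{equi}(ii) gives $\|\phi_u\|_{H^1}\lesssim\|\vv{u}\p\|_{H^{-1}}$, so the whole question reduces to compactness of the multiplication map $u\mapsto\vv{u}\p$ from $\spp$ to $H^{-1}(\Omega^N)$. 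For a weakly null sequence $u_n$ in $\spp$, I would test against $f\in H^1(\Omega^N)$ and reduce to showing that $\{\vv{\phantom{u}}\cdot\p\cdot f:\|f\|_{H^1}\le 1\}$ is relatively compact in the predual $L^{p'}+L^1$; this follows from Rellich--Kondrachov on compacts in $\Omega$ plus the integrability/decay of $\p\in H^1$ (and the one-body density $\ro_\p$) at infinity, using the exponential Agmon decay already invoked in Section~\ref{secpot}. Non-surjectivity is then immediate: the tangent space of $\im\pt$ at $\gam(v)$ is infinite-dimensional (it is modeled on $H^1\cap\{\p\}^\perp$ by Proposition \ref{lka} and the discussion \eqref{tn}), so a compact linear map into it cannot be surjective.

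\textbf{Injectivity at $k=0$.} If $\gam(v)=\gam(v')$ then $\p\ex{0}(v)$ and $\p\ex{0}(v')$ coincide as rays, so $\ro_{\p\ex{0}(v)}=\ro_{\p\ex{0}(v')}$, and the Hohenberg--Kohn theorem in the form of Garrigue \cite{Garrigue19}, valid because $p>\max(2d/3,2)$, forces $v=v'$ (modulo the gauge $\sim$ implicit in the statement). For $\d_v\gam$, testing the identity $(\d_v\gam)u=0$ on the vector $\p$ kills the $P\vv{u}(E-\hn(v))\iv_\perp$ piece and leaves $(E-\hn(v))\iv_\perp\vv{u}\p=0$, i.e.\ $\vv{u}\p\in\C\p$; a reapplication of the many-body unique-continuation argument from \cite{Garrigue19} then forces $u$ to be constant.

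\textbf{Anticipated obstacle.} The only non-routine step is the compactness of $u\mapsto\vv{u}\p$ into $H^{-1}(\Omega^N)$: the bound $\|\vv{u}\p\|_{H^{-1}}\lesssim\|u\|_{\spp}\|\p\|_{H^1}$ is a straightforward duality/Sobolev argument, but upgrading boundedness to compactness requires the tightness coming from the spatial decay of bound states, and it is there that the precise interplay between the exponent $p$ in \eqref{expot}, the dimension $d$, and the Agmon decay of $\p$ enters---the same mechanism that underlies the local weak-strong continuity claim in Theorem~\ref{proposs}(iii).
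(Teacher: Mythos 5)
Your outline follows the paper's skeleton quite closely for most items: smoothness and formula \eqref{gameq} via resolvent/contour expansions (the paper makes the same computation rigorous in $\sch_{1,1}$ by sandwiching every resolvent with $\del^{\pm\ud}$ and invoking Lemma \ref{continj}, which is exactly the form-boundedness you appeal to), the reduction of $\d_v\gam$ to the rank-two operator $\ketbra{\p}{\phi_u}+\ketbra{\phi_u}{\p}$ with $\|\phi_u\|_{H^1}\le c_v\|\del^{-\ud}\vv{u}\p\|_{L^2}$ via Proposition \ref{equi}, non-surjectivity by the open mapping theorem, injectivity of the differential by letting $\hn(v)-\exc{k}(v)$ act and using unique continuation, and injectivity of $\gam$ itself from Hohenberg--Kohn (or, more directly, by subtracting the two Schr\"odinger equations once the rays coincide). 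Where you genuinely diverge is the heart of the matter, the compactness of $u\mapsto\vv{u}\p$ in $H^{-1}$: the paper proves it operator-theoretically (Lemma \ref{lemap}), showing that $L_n=\del^{-\ud}u_n\del^{-\ud}$ converges strongly to $0$ on the one-body $L^2$ via localization functions $\chi_r$, the commutator bound $\norop{[\del^{\ud},\chi_r]}\le c/r$ and HLS/Kato--Seiler--Simon, and then writes $\nord{\deli^{-\ud}u_n(x_i)\p}^2=\frac1N\tr\pa{L_n\del^{\ud}\gamma_{\p}\del^{\ud}L_n}$, so that the tightness is carried by the trace-class operator $\del^{\ud}\gamma_{\p}\del^{\ud}$ rather than by any pointwise decay of $\p$. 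Your route dualizes instead: precompactness, in the space pairing against $\spp$, of the family $g_f(x)=N\int\ov{\p}f\,\d x_2\cdots\d x_N$, $\nor{f}{H^1}\le1$. This is a legitimate and arguably more elementary alternative, avoiding the commutator/HLS machinery at the price of a careful compactness argument in a function space.

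Two points in that step need repair before the argument closes. First, the predual of $L^p+L^{\ii}$ is $L^{p'}\cap L^1$, not $L^{p'}+L^1$; you must prove precompactness of $\{g_f\}$ in the intersection norm, i.e.\ in $L^1$ \emph{and} in $L^{p'}$. Second, the uniform $W^{1,1}$ bound on $g_f$ together with Rellich--Kondrachov only yields local compactness in $L^q$ for $q<d/(d-1)$, which for $d\ge3$ is strictly below $p'=d/(d-2)$; to get compactness in $L^{p'}$ you need an additional uniform bound in some higher $L^q$ (e.g.\ from $|g_f|\le\sqrt{\ro_{\p}}\sqrt{\ro_f}$ together with boundedness of $\ro_{\p}$, or $\p\in H^2$) and interpolation, and similarly the tail estimate in $L^{p'}$ requires a H\"older splitting using that $\sqrt{\ro_f}$ is bounded in $L^{2d/(d-2)}$. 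Note also that Agmon decay is more than you need: tightness of the fixed density $\ro_{\p}\in L^1$ (and of $\sqrt{\ro_{\p}}$ in the relevant $L^q$) already controls the tails uniformly in $f$. With these repairs your duality argument does the job, but as written the sentence ``Rellich--Kondrachov plus decay'' conceals work comparable in size to the paper's Lemma \ref{lemap}.
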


\begin{proof} \tx{ }

	\textit{\bul Continuity.} Let $v,u\in \spn{k}$ be such that $\nor{v-u}{L^p+L^{\ii}} \le \ep$ with $\ep$ so small that $\ab{\exc{k}(v)-\exc{k}(u)} \sle \eta(v)/8 \sle \eta(u)$. For all $z\in \C$ such that $\ab{z-\exc{k}(v)} = \eta(v)/2$, we have thus $\dist(z,\sigma\pa{\hn(u)}) \ge \eta(v)/8$. We use the resolvent formula and integrate over a contour $\cC$ located around $\exc{k}(v)$ and $\exc{k}(u)$, we have
	\begin{multline*}
\gam(v) - \gam(u) = \int_{\cC} (-\Delta+1)^{-\ud} C(z) (-\Delta+1)^{-\ud} \vv{(v-u)} (-\Delta+1)^{-\ud} \\
\times D(z) (-\Delta+1)^{-\ud},
\end{multline*}
where 
 \begin{align*}
 C(z) \df \del^{\ud} (z-\hn(v))\iv \del^{\ud} \\
 D(z) \df \del^{\ud} (z-\hn(u))\iv \del^{\ud}
 \end{align*}
are bounded uniformely in $z$, as justified by Lemma \ref{continj}. We estimate
 \begin{align*}
	 \nor{\gam(v) - \gam(u)}{\sch_{\ii,1}} & \le c \norop{(-\Delta+1)^{-\ud} \vv{(v-u)} (-\Delta+1)^{-\ud}} \\
	 & \le c N \nor{v-u }{\spp},
 \end{align*}
	where we used Lemma \ref{continj}. Finally, we saw in Corollary \ref{cc} that on $\im \gam$, the norms $\sch_{1,\ii}$ and $\sch_{1,1}$ are equivalent.

\textit{\bul Differentiability.} Let $v \in \spn{k}$ and $u \in \spp$ be small enough so that $v+u \in \spn{k}$. By the resolvent formula, we have
\begin{align*}
& \gam(v+u)- \gam(v) - \inv{2\pi i} \oint_{\cC } \pa{z-\hn(v)}\iv \vv{u} \pa{z-\hn(v)}\iv \\
& \bhs = \inv{2\pi i} \oint_{\cC} \pa{z-\hn(v+u)}\iv \cro{\vv{u} \pa{z-\hn(v)}\iv}^2 \\
	& \bhs = \inv{2\pi i} \oint_{\cC}\del^{-\ud} G(z) \del^{-\ud} \vv{u} \del^{-\ud} C(z)\\
	& \bhs\bhs\bhs \times \del^{-\ud} \vv{u} \del^{-\ud} C(z)\del^{-\ud},
\end{align*}
	where the operator $G(z) \df \del^{\ud}\bpa{z-\hn(v+u)}\iv \del^{\ud}$ is uniformly bounded in $z$. Therefore
\begin{align*}
	& \nor{\gam(v+u)- \gam(v) - \inv{2\pi i} \oint_{\cC } \bpa{z-\hn(v)}\iv \vv{u} \bpa{z-\hn(v)}\iv}{\sch_{\ii,1}} \\
	& \bhs\bhs\bhs\bhs\bhs\bhs\bhs\bhs   \le c \nor{u}{\spp}^2,
\end{align*}
where $c$ is independent of $u$. Hence $\d_v \gam$ exists and is given by the first equality in \eqref{gameq}.

\textit{\bul Formula \eqref{gameq}.} We denote by $\Lambda$ the domain delimited by $\cC$. First, because the only singularity inside $\Lambda$ is on $z = \exc{k}(v)$, we have by the spectral theorem
\begin{align*}
	0&  = \oint \bpa{z-\hn(v)}\iv \gam(v) \vv{u}  (z-\hn(v))\iv \gam(v) \\
	& = \oint \bpa{z-\hn(v)}\iv \gamp \vv{u}  (z-\hn(v))\iv \gamp.
\end{align*}
	Moreover, since $\Lambda$ and $\sigma\bpa{\widetilde{\hn(v)}_{\perp}}$ are disjoint, the spectral theorem implies
\begin{align*}
	& \inv{2 \pi i} \oint_{\cC}  (z-\hn(v))\iv  \gamp\vv{u}  (z-\hn(v))\iv \gam(v)\\
	& \bhs\bhs = \inv{2 \pi i} (\exc{k}(v)-\hn(v))\iv_{\perp} \vv{u}  \pa{ \oint_{\cC} (z-\hn(v))\iv} \gam(v) \\
	& \bhs\bhs = (\exc{k}(v)-\hn(v))\iv_{\perp}  \vv{u} \gam(v).
\end{align*}
Similarly,
\begin{align*}
	& \inv{2 \pi i} \oint_{\cC} (z-\hn(v))\iv \gam(v) \vv{u} (z-\hn(v))\iv \gamp \\
	&\bhs\bhs\bhs\bhs =  \gam(v) \vv{u}  (\exc{k}(v)-\hn(v))\iv_{\perp}.
\end{align*}

	\textit{\bul Regularity of the differential.} The following expressions are well-known \cite{Kato}. Let $v,h \in \spn{k}$ be potentials, close enough so that we can find a common relevant integration contour $\cC$, and $u \in \spp$ an element of the tangent spaces. We have
\begin{align*}
		& \bhs \pa{ \d_v \gam - \d_{h} \gam} u \\ 
		& = \inv{2 \pi i} \oint_{\cC}\hspace{-0.1cm}  \bigg( \hspace{-0.1cm}  \pa{z-\hn(v)}\iv \vv{u} \pa{z-\hn(v)}\iv \ssum (v-h)_i \pa{z-\hn(h)}\iv \\
		& \hs\hs\hs\hs\hs+ \pa{z-\hn(v)}\iv \bpa{\ssum (v-h)} \pa{z-\hn(h)}\iv \vv{u} \pa{z-\hn(h)}\iv \bigg)
\end{align*}
therefore $\nor{\pa{ \d_v \gam - \d_{h} \gam} u}{\sch_{\ii,1}} \le  c \nor{v-h}{\spp}  \nor{u}{\spp}$ and 
 \begin{align*}
\nor{ \d_v \gam - \d_{h} \gam}{\spp \ra \sch_{\ii,1}} \le c \nor{v-h}{\spp},
\end{align*}
and thus $v \mapsto \d_v \gam$ is locally Lipschitz.

By similar methods, we can show that $\gam$ is infinitely differentiable and that for any $m \in \N$, the $m^{\tx{th}}$ derivative is given by
\begin{align*}
	(\d_{v}^m \gam) (v_1,\dots,v_m) = \inv{2\pi i} \oint_{\cC}\bpa{z-\hn(v)}\iv \prod_{\ell=1}^m  \pa{\bpa{\ssum (v_{\ell})_i} \bpa{z-\hn(v)}\iv},
\end{align*}
with $\nor{(\d_{v}^m \gam) (v_1,\dots,v_m)}{\sch_{\ii,1}} \le c \prod_{ \ell =1}^m \nor{v_{\ell}}{\spp}$.

\textit{\bul $\tr (\d_v \gam) u = 0$.} This is because by definition the differential takes its values in the tangent space of the image space, but this can also be verified analytically.

\textit{\bul Injectivity of the differential.} Let $v\in \spn{k}$ and $u \in \spp$ be such that $\pa{\d_v \gam}u = 0$. We consider the representation \eqref{gameq} and let $\hn(v)-\exc{k}(v)$ act on the left, this yields $(1-\gam(v)) \vv{u} \gam(v) = 0$, that is $\vv{u} \p\ex{k}(v) = \p\ex{k}(v) \int u \ro_{\p\ex{k}(v)}$. By unique continuation (see \cite{Garrigue18,Garrigue19}), the nodal set of $\p\ex{k}(v)$ has zero measure, hence $\ssum u = \int u \ro_{\p\ex{k}(v)}$ and by integrating on $\seg{0,1}^{d(N-1)}$ we can conclude that $u$ is constant.

\textit{\bul Compactness of the differential.} Let us first show a lemma. We recall that a sequence of operators $L_n$ of $L^2(\R^n)$, such that $\norop{L_n} \le c$, converges strongly to 0 if $\nord{L_n f} \ra 0$ for any $f \in L^2(\R^n)$, and converges weakly to 0 if $\ps{g, L_n f} \ra 0$ for any $f, g \in L^2(\R^n)$.
\begin{lemma}\label{lemap}\tx{ }

	$(i)$ Let $L_n$ be a sequence of operators such that $\norop{L_n} \le c$, $L_n \wra 0$ weakly, and let $A$ and $B$ be two compact operators, then $\norop{A L_n B} \ra 0$. \smallskip

	$(ii)$ Let $L_n \df \del^{-\ud} u_n \del^{-\ud}$ with $u_n \wra 0$ in $\spp$ and $p$ as in \eqref{expot}. Then $\norop{L_n} \le c$ and $L_n \ra 0$ strongly.  \smallskip

	$(iii)$ Let $L_n$ be a sequence of operators of $L^2(\R^n)$ such that $\norop{L_n} \le c$ and $L_n \ra 0$ strongly, and let $A \in \sch_1(\R^n)$ be a self-adjoint trace-class operator. Then $\tr \bpa{L_n A L_n} \ra 0$.
\end{lemma}
\begin{proof}
$(i)$ The set of finite rank operators is dense in the set of compact operators so by an \apo{$\ep/2$} argument, we can assume that $A$ and $B$ have finite rank. So let us write them $A = \sum_{i=1}^m \ketbra{f_i}{g_i}$ and $B = \sum_{i=1}^m \ketbra{h_i}{u_i}$. We have
 \begin{align*}
	 \norop{A L_n B} &= \norop{\sum_{1 \le i,j \le m} \ps{g_i,L_n h_i} \ketbra{f_i}{u_i} } \\
	 & \le \sum_{1 \le i,j \le m} \ab{\ps{g_i,L_n h_i} } \nord{f_i} \nord{u_i},
 \end{align*}
and we conclude by letting $n \ra +\ii$, where $\ps{g_i,L_n h_i} \ra 0$. \\

	$(ii)$ As we showed in Lemma \ref{continj}, $\norop{L_n} \le c \nor{u_n}{\spp}$, with $c$ independent of $n$, hence $\norop{L_n}$ is bounded. By density of $\cC^{\ii}\ind{c}(\R^d)$ in $L^2(\R^d)$, we hence only need to show that $\nord{L_n f} \ra 0$ for any $f \in \cC^{\ii}\ind{c}(\R^d)$. So let $f \in \cC^{\ii}\ind{c}(\R^d)$, and take a function $\chi$ of $\R_+$, equal to 1 on $[0,1]$, vanishing on $[2,+\ii)$ and smooth and decreasing on $[1,2]$, and define the localization function $\chi_r(x) \df \chi \bpa{\ab{x}/r}$ on $\R^d$. We take $r$ large enough so that $\supp f \subset B_r$. We have
	 \begin{align*}
 \nord{L_n \chi_r^2 f} =  \nord{L_n \chi_r^2 \del^{-\ud} \del^{\ud} f}  
	 \end{align*}
	 and since $\del^{\ud} f \in L^2(\R^d)$, we only need to show that $L_n \chi_r^2 \del^{-\ud}$ converges strongly to 0. We will in fact prove that 
 \begin{align*}
	 \mylim{r \ra + \ii} \mylim{n \ra +\ii} \norop{L_n \chi_r^2 \del^{-\ud}} = 0.
 \end{align*}
	 Let us consider the decomposition
 \begin{align*}
	 & L_n \chi_r^2\del^{-\ud}  \\
	 & \hs\hs\hs\hs = \del^{-\ud} u_n \del^{-\ud} \chi_r \seg{\chi_r,\del^{-\ud}}  \\
	 & \bhs + \del^{-\ud} u_n \seg{\del^{-\ud},\chi_r} \del^{-\ud} \chi_r   \\
	 & \bhs + \seg{\del^{-\ud},\chi_r} u_n \del\iv \chi_r \\
	 & \bhs + \chi_r \del^{-\ud} u_n \del\iv \chi_r.
 \end{align*}
	We also have
 \begin{align*}
	 \seg{\del^{-\ud}, \chi_r} = -\del^{-\ud} \seg{\del^{\ud},\chi_r} \del^{-\ud},
 \end{align*}
	For $r$ large enough, we have $\norop{\seg{\del^{\ud}, \chi_r}} \le c / r$ for some constant $c$ independent of $r$ \cite[Lemma 1]{HaiLewSer05b}, hence
	\begin{align}\label{cui}
		\norop{ L_n \chi_r^2 \del^{-\ud}  - \chi_r \del^{-\ud} u_n \del\iv \chi_r } \le c /r,
 \end{align}
	where $c$ is independent of $n$ and $r$. Now we decompose
 \begin{align*}
	 & \chi_r \del^{-\ud} u_n \del\iv \chi_r \\
	 & \bhs = \chi_r \del^{-\f{\ep}{2}} \del^{-\f{1-\ep}{2}} \ab{u_n}^{\f{1-\ep}{2}} \\
	 & \bhs \bhs  \times \sgn(u_n) \ab{u_n}^{\f{1+\ep}{2}} \del^{-\f{1+\ep}{2}}\del^{-\f{1-\ep}{2}}  \chi_r.
 \end{align*}
	For $d \ge 3$, $p > d/2$, and for $\ep$ small enough we can still use the HLS inequality to prove that 
 \begin{align*}
  \norop{ \ab{u_n}^{\f{1\pm \ep}{2}} (-\Delta)^{-\f{1\pm \ep}{2}}} \le c_d \nor{u_n}{L^p}^{\f{1\pm \ep}{2}}.
 \end{align*}
For $d = 2$, we can still use the Kato-Seiler-Simon inequality, where we have to take $0 \sle \ep \sle \min(p-1,1)$, so $2p/(1\pm \ep) > 2$ and
 \begin{align*}
	 & \norop{\ab{u_n}^{\f{1\pm \ep}{2}} (-\Delta+1)^{-\f{1\pm \ep}{2}} } \\
	 & \bhs \le (2\pi)^{-\f{d(1 \pm \ep)}{p}} \nor{ \ab{u_n}^{\f{1\pm \ep}{2}} }{L^{\f{2p}{1\pm \ep}}} \nor{\bpa{\ab{x}^2+1}^{-\f{1\pm \ep}{2}}}{L^{\f{2p}{1\pm \ep}}} \\
	 & \bhs = c_{d,p,\ep}\nor{\bpa{\ab{x}^2+1}\iv}{L^p}^{\f{1\pm \ep}{2}}\nor{u_n}{L^{p}}^{\f{1 \pm \ep}{2}}.
 \end{align*}
	For $d = 1$, $p=1$ we also use the same argument. Hence 
	\begin{align}\label{lml}
\del^{-\f{1-\ep}{2}} u_n \del^{-\f{1+\ep}{2}}
 \end{align}
is bounded uniformly in $n$. Let us take $h$ and $g$ in the Schwartz space $\cS^{\ii}(\R^d)$. We have 
 \begin{multline*}
	 \ps{h, \del^{-\f{1-\ep}{2}} u_n \del^{-\f{1+\ep}{2}} g} \\
	 = \int u_n \pa{ \del^{-\f{1-\ep}{2}} h} \pa{ \del^{-\f{1+\ep}{2}} g},
 \end{multline*}
	and by regularity of $h$ and $g$, $\pa{\del^{-\f{1-\ep}{2}}  h} \pa{\del^{-\f{1+\ep}{2}} g}$ is in $(L^1 \cap L^{p'})(\R^d)$ so the above expression converges to 0 when $n \ra +\ii$. By density of $\cC^{\ii}(\R^d)$ in $L^2(\R^d)$, this shows that \eqref{lml} converges weakly to 0 when $n \ra +\ii$. Finally, since $\chi_r \del^{-\f{\ep}{2}}$ and $\del^{-\f{1-\ep}{2}}\chi_r$ are compact, then $\chi_r \del^{-\ud} u_n \del^{-1} \chi_r \ra 0$ strongly by applying Lemma~\ref{lemap} $(i)$. Considering \eqref{cui} again, by choosing $r$ large and then $n$ large, we can make $\norop{L_n \chi_r^2 \del^{-\ud}}$ arbitrarily small. \\

$(iii)$ We consider the representation $A = \sum_{i=1}^{\ii} \lambda_i \proj{f_i}$ where $(f_i)_i$ is an orthonormal familly of $L^2(\R^d)$ and $\sum_{i=1}^{\ii} \ab{\lambda_i} \sle +\ii$. Take $m \in \N$, we have
 \begin{align*}
\ab{\tr \bpa{L_n A L_n}} = \ab{\sum_{i=1}^{\ii} \lambda_i \nord{L_n f_i}^2} \le \sum_{i=1}^m \ab{\lambda_i} \nord{L_n f_i}^2 + c \sum_{i \ge m+1} \ab{\lambda_i}.
 \end{align*}
Let $\ep > 0$. By choosing $m$ large enough, we can have $c\sum_{i \ge m+1} \ab{\lambda_i} \le \ep/2$ and then, since $\nord{L_n f_i} \ra 0$, we can also choose $n$ large enough so that $\sum_{i=1}^m \ab{\lambda_i} \nord{L_n f_i}^2 \le \ep/2$.
 \end{proof}

We denote by $\psi$ a representative of $\p\ex{k}(v)$. Let $u_n \in \spp$ be such that $u_n \wra 0$ in $\spp$. We have
 \begin{align*}
 & \nor{ \bpa{\hn(v)-\exc{k}(v)}\iv_{\perp} \vv{(u_n)} \proj{\psi}}{\sch_{\ii,1}} \\ 
 & \bhs\bhs \bhs  = \nor{\psi}{H^1} \nor{ \bpa{\hn(v)-\exc{k}(v)}\iv_{\perp} \vv{(u_n)} \psi}{H^1}.
 \end{align*}
We want to show that the following quantity converges to zero,
\begin{align*}
 & \nor{ \bpa{\hn(v)-\exc{k}(v)}\iv_{\perp} \vv{(u_n)} \psi}{H^1}\\
	& \bhs \le \norop{ \del^{\ud}\bpa{\hn(v)-\exc{k}(v)}_{\perp}^{-1}\del^{\ud}} \\
 & \bhs\bhs\times \nord{\del^{-\ud}  \vv{(u_n)} \psi} \\
	& \bhs \le c_v \sum_{i=1}^N \nord{\del^{-\ud} u_n(x_i) \psi} \\
	& \bhs \le c_v \sum_{i=1}^N \nord{\deli^{-\ud} u_n(x_i) \psi}.
\end{align*}
We define $L_n \df \del^{-\ud} u_n \del^{-\ud}$ and notice that
 \begin{align*}
	 \nord{\deli^{-\ud} u_n(x_i) \psi}^2 = \inv{N} \tr_{\R^d} \pa{ L_n \del^{\ud} \gamma_{\psi} \del^{\ud} L_n},
 \end{align*}
 where $\gamma_{\psi}$ is the one-particle density matrix and $\del^{\ud} \gamma_{\psi} \del^{\ud} \in \sch_1$. By Lemma~\ref{lemap} $(ii)$, the operator $L_n$ converges strongly to 0 as an operator of $L^2(\R^d)$. Finally we apply Lemma~\ref{lemap} $(iii)$ to deduce that 
 \begin{align*}
 \nord{\deli^{-\ud} u_n(x_i) \psi}^2 \ra 0.
 \end{align*}

By the open mapping theorem, an operator cannot be compact and surjective, hence $\d_v \gam$ is not surjective as an operator from $L^p+L^{\ii}$ to $\sch_{1,1}$.
\end{proof}

\subsection{The potential-to-eigenstate map $\p\ex{k}$ } 

 \begin{proof}[Proof of Theorem \ref{proposs}]\tx{ }
	
	 \bul The properties $(i)$ and $(ii)$ are deduced from the composition $\p\ex{k} = \pt\iv \rond \gam$ and from Theorem \ref{propgam}. We only have to prove the expression \eqref{peq}. We remark that
 \begin{align*}
	 \pa{\d_v \gam} u & = \ketbra{\pa{\exc{k}(v)-\hn(v)}\iv_{\perp}   \vv{u} \p\ex{k}(v)}{\p\ex{k}(v)} \\
	 & \bhs\bhs + \ketbra{\p\ex{k}(v)}{\p\ex{k}(v) \vv{u}  \pa{\exc{k}(v)-\hn(v)}\iv_{\perp}} \\
	 & = \pa{\d_{\p\ex{k}(v)} \pt}\pa{\pa{\exc{k}(v)-\hn(v)}\iv_{\perp}  \vv{u} \p\ex{k}(v) }.
 \end{align*}
Now since $\pt\iv$ is $\cC^1$, we have
 \begin{align*}
	 \pa{\d_v \p\ex{k}}u & = \d_{v} \pa{\pt\iv \rond  \gam} u = \pa{\d_{\p\ex{k}(v)} \pt}\iv \rond  \pa{\d_v \gam} u \\
	 & = \pa{\exc{k}(v)-\hn(v)}\iv_{\perp}  \vv{u} \p\ex{k}(v).
 \end{align*}
We have
 \begin{align*}
	 \nor{\pa{\d_v \p\ex{k}}u}{H^1} & \le \nor{\del^{-\ud} \pa{\exc{k}(v)-\hn(v)}\iv_{\perp}\del^{-\ud}}{L^2}  \\
	 & \bhs\bhs \times \nor{\del^{\ud} \sqrt{\ssum \ab{u}}}{L^2} \nor{\sqrt{\ssum \ab{u}} \p\ex{k}(v)}{L^2} \\
	 & \le c_{v} \pa{\nor{u}{\spp} \int \ab{u} \ro_{\p\ex{k}(v)}}^{\ud}.
 \end{align*}

	 \bul $(iii)$ Let $v_n \in \sppe$ be a sequence which converges to 0 weakly. Let $\p_n$ be an approximate minimizer of $\exc{0}(v+v_n)$, that is
	 \begin{align}\label{ldeu}
	 \cE_{v+v_n}(\p_n) \le \exc{0}(v+v_n) + \f{1}{n},
 \end{align}
	and $\ro_n \df \ro_{\p_n}$.

 \bul If $d \ge 3$ and $p > d/2$, then we know that 
 \begin{align*}
	 \ab{v_n} \le c_{d,s} \nor{v_n}{(L^p+L^{\ii})(\Omega)} \pa{(-\Delta)^{1-s} + 1}
 \end{align*}
in the sense of quadratic forms, for some $s > 0$ depending on $p$. But $\nor{v_n}{(L^p+L^{\ii})(\Omega)}$ is bounded in $n$, and for any $\ep>0$, we have $(-\Delta)^{1-s} \le (1-s) \ep (-\Delta) + s \ep^{-1+1/s}$. We thus proved that for any $\ep > 0$ there is some $c_{\ep} \in \R$ independent of $n$ such that $\ab{v_n} \le \ep(-\Delta) + c_{\ep}$ in the sense of forms in $\Omega$. In dimensions $d \in \acs{1,2}$, the same holds under our assumptions on $p$.

We deduce that for any $\ep >0$, we have 
 \begin{align}\label{lun}
(1-\ep) \int \ab{\na \p}^2 - c_{\ep} \le \cE_{v_n+v}(\p)
\end{align}
uniformly in $\p$ and in $n$, for some $c_{\ep} \ge 0$.

\bul Next we prove that $\int v_n \ro_n \ra 0$. First, take some wavefunction $\Phi \in \wedge^N H^1(\Omega)$, we have
 \begin{align*}
 \exc{0}(v+v_n) \le \cE_v(\Phi) + \int v_n \ro_{\Phi} \le \cE_v(\Phi) + c_{d,N} \nor{v_n}{L^p+L^{\ii}} \nor{\sqrt{\ro_{\Phi}}}{H^1},
 \end{align*}
and since $\nor{v_n}{L^p+L^{\ii}}$ is bounded, then $\exc{0}(v+v_n)$ as well. Using it with \eqref{ldeu} and \eqref{lun}, we deduce that $\p_n$ is bounded in $H^1(\Omega)$ and $\p_n \wra \p_{\ii}$ weakly in $H^1(\Omega)$ for some $\p_{\ii} \in H^1(\Omega)$ and up to a subsequence. We have $\int \ab{\na \sqrt{\ro_n}}^2 \le \int \ab{\na\p_n}^2$ so $\sqrt{\ro_n}$ is bounded in $H^1(\Omega)$, hence there is some $\chi \ge 0$ in $H^1(\Omega)$ such that $\sqrt{\ro_n} \wra \chi$ weakly in $H^1(\Omega)$ hence strongly in $L^2(\Omega)$ locally. We define $\ro_{\ii} \df \chi^2$. 
Let $\ep > 0$, and let us decompose $\int v_n \ro_n$ into
\begin{align}\label{innn}
& \ab{\int_{\Omega} v_n \ro_n}  \le  \ab{\int_{B_r \cap \Omega} v_n \pa{\ro_n - \ro_{\ii}}} + \ab{\int_{B_r \cap \Omega} v_n \ro_{\ii}}+\ab{\int_{\Omega \backslash B_r} v_n \ro_n} \nonumber \\
& \hs\hs\hs\hs \le  \ab{\int_{B_r\cap \Omega}v_n\pa{\sqrt{\ro_n}-\sqrt{\ro_{\ii}}}\pa{\sqrt{\ro_n}+\sqrt{\ro_{\ii}}}} + \ab{\int_{B_r\cap \Omega} v_n \ro_{\ii}} \\
	& \bhs\bhs\bhs\bhs\bhs\bhs + \nor{v_n}{(\spp)(\Omega \backslash B_r)} \mysup{n \in \N} \nor{\sqrt{\ro_n}}{H^1} \nonumber.
 \end{align}
Also, the sequence $\nor{v_n}{L^p+L^{\ii}}$ is bounded. We take $r$ large enough so that $\Lambda \subset B_r$, and recall that $v_n \indic_{\Omega \backslash B_r} \ra 0$ strongly. Then we take $n$ large enough so that the last term in \eqref{innn} is smaller than $\ep$, which is possible since $v_n \indic_{\Omega \backslash \Lambda} \ra 0$. We also take $n$ large enough so that the second term in \eqref{innn} is smaller than $\ep$. As for the first term, we will need that for any functions $f,g,h$ in the appropriate spaces,
 \begin{align*}
\ab{ \int f g h} & \le \nor{f}{L^{p + \delta}} \nor{g}{L^{\f{2d}{d-2} - \eta(\delta)}} \nor{h}{L^{\f{2d}{d-2}}} \le \nor{f}{L^{p + \delta}} \nor{g}{W^{1-\lambda,\f{2d}{d-2\lambda}-\xi(\delta,\lambda)}} \nor{h}{H^1}.
 \end{align*}
 where $\eta(\delta) \df \f{16\delta}{(d-2)(d-2+2\delta(1+2/d))}$, $\xi(\delta,\lambda) \df  \f{\eta(\delta)(d-2)}{(d-2\lambda)\pa{1+\f{1-\lambda}{d}\pa{\f{2d}{d-2}-\eta(\delta)}}}$, this holds for any $\delta \ge 0$ and any $\lambda \in ]0,1[$ small enough, and we used the H\"older and Gagliardo-Nirenberg inequalities.
We apply it to our decomposition, where $\sqrt{\ro_n} \ra \sqrt{\ro_{\ii}}$ strongly in $W^{1-\lambda,\f{2d}{d-2\lambda}-\xi(\delta,\lambda)}(B_r\cap \Omega)$ by the theorem of Rellich-Kondrachov, where $\lambda > 0$ and $\delta >0$ are close to zero. This term is smaller than $\ep$ for $n$ large enough, and we conclude that $\int v_n \ro_n \ra 0$ for the considered subsequence. 

 
 \bul Since $w \ge 0$, then $\exc{0}$ is weakly upper semi-continuous for instance by a similar proof as for \cite[Theorem 3.6]{Lieb83b}, thus $\limsup \exc{0}(v+v_n) \le \exc{0}(v)$. Moreover, endowed with the geometric topology defined in \cite[Definition 2.1]{Lewin11}, the set of states on the Fock space with number of particles less than $N$ is compact as shown in \cite[Lemma 2.2]{Lewin11}. Let $\Gamma_{\ii} = G_0 \oplus \cdots \oplus G_N$ with $\tr \Gamma_{\ii} =1$ be a geometric limit, up to a further subsequence, of $\proj{\p_n}$. Since
\begin{align*}
\cE_{v+v_n}\pa{ \p_n } = \cE_v\pa{ \p_n } + \int v_n \ro_n \le \exc{0}(v+v_n) +\f{1}{n},
\end{align*}
then by weak semi-continuity of $\cE_v$ under geometric convergence \cite[Lemma 2.4]{Lewin11}, we have
\begin{align}\label{decf}
	\liminf \exc{0}(v+v_n) & \ge \cE_v(\Gamma_{\ii}) = \sum_{m=0}^N \tr H^m(v)G_m  \ge \sum_{m=0}^N E\ex{0}_m(v) \tr G_m  \nonumber \\
& \ge \exc{0}(v).
\end{align}
We used the HVZ theorem in the form of \cite[Theorem 3.1]{Lewin11} to deduce that $\excn{0}{M}(v)$ is decreasing in $M$ in the last inequality. We thus have $\exc{0}(v+v_n) \ra \exc{0}(v) = \cE_v(\Gamma_{\ii})$. We did not use the assumption $v \in \spfn{0}{N}$ yet, and repeating the same argument, we can also deduce that $E\ex{0}_m(v+v_n) \ra E\ex{0}_m(v)$ for any $m \in \acs{1,\dots,N}$. 

Since $\excn{0}{N-1}(v+v_n) \ra \excn{0}{N-1}(v)$, then
 \begin{align*}
	 \sign(v+v_n) = \excn{0}{N-1}(v+v_n) \ra \excn{0}{N-1}(v) = \sign(v) > \exc{0}(v),
\end{align*}
where we also used that $v \in \spfn{0}{N}$ and $\excn{0}{N-1}(v) = \sign(v)$ by the HVZ theorem. Hence for $n$ large enough, we have $\exc{0}(v+v_n) \sle \sign(v+v_n)$ and $v+v_n \in \spfn{0}{N}$. 

Since $v \in \spfn{0}{N}$ and by the HVZ theorem, we have $\exc{0}(v) \sle \excn{0}{N-1}(v)$ so since $E_m\ex{0}(v)$ strictly decreases in $m$, we have $\exc{0}(v) \sle \excn{0}{m}(v)$ for any $m \in \acs{0,\dots, N-1}$. By considering \eqref{decf} again, we have $\sum_{m=0}^N \excn{0}{m}(v) \tr G_m = \exc{0}(v)$ and can deduce that $\tr G_N = 1$. This yields $\proj{\p_n}$ converges geometrically to $\Gamma_{\ii}$, with $\Gamma_{\ii}$ being an operator of $\Ker \bpa{ \hn(v) - \exc{0}(v)}$. By \cite[Lemma~2.1]{Lewin11}, we deduce that $\proj{\p_n} \wra \Gamma_{\ii}$ in $\sch_{1,0}$, and since 
 \begin{align*}
	 \nor{\proj{\p_n}}{\sch_{1,0}} = 1 = \nor{\Gamma_{\ii}}{\sch_{1,0}},
 \end{align*}
then $\proj{\p_n} \ra \Gamma_{\ii}$ in $\sch_{1,0}$. Since $\p_n \wra \p_{\ii}$ weakly in $H^1(\R^d)$, then we also have $\tr \Gamma \pa{ \proj{\p_n}-\proj{\p_{\ii}}} \ra 0$ for any $\Gamma \in \sch_{1,0}$, and by uniqueness of the limit of $\proj{\p_n}$, we can conclude that $\Gamma_{\ii} = \proj{\p_{\ii}}$. By \eqref{mmm} and Corollary \ref{cc}, $\proj{\p_n} \ra \proj{\p_{\ii}}$ in $\sch_{1,0}$ up to a subsequence implies $\p_n \ra \p_{\ii}$ in $L^2$ up to a subsequence. Moreover, $\p_n$ converges to $\p_{\ii}$ weakly in $H^1$, and since the norm associated to $\cE_{v}$ is equivalent to the $H^1$ one and $\cE_v(\p_n) \ra \cE_v(\p_{\ii})$, then $\p_n \ra \p_{\ii}$ strongly in $H^1$ up to a subsequence. Finally, the same reasoning can be applied to any subsequence of $\p_n$, and we can conclude that 
 \begin{align*}
	P_{\Ker \bpa{\hn(v)-\exc{0}(v)}^{\perp}} \p_n \ra 0  
 \end{align*}
 in $H^1$, for the whole sequence. By continuity of the map $\p \mapsto \ro$, we also have $\ro_{\ii} = \ro_{\p_{\ii}}$. \\

\bul $(iv)$ When $\Omega$ is bounded, Lemma \ref{comps} implies that $v \mapsto \p\ex{0}(v)$ and its differential are compact, and $(\p\ex{0})\iv$ is discontinuous.
\end{proof}

We turn to the proof of Proposition \ref{propru}.
\begin{proof}[Proof of Proposition \ref{propru}] Here we will write $\psi$ for $\psi\ex{k}(v)$ and $\psi_n$ for $\psi\ex{k}(v_n)$, and define $V_n \df v_n - \exc{k}(v_n)/N$ and $V \df v - \exc{k}(v)/N$. We have Schrödinger's equations
\begin{align*}
\bpa{\ssum (V_n)_i} \psi_n  = \bpa{\Delta - \ssumd w_{ij}} \psi_n, \bhs \bpa{\ssum V_i} \psi  = \bpa{\Delta - \ssumd w_{ij}} \psi.
\end{align*}
	Since $p > \max(2d/3,2)$, then $w$ and $V_n$ are infinitesimally bounded by $(-\Delta)$ in the sense of operators with uniform constants, and therefore
\begin{align*}
	\nor{\ssum \pa{V_n-V}_i \psi}{L^2} & = \nor{ \bpa{-\Delta + \ssumd w_{ij} + \ssum (V_n)_i} (\psi_n - \psi)}{L^2} \\
 & \le c_{d,N} \pa{c_{w,d} + \nor{V_n}{L^p+L^{\ii}}} \nor{\psi_n - \psi}{H^2}.
\end{align*}
Since $V_n$ is bounded in $L^p+L^{\ii}$, then $\nor{\ssum (V_n-V)_i \psi}{L^2} \ra 0$. 
We also deduce that $\ssum (V_n-V)_i \psi \ra 0$ a.e. in $\Omega^N$ up to a subsequence. By unique continuation \cite{Garrigue19}, the nodal set
\begin{align*}
S \df \acs{x \in \Omega^N \bigst \psi(x) = 0}
\end{align*}
has zero measure in $\Omega^N$ and we deduce that 
\begin{align*}
\sum_{i=1}^N V_n(x_i) \ltend{n \ra +\ii}  \sum_{i=1}^N V(x_i),
\end{align*}
a.e. in $\Omega^N$, up to a subsequence. We can deduce that $V_n \ra V$ a.e. up to a subsequence by using Lemma \ref{lemale} provided at the end of this proof. Since $V_n-V$ is bounded, then $V_n \wra V$ weakly in $\spp$. We have
 \begin{multline*}
	 \nor{\ssum \pa{V_n-V}_i \psi}{L^2}^2 \\
	 = \int_{\Omega} (V_n-V)^2\ro_{\psi}  + 2 \int_{\Omega^2}  (V_n-V)(x)(V_n-V)(y) \rod_{\psi}(x,y)\d x \d y,
 \end{multline*}
	where $\rod_{\psi}(x,y) \df N(N-1)/2 \int \ab{\psi}^2 (x,y,x_3, \dots, x_N) \d x_3 \cdots \d x_N$ is the pair density of $\psi$. Since $V_n \wra V$ weakly, then 
 \begin{align*}
	 \int_{\Omega^2}  (V_n-V)(x)(V_n-V)(y) \rod_{\psi}(x,y)\d x \d y  \ra 0 
 \end{align*}
	and we conclude that $\int_{\Omega} (V_n-V)^2\ro_{\psi} \ra 0$.
\end{proof}

\begin{lemma}\label{lemale}
Let $v_n \in L^1\loc(\R^d)$. Then $v_n \ltend{n \ra +\ii} 0$ a.e. in $\R^d$ if and only if $\ssum v_n(x_i) \ltend{n \ra +\ii} 0$ a.e. in $\R^{dN}$. 
\end{lemma}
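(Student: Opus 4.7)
The forward direction follows from a short Fubini argument. If $E \subset \R^d$ is the null set outside which $v_n \ra 0$, then the exceptional set for $\sum_{i=1}^N v_n(x_i)$ is contained in $\bigcup_{i=1}^N \R^{d(i-1)} \times E \times \R^{d(N-i)}$, a finite union of slabs that are null in $\R^{dN}$ by Fubini. Outside this union every coordinate avoids $E$, so the sum tends to $0$.

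The converse is the substantive direction, and my plan is to first settle $N=2$ and then reduce the general case to it. For $N=2$, I would apply Fubini to the null exceptional set of $v_n(x_1)+v_n(x_2)\ra 0$ to obtain a null set $F \subset \R^d$ such that, for every $y \notin F$, the one-variable statement $v_n(\cdot)+v_n(y) \ra 0$ holds almost everywhere in $\R^d$. For any two points $y_1,y_2 \notin F$, choosing $x$ in the intersection of the two associated full-measure sets and subtracting yields $v_n(y_1)-v_n(y_2) \ra 0$. Then, for any fixed $y_0 \notin F$, I would select an $x$ that avoids $F$ and also lies in the full-measure set attached to $y_0$; this is possible because both bad sets are null. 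Along such an $x$, $v_n(x)+v_n(y_0)\ra 0$ and $v_n(x)-v_n(y_0)\ra 0$ hold simultaneously, and adding them gives $2v_n(x)\ra 0$, hence $v_n(y_0)\ra 0$. Since $y_0 \notin F$ was arbitrary, $v_n \ra 0$ a.e.

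For $N \ge 3$ the reduction is easy: by Fubini, almost every $(x_3,\dots,x_N)$ lies in the set where the slice of the exceptional null set is null. Fix one such tuple $(x_3^*,\dots,x_N^*)$, set the number sequence $c_n \df \sum_{i=3}^N v_n(x_i^*)$, and define $u_n \df v_n + c_n/2$. Then $u_n(x_1)+u_n(x_2) \ra 0$ for a.e.~$(x_1,x_2)$, so by the $N=2$ step, $u_n \ra 0$ a.e. It remains to show $c_n \ra 0$, and for that I would pick a single point $(x_1^0,\dots,x_N^0)$ lying outside the original exceptional null set in $\R^{dN}$ and with every coordinate in the a.e.-convergence set of $u_n$; both obstructions are null, so such a point exists. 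At this point $\sum_i v_n(x_i^0) \ra 0$ by hypothesis and $v_n(x_i^0)+c_n/2 \ra 0$ for each $i$, and summing the latter and subtracting the former yields $Nc_n/2 \ra 0$. Consequently $v_n = u_n - c_n/2 \ra 0$ a.e. The only real obstacle in this plan is purely bookkeeping: carefully combining the various Fubini slicings so that finite unions of null sets always leave a positive-measure set of evaluation points, which is automatic.
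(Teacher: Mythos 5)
Your proof is correct, and it organizes the argument differently from the paper. Both arguments rest on the same two ingredients — Fubini slicing of the exceptional null set and the evaluation of the hypothesis at generically chosen points so that linear combinations of the resulting limit relations isolate $v_n$ at a single point — but the decompositions differ. The paper works directly with all $N$ variables: it introduces the sets $L$, $L_x$, $L'$, $L'_{(x_1,\dots,x_{N-1})}$, picks reference tuples $(y_1,\dots,y_{N-1})$ and $(y_1',\dots,y_{N-1}')$, and cancels the sums to get $v_n(z)\to 0$ for a.e.\ $z$ in one pass. You instead first settle $N=2$ by the pairwise subtraction trick ($v_n(y_1)-v_n(y_2)\to 0$ for $y_1,y_2$ outside a null set $F$, then combine sum and difference at a generic partner point), and reduce $N\ge 3$ to this case by freezing a generic tuple $(x_3^*,\dots,x_N^*)$, introducing the constant $c_n=\sum_{i\ge 3}v_n(x_i^*)$ and the shifted sequence $u_n=v_n+c_n/2$, and finally proving $c_n\to 0$ by evaluating at one more generic point. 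What your route buys is modularity: each step only ever juggles two evaluation points at a time, so the bookkeeping of which full-measure sets must intersect is lighter; the price is the auxiliary sequence $u_n$ and the extra step showing $c_n\to 0$ (note $u_n$ remains measurable, indeed locally integrable, since $c_n$ is a constant, so the $N=2$ step applies to it without further comment). The paper's direct argument avoids any auxiliary sequence but requires the slightly more delicate simultaneous choice of the tuples $y$, $y'$ and the point $z$ in the intersections of the corresponding full-measure sets. Both are complete proofs of the same statement.
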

\begin{proof}
Let $S \subset \R^d$ be the set of $x$'s such that $v_n(x) \ltend{n \ra +\ii} 0$. Then for $(x_1,\dots,x_N) \in S^N$, we have $\ssum v_n(x_i) \ltend{n \ra +\ii} 0$, and $S^N$ has full measure in $\R^{dN}$.

For the converse statement, we define 
\begin{align*}
L \df \acs{x \in \R^{d} \bigst v_n(x) + \sum_{i=1}^{N-1} v_n(x_i) \us{n \ra +\ii}{\lra} 0 \tx{ a.e. in } (x_1,\dots,x_{N-1}) \in \R^{d(N-1)}},
\end{align*}
and for any $x \in L$ we define
\begin{align*}
L_x \df \acs{ (x_1,\dots,x_{N-1}) \in \R^{d(N-1)} \bigst v_n(x) + \sum_{i=1}^{N-1} v_n(x_i) \us{n \ra +\ii}{\lra} 0}.
\end{align*}
By the theorem of Fubini, $L$ has full measure in $\R^d$ and $L_x$ has full measure in $\R^{d(N-1)}$. We also define
\begin{align*}
L' \df \acs{ (x_1,\dots,x_{N-1}) \in \R^{d(N-1)} \bigst v_n(y) + \sum_{i=1}^{N-1} v_n(x_i) \us{n \ra +\ii}{\lra} 0 \tx{ a.e. } y \in \R^{d} }.
\end{align*}
and for $(x_1,\dots,x_{N-1}) \in L'$,
\begin{align*}
L'_{(x_1,\dots,x_{N-1})} \df \acs{ y \in \R^{d} \bigst v_n(y) + \sum_{i=1}^{N-1} v_n(x_i) \us{n \ra +\ii}{\lra} 0}.
\end{align*}
$L'$ has full measure in $\R^{d(N-1)}$ and $L_{(x_1,\dots,x_{N-1})}$ has full measure in $\R^{d}$. Now let $y_1,\dots,y_{N-1} \in L$ be such that $(y_1,\dots,y_{N-1}) \in L'$, and take $y_1',\dots,y_{N-1}' \in L'_{(y_1,\dots,y_{N-1})} \cap L$. We have 
\begin{align*}
v_n(y'_i) + \sum_{k=1}^{N-1} v_n(y_k)  \us{n \ra +\ii}{\lra} 0
\end{align*}
for any $i \in \acs{1,\dots,N-1}$, therefore
\begin{align*}
\sum_{k=1}^{N-1} v_n(y'_k) + (N-1)\sum_{k=1}^{N-1} v_n(y_k)\us{n \ra +\ii}{\lra} 0.
\end{align*}
Now let $z \in L'_{(y_1,\dots,y_{N-1})} \cap L'_{(y'_1,\dots,y'_{N-1})}$, we have
\begin{align*}
v_n(z) + \sum_{k=1}^{N-1} v_n(y'_k) \us{n \ra +\ii}{\lra} 0,
\end{align*}
and therefore
 \begin{align*}
 \inv{N-1} v_n(z) - \sum_{k=1}^{N-1} v_n(y_k)\us{n \ra +\ii}{\lra} 0,
 \end{align*}
 Summing it with,
 \begin{align*}
v_n(z)  + \sum_{k=1}^{N-1} v_n(y_k) \us{n \ra +\ii}{\lra} 0,
 \end{align*}
which holds because $z \in L'_{(y_1,\dots,y_{N-1})}$, we obtain $v_n(z) \ltend{n \ra +\ii} 0$.
Since $L'_{(y_1,\dots,y_{N-1})} \cap L'_{(y'_1,\dots,y'_{N-1})}$ has full measure in $\R^d$, then $v_n(z) \ltend{n \ra +\ii} 0$ a.e. in $z \in \R^d$.
\end{proof}

\subsection{The potential-to-density map $\ro\ex{k}$}

Properties of $\p\ex{k}$ can be lifted to properties on $\ro$. Here we define $\ro\ex{k}(v) \df \ro_{\p\ex{k}(v)} = \rop \rond \p\ex{k}(v)$, so $\ro = \ro\ex{0}$. 
\begin{theorem}[Properties of $\ro\ex{k}$]\label{maa} Take $\spa = \spp$ with $p$ as in \eqref{expot}, and $w \in \spp$.
	
	\bul \textit{(i - Smoothness).} The map $\ro\ex{k}$ is $\cC^{\ii}$ from $\spn{k}$ to $W^{1,1}\cap \acs{\int \cdot = N}$, and when $p > \max(2d/3,2)$ and $\spa = (\spp)/\sim$, $\ro\ex{0}$ is injective. \smallskip

\bul \textit{(ii - Compactness of the differential).} Its differential, evaluated at some $v \in \spn{k}$, is given by
\begin{align*}
	& \bpa{\d_v \ro\ex{k}} u = -2N \hspace{-0.1cm}\int_{\R^{d(N-1)}}\hspace{-0.9cm} \d x_2 \cdots \d x_N \p\ex{k}(v) \bpa{\hn(v)-\exc{k}(v)}\iv_{\perp} \vv{u} \p\ex{k}(v).
\end{align*}
	For all $v \in \spn{k}$, $\d_v \ro\ex{k}$ is compact from $\spp$ to $W^{1,1}$, not surjective, and moreover,
\begin{align*}
	\nor{\bpa{\d_v \ro\ex{k}}u}{W^{1,1}}^2 \le c_{v} \nor{u}{\spp} \int \ab{u} \ro\ex{k}(v).
\end{align*}
	When $p > \max(2d/3,2)$, $\d_v \ro\ex{0} u = 0$ implies that $u$ is constant. \smallskip

	\bul \textit{(iii - Local weak-strong continuity).} With the same notations as in Theorem \ref{proposs}, we have $\sqrt{\ro\ex{0}(v_n)} \ra \sqrt{\ro\ex{0}(v)}$ strongly in $H^1(\R^d)$ in addition, where we assumed that $v,v_n \in \spn{0}$.
\end{theorem}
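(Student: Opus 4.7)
The strategy is to factor $\ro\ex{k} = \rop \circ \p\ex{k}$ and transport, via the chain rule, the properties of $\p\ex{k}$ from Theorem \ref{proposs} through the properties of $\rop$ from Lemma \ref{smoothrop}.

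\textbf{Part (i).} Both $\p\ex{k} : \spn{k} \to H^1\ind{p}$ (by Theorem \ref{proposs}(i)) and $\rop : H^1\ind{p} \to W^{1,1} \cap \{\int \cdot = N\}$ (by Lemma \ref{smoothrop}) are $\cC^{\ii}$, so the composition is $\cC^{\ii}$. Injectivity of $\ro\ex{0}$ modulo constants when $p > \max(2d/3,2)$ is precisely the Hohenberg-Kohn theorem proved in \cite{HohKoh64,Lieb83b,Garrigue19} for this range of $p$.

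\textbf{Part (ii).} The chain rule yields $(\d_v\ro\ex{k}) u = \bpa{\d_{\p\ex{k}(v)}\rop} \rond \bpa{\d_v \p\ex{k}} u$, and inserting the expressions provided by Theorem \ref{proposs}(ii) and the formula for $\d\rop$ from Lemma \ref{smoothrop} produces the explicit representation in the statement. For the norm estimate, I would chain the bound $\|\d_\p\rop \cdot \phi\|_{W^{1,1}} \le c_d \nor{\p}{H^1} \nor{\phi}{H^1}$ from Lemma \ref{smoothrop} with $\|\d_v\p\ex{k} u\|_{H^1}^2 \le c_v \|u\|_\spa \int|u|\ro_{\p\ex{k}(v)}$ from Theorem \ref{proposs}(ii). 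Compactness of $\d_v\ro\ex{k}$ follows from the fact that $\d_v\p\ex{k} : \spp \to H^1$ is already compact (Theorem \ref{proposs}(ii)) and $\d_{\p(v)}\rop : H^1 \to W^{1,1}$ is continuous, so the composition is compact; non-surjectivity then comes from the open mapping theorem, exactly as in the proof of Theorem \ref{propgam}.

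For the statement that $(\d_v\ro\ex{0}) u = 0$ forces $u$ constant, the key trick is to pair with $u$ itself. Writing $U := \sum_{i=1}^N u(x_i)$, the explicit formula and self-adjointness of $\bpa{\hn(v)-\exc{0}(v)}\iv_{\perp}$ give
\begin{align*}
\int_\Omega u \,(\d_v\ro\ex{0}) u = -2\bps{U\p\ex{0}(v),\bpa{\hn(v)-\exc{0}(v)}\iv_{\perp} U\p\ex{0}(v)} \le 0,
\end{align*}
with equality only when $U\p\ex{0}(v) \in \C\p\ex{0}(v)$. Hence $U\p\ex{0}(v) = \bpa{\int u\ro}\p\ex{0}(v)$ pointwise, and unique continuation \cite{Garrigue19} (valid thanks to $p > \max(2d/3,2)$), which says the nodal set of $\p\ex{0}(v)$ has zero measure, gives $\sum_i u(x_i) = \int u\ro$ a.e. on $\Omega^N$; integrating out all but one variable forces $u$ constant, as in the closing step of the proof of Theorem \ref{propgam}.

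\textbf{Part (iii).} Under the hypotheses and since $v \in \spn{0}$ makes $\Ker(\hn(v)-\exc{0}(v)) = \C\p\ex{0}(v)$ one-dimensional, Theorem \ref{proposs}(iii) applied to the exact minimizers $\p\ex{0}(v_n)$ gives $\p\ex{0}(v_n) \to \p\ex{0}(v)$ strongly in $H^1(\Omega^N)$ up to a phase (because $\|P_{\Ker^\perp}\p_n\|_{L^2}^2 = 1-|\ps{\p\ex{0}(v),\p_n}|^2 \to 0$). Set $\chi_n := \sqrt{\ro_{\p\ex{0}(v_n)}}$ and $\chi := \sqrt{\ro_{\p\ex{0}(v)}}$. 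Lemma \ref{smoothrop} gives $W^{1,1}$-convergence of $\ro_n \to \ro$, and the elementary inequality $(\sqrt a -\sqrt b)^2 \le |a-b|$ upgrades $L^1$-convergence of $\ro_n$ to $L^2$-convergence of $\chi_n$. The Hoffmann-Ostenhof inequality $\nor{\na\chi_n}{L^2}^2 \le \nor{\na\p_n}{L^2}^2$ bounds $\chi_n$ in $H^1$, so $\chi_n \wra \chi$ weakly in $H^1$. The main remaining obstacle is to show $\nor{\na\chi_n}{L^2} \to \nor{\na\chi}{L^2}$. I would handle this via the identity
\begin{align*}
\int|\na\p_n|^2 = \int|\na\chi_n|^2 + T\ind{int}(\p_n), \qquad T\ind{int}(\p_n) \ge 0,
\end{align*}
obtained from the factorization $\p_n = \chi_n(x_1) \tilde\p_n$ on $\{\chi_n > 0\}$ together with antisymmetry. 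Weak lower semicontinuity gives $\liminf T\ind{int}(\p_n) \ge T\ind{int}(\p)$ and $\liminf \int|\na\chi_n|^2 \ge \int|\na\chi|^2$, while strong $H^1$ convergence $\p_n \to \p$ gives $\int|\na\p_n|^2 \to \int|\na\p|^2$, forcing both liminfs to be equalities. Strong $H^1$ convergence of $\chi_n$ then follows from weak convergence plus norm convergence, completing the proof.
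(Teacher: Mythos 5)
Parts (i) and (ii) of your plan are correct and essentially reproduce the paper's proof: the paper also factors $\ro\ex{k}=\rop\rond\p\ex{k}$, gets the formula and the $W^{1,1}$ bound from the chain rule together with Lemma \ref{smoothrop}, obtains compactness by composing the compact $\d_v\p\ex{k}$ with the bounded $\d_{\p}\rop$, gets non-surjectivity from the open mapping theorem, and proves that $\bpa{\d_v\ro\ex{0}}u=0$ forces $u$ constant exactly by your pairing argument (this is the quadratic form \eqref{qff}), positivity of the restricted resolvent, and unique continuation.

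The genuine gap is in part (iii), at the sentence \emph{``Weak lower semicontinuity gives $\liminf T\ind{int}(\p_n)\ge T\ind{int}(\p)$''}. The functional $T\ind{int}(\p)=\int\ab{\na\p}^2-\int\ab{\na\sqrt{\ro_{\p}}}^2$ is a difference of two weakly lower semicontinuous functionals, so no general semicontinuity principle applies to it; and in your setting the claim is circular. Indeed, since you already have $\p_n\ra\p$ strongly in $H^1$, you know $\int\ab{\na\p_n}^2\ra\int\ab{\na\p}^2$, hence $\liminf T\ind{int}(\p_n)\ge T\ind{int}(\p)$ holds if and only if $\limsup\int\ab{\na\chi_n}^2\le\int\ab{\na\chi}^2$, which combined with the weak lower semicontinuity you also invoke is precisely the norm convergence you are trying to prove. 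Moreover, writing $T\ind{int}(\p)=\inf_{b} N\int\ab{\pa{\na_1-b(x_1)}\p}^2$ (infimum over real vector fields $b$, the optimal $b$ being $\na\chi/\chi$) shows that along strongly $H^1$-convergent sequences $T\ind{int}$ is automatically \emph{upper} semicontinuous, i.e. the inequality opposite to the one you assert is the one that comes for free, so this step cannot be waved through.

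What closes part (iii) is a direct proof that $\na\sqrt{\ro_n}\ra\na\sqrt{\ro}$ in $L^2$, i.e. that $\p\mapsto\sqrt{\ro_{\p}}$ is continuous from $H^1$ to $H^1$. From $\p_n\ra\p$ in $H^1$ and Lemma \ref{smoothrop} you get $\ro_n\ra\ro$ in $W^{1,1}$, hence $\ro_n\ra\ro$ and $\na\ro_n\ra\na\ro$ a.e.\ up to a subsequence; since $\ro>0$ a.e.\ by unique continuation, $\na\sqrt{\ro_n}=\na\ro_n/(2\sqrt{\ro_n})\ra\na\sqrt{\ro}$ a.e.; the pointwise Cauchy--Schwarz bound $\ab{\na\sqrt{\ro_n}(x)}^2\le N\int\ab{\na_{1}\p_n(x,x_2,\dots,x_N)}^2\d x_2\cdots\d x_N=:g_n(x)$, with $g_n\ra g$ in $L^1(\R^d)$ because $\p_n\ra\p$ in $H^1$, then yields $\na\sqrt{\ro_n}\ra\na\sqrt{\ro}$ in $L^2$ by generalized dominated convergence, and uniqueness of the limit removes the subsequence extraction. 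With this replacement your part (iii) is complete; the preliminary reduction, using Theorem \ref{proposs}(iii) and the non-degeneracy of $v\in\spn{0}$ to get $\p\ex{0}(v_n)\ra\p\ex{0}(v)$ in $H^1$ up to phases, is fine.
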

	For a non degenerate $v$, an expression of the associated quadratic form is
 \begin{align}\label{qff}
	 \ps{ u, \bpa{\d_v \ro\ex{k} } u} &  = - 2 \nord{ \bpa{\hn(v)-\exc{k}(v)}^{-\ud}_{\perp,+}  \vv{u} \p\ex{k}(v) }^2 \\
	 & \bhs \bhs + 2 \nord{ \bpa{\hn(v)-\exc{k}(v)}^{-\ud}_{\perp,-}  \vv{u} \p\ex{k}(v) }^2, \nonumber
 \end{align}
where $A_{\pm}$ denote the positive/negative parts of the self-adjoint operator $A$. The spectrum of $\d_v \ro\ex{k}$ is important to study, for instance if there is $u \in \Ker \d_v \ro\ex{k}$, $\ro\ex{k}$ will not change as we move on the direction $u$, and hence it gives the beginning of a branch of potentials having the same density. This was used in \cite{GauBur04} to find numerical counterexamples to the Hohenberg-Kohn theorem for excited states. If $v$ is degenerate, the same problem reduces to the research of directions $u$ such that $\delta_v \ro\ex{k} (u) = 0$.

\begin{proof}[Proof of Theorem \ref{maa}]\tx{ }

	$(i)$ The decomposition $\ro\ex{k} = \rop \rond \p\ex{k}$ is smooth because $\rop$ and $\p\ex{k}$ are so. 
	
	$(ii)$ \bul When $k=0$, the differential is injective because if $\pa{\d_v \ro\ex{0}} u = 0$, then using \eqref{qff} we have
 \begin{align*}
	 \bpa{\hn(v)-\exc{0}(v)}^{-\ud}_{\perp}\vv{u} \p\ex{0}(v) = 0
 \end{align*}
so applying $\bpa{\hn(v)-\exc{0}(v)}^{\ud}_{\perp}$ yields
 \begin{align*}
	P^{\perp}_{\p\ex{0}(v)}  \vv{u} \p\ex{0}(v) = 0.
 \end{align*}
Hence $\vv{u} \p\ex{0}(v) = \alpha \p\ex{0}(v)$ for some constant $\alpha \in \R$, and finally by unique continuation \cite{Garrigue19}, we deduce that $\ssum u_i = \alpha$ and then $u$ is constant.

	\bul The operator $\d_v \ro\ex{k}$ cannot be simultaneously compact, surjective and continuous, by the open mapping theorem. The formula for the differential follows from \eqref{peq} and Lemma \ref{smoothrop}. The bounds follow from Theorem \ref{proposs}, and by the smoothness of $\rop$ implying that $\nor{\d_{\p} \rop}{H^\ell \ra W^{\ell,1}}$ is bounded for any $\ell \in \N$. 

	We remark that for any $u \in L^p+L^{\ii}$ that is not constant, $\ps{u, \bpa{\d_v \ro\ex{0}} u} \sle 0$, hence $\d_v \ro\ex{0} \sle 0$ in the sense of forms. Another way of seeing it is by considering the inequality $\int (v-u)(\ro\ex{0}(v)-\ro\ex{0}(u)) \sle 0$ for any potentials $v,u$ such that $v-u \neq 0$, presented in \cite[Section 2.3]{Garrigue19b}. This also implies $\ps{u, \bpa{\d_v \ro\ex{0}} u} \sle 0$ for any non constant potential.
\end{proof}

The fact that $\im \d_v \ro\ex{k}$ is probably dense in $W^{1,1}$ could suggest to prove a local surjectivity result using \cite[Theorem 2.5.9]{AbrMarRat12} or \cite{RayWal82}. Unfortunately, the compactness of $\d_v \ro\ex{k}$ prevents us from doing so.

\begin{proof}[Proof of Corollary \ref{smallcor}] In this particular case, $\spn{0} = \spp = L^p$. By Theorem \ref{proposs} and Theorem \ref{maa}, $\ro\ex{0}$ is weak-strong continuous. We conclude by applying Lemma \ref{comps} $(iii)$.
 \end{proof}

\subsection{The potential-to-ground energy map}
Finally, the regularity of $v \mapsto \p\ex{k}(v)$ carries to $v \mapsto \exc{k}(v)$.

\begin{proof}[Proof of Corollary \ref{coco}]\tx{ }

	\bul The energy is weakly upper-semicontinuous by the same proof as for the weak lower-semicontinuity of the Lieb functional \cite[Theorem 3.6]{Lieb83b}. It is Lipschitz continuous and concave by \cite[Theorem 3.1]{Lieb83b}.

	\bul We can decompose $\exc{k}(v) = \ps{\p\ex{k}(v), \hn(0) \p\ex{k}(v)} + \int v \ro\ex{k}(v)$, where $v \mapsto \int v \ro\ex{k}(v)$ is $\cC^{\ii}$ because $v \mapsto \ro\ex{k}(v)$ is so, and $(\vp,\phi) \mapsto \ps{\vp,\hn(0)\phi}$ is bilinear so $v \mapsto \exc{0}(v)$ is $\cC^{\ii}$.

	\bul As for the differential, we start by following similar arguments as in \cite[Theorem II.16]{LieSim77b}. By the second form of \eqref{mlm}, for any $v,u \in \spn{k}$ we have $\exc{k}(v) \le \cE_v\pa{\p\ex{k}(u)}$, hence
 \begin{align*}
	 & \bhs \exc{k}(v+u) - \exc{k}(v) \le \cE_{v+u}\bpa{ \p\ex{k}(v)} - \cE_v\bpa{ \p\ex{k}(v) } = \int u \ro\ex{k}(v), \\
	 & \int u  \ro\ex{k}(v+u) = \cE_{v+u}\bpa{ \p\ex{k}(v+u)} - \cE_v\bpa{ \p\ex{k}(v+u) } \\
	 & \bhs\bhs\bhs\bhs\bhs\bhs\bhs\bhs \le \exc{k}(v+u) - \exc{k}(v),
 \end{align*}
hence
 \begin{align*}
	 \int u\bpa{ \ro\ex{k}(v+u)-\ro\ex{k}(v)} \le \exc{k}(v+u) - \exc{k}(v) - \int u \ro\ex{k}(v) \le 0.
 \end{align*}
By the Gagliardo-Nirenberg inequality, if $q \in \seg{1,d/(d-1)}$ (with $d/(d-1) \df +\ii$ if $d=1$), then for any $f \in W^{1,1}(\R^d)$,
 \begin{align*}
 \nor{f}{L^{q}} \le c \nor{\na f}{L^1}^{d\pa{1-\inv{q}}} \nor{f}{L^1}^{1-d\pa{1-\inv{q}}}.
 \end{align*}
	Take $q \df p/(p-1) \in \seg{1,d/(d-1)}$. Since $\ro\ex{k}$ is $\cC^{\ii}$, we have
 \begin{align*}
	 & \ab{ \exc{k}(v+u) - \exc{k}(v) - \int u \ro\ex{k}(v) } \le  \ab{ \int u\bpa{ \ro\ex{k}(v+u)-\ro\ex{k}(v)}} \\
	 & \bhs \le c \nor{u}{\spp} \nor{\ro\ex{k}(v+u)-\ro\ex{k}(v)}{L^1 \cap L^{q}} \\
	 & \bhs\le c \nor{u}{\spp} \pa{\nor{\ro\ex{k}(v+u)-\ro\ex{k}(v)}{L^1} + \nor{\ro\ex{k}(v+u)-\ro\ex{k}(v)}{L^{q}}} \\
 & \bhs\le c \nor{u}{\spp}^{1+ \min\pa{1, d\pa{1-\inv{q}}}},
 \end{align*}
and $q > 1$ so $1+ \min\pa{1, d\pa{1-\inv{q}}} > 1$ and this proves the existence of the differential. 

	\bul We show that $\exc{0}(v)$ is strictly increasing on $\spf{0}$. Take $u \in \spf{0}$, $v \in \spa$ with $v \le u$, and $v \sle u$ on a set of positive measure. Take a ground state $\p_u$ of $\hn(u)$. By unique continuation \cite[Remark 1.6]{Garrigue19}, the nodal set of $\ro_{\p_u}$ has zero volume, hence $\ab{ \acs{v \ro_{\p_u} \sle u \ro_{\p_u}}} >0$ and 
 \begin{align*}
	 \exc{0}(v) & \le \cE_0\bpa{\p_u} + \int v \ro_{\p_u} \sle \cE_0\bpa{\p_u} + \int u \ro_{\p_u} = \exc{0}(u).  
 \end{align*}

\bul Eventually, we prove by contradiction that $\exc{0}$ is strictly concave on $\spf{0}$. Let $v,u \in \spf{0}$, we start from the point $u$ and look at the (half line) direction $v-u$. By using the concavity of $\exc{0}$ and formula \eqref{degdeg}, we have 
\begin{align*}
	\exc{0}(v) - \exc{0}(u) & \le {^+}\delta_u \exc{0} (v-u) = \myinf{\p \in \Ker \pa{\hn(u) - \exc{0}(u)} \\ \int \ab{\p}^2 = 1} \int \ro_{\p} (v-u).
\end{align*}
The minimizing set in the right hand side of the previous inequality is compact, let us denote by $\p_{u,v}$ one of the minimizers. This yields
\begin{align*}
\exc{0}(v) - \exc{0}(u) \le \cE_v\bpa{\p_{u,v}} - \cE_u\bpa{\p_{u,v}} = \cE_v\bpa{\p_{u,v}} - \exc{0}(u).
\end{align*}
Let us assume that we have equality above, then $\exc{0}(v) = \cE_v(\p_{u,v})$. The following is the same argument as the second part of the Hohenberg-Kohn theorem \cite{HohKoh64}, as presented in \cite[Proof of Theorem 2.1]{Garrigue19b} for instance. We know that $\p_{u,v}$ is a ground state for $\hn(v)$, hence it satisfies its Schr\"odinger's equation $\hn(v) \p_{u,v} = \exc{0}(v) \p_{u,v}$. Substracting with $\p_{u,v}$'s own Schr\"odinger's equation, we obtain $\pa{\exc{0}(u) - \exc{0}(v) +  \ssum (v-u)_i} \p_{u,v} = 0$, and by strong unique continuation \cite{Garrigue18,Garrigue19}, that $v = u + \bpa{\exc{0}(v) - \exc{0}(u)}/N$.
\end{proof}

 \section{Proofs of Theorem \ref{helf} and Corollary~\ref{propcandif}}

 \begin{proof}[Proof of Theorem \ref{helf}]\tx{ }
	 
	 \bul We consider the map $\alpha \mapsto \hn(v+ \alpha u)$. Our starting point is \cite[Theorem 1.4.4, Corollary 1.4.5]{Simon15}, stating that close to $\alpha = 0$, the singularities arising from degeneracies are removable. In the given reference this is stated for $\cH = L^2(\R^d)$ but this applies for any separable Hilbert space. It justifies the existence of $\dim \cD\ex{k}(v)$ maps $E_i : \R \ra \R$, $\phi_i : \R \ra \cH$, $i \in \acs{1, \dots \dim \cD\ex{k}(v)}$, analytic in a neighborhood of $0$, $E_i$ being the eigenvalues of $\hn(v)$ such that $E_i(0) = \exc{k}(v)$, and their associated orthonormal eigenfunctions $\phi_i$. By analyticity, $E_i$ and $\phi_i$ can be expressed in the so-called Rayleigh-Schr\"odinger series, which coefficients are the derivatives $(n!)\iv \pa{{^+}\d^n E_i / \d \alpha^n}(0)$ and $(n!)\iv \pa{{^+}\d^n \phi_i / \d \alpha^n} (0)$. The functionals are hence infinitely Dini differentiable.

	 \bul From degenerate perturbation theory, one can deduce the formulas \eqref{degdeg}, the $\dim \cD\ex{k}(v) \times\dim \cD\ex{k}(v)$ matrix having the information of the first order is
 \begin{align*}
	 P_{\p\ex{k}(v)} \vv{u} P_{\p\ex{k}(v)},
 \end{align*}
	 see for instance \cite{Hir69}. In particular, the right eigenbasis $\phi_i(0)$ making $\alpha \mapsto \phi_i(\alpha)$ analytic is given by an eigenbasis of the above matrix. The eigenvalues give the first derivatives of the energy. 

	 A priori we had to work in the complex sphere of normalized complex eigenstates, but since the potentials are real and there is no magnetic field, we can choose real eigenstates, and the optimization over complex $\lambda$'s can absorb the complex factors of the first optimization set. However, we cannot further simplify by optimizating over real $\lambda$'s, because this would decrease the optimization set.

\bul The map $u \mapsto {^+}\delta_v \exc{0}(u)$ is concave because this is a minimum over a familly of linear functionals, it is weakly upper semi-continuous by Mazur's theorem and because the sets $\acs{u \st ^+\delta_v \exc{0} (u) \ge \alpha}$ are closed. See for instance \cite[Theorem 3.6]{Lieb83b} for more details on this argument.

\bul We prove \eqref{coucou}. In this case we minimize over the Bloch sphere projective space $P \vect\pa{\p_1,\p_2}$, we have
\begin{align*}
	{^+}\delta_v \exc{0}(u) = \mymin{a, b \in \C \\ \int \ab{a \p_1 + b \p_2}^2 = 1 } \int u \ro_{a \p_1 + b \p_2},
\end{align*}
	 and the constraint on $a,b$ reduces to $\ab{a}^2 + \ab{b}^2 = 1$. We can take the parametrization $a = \pa{\cos t} e^{i \eta}$, $b = \pa{ \sin t} e^{i(\eta + \theta)}$ and $\p_1,\p_2$ are real. We define $A \df \ud \int u \pa{\ro_{\p_1}-\ro_{\p_2}}, B \df \ps{\p_1,\vv{u}\p_2}$, and have
 \begin{align*}
 & ^+\delta_v \exc{0}(u) \\
 & \hs\hs\hs = \mymin{t,\theta \in [0,2 \pi] } (\cos t)^2 \int u \ro_{\p_1} + (\sin t)^2 \int u \ro_{\p_2} +  \ps{\p_1,\vv{u}\p_2} \cos \theta \sin(2t) \\
 &\hs\hs\hs  = \ud  \int u \pa{\ro_{\p_1} + \ro_{\p_2}} + \mymin{t,\theta \in [0,2 \pi] } A \cos t  + B \cos \theta \sin t  .
 \end{align*}
 Optimizing over $t$ yields the optimal value $t^* \in \pi \N + \arctan\pa{B (\cos \theta)/A}$ and using the classical formula for $\cos \arctan$ and $\sin \arctan$, we get
 \begin{align*}
A \cos t^*  + B \cos \theta \sin t^* =\pm \f{A^2 + B^2 (\cos \theta)^2}{A \sqrt{1+ (B (\cos \theta)/A)^2}} = \pm \sqrt{ A^2 + (\cos \theta)^2 B^2}.
 \end{align*}
Finally optimizing over $\theta$ gives \eqref{coucou}. We could also have computed the eigenvalues from
 \begin{align*}
P_{\p\ex{k}(v)} \vv{u} P_{\p\ex{k}(v)}	 = \mat{ \int u \ro_{\p_1} & \int \p_1 \p_2 \vv{u} \\ \int \p_1 \p_2 \vv{u} & \int u \ro_{\p_2}},
 \end{align*}
but this would not have given us the rotation enabling to compute the eigenvectors from the initial vectors.
  \end{proof}

\begin{remark}
For higher derivatives, we can write such variational formulas. For instance by defining the resolvent $K \df (\hn(v)-\exc{0}(v))\iv_{\perp}$, we have
 \begin{align*}
	 {^+}\delta^2_v \exc{0} (u) & = -\myinf{\p\tx{ minimizes }{^+}\delta_v \exc{0}(u)}  \ps{\p, \vv{u} K  \vv{u} \p}, \\
	 {^+}\delta^3_v \exc{0} (u) & = \myinf{\p\tx{ minimizes }{^+}\delta^2_v \exc{0}(u)}  \ps{\p, \vv{u} K  \vv{u} K\vv{u} \p}.
 \end{align*}
\end{remark}

\begin{proof}[Proof of Corollary \ref{propcandif}]\tx{ }
Our assumptions on $k$ enable to write $\exc{k}(v)$ as a minimum or as a maximum, but without involving min-max formula. In this proof we assume that the formula is given by a minimum, as is the case when $k=0$ or when $\exc{k-1}(v) \sle \exc{k}(v)$, but the case $\exc{k}(v) \sle \exc{k+1}(v)$ is similar.
	
\bul $i)$ Take $\lambda \in \R$, we have
 \begin{align*}
	 & {^+}\delta_v \exc{k}(\lambda u) - \lambda \pa{{^+}\delta_v \exc{k}}(u) \\
	 & \bhs = \pm \lambda \times \left\{
\begin{array}{llr}
	0 & \tx{if } \lambda > 0, \\
	\mysup{\p \in \cD\ex{k}(v)} \int u \ro_{\p} - \myinf{\p \in \cD\ex{k}(v)} \int u \ro_{\p} & \tx{if } \lambda \sle 0.
\end{array}
\right.
 \end{align*}
Hence ${^+}\delta_v \exc{k}$ is linear in the direction $u$ if and only if $\int u \ro_{\p} =: c$ is constant in $\p \in \cD\ex{k}(v)$. In this case, for $\p, \Phi \in \cD\ex{k}(v)$, we take $a,b \in \C$ such that $1 = \int \ab{a \p + b \Phi}^2 = \ab{a}^2 + \ab{b}^2 + 2 \pa{\re a \ov{b}} \ps{\p,\Phi}$ and such that $\re a \ov{b} \neq 0$, we compute
 \begin{align*}
 c =  \int u \ro_{a \p + b \Phi} = c + 2 \pa{\re a \ov{b}} \pa{N \int_{\R^{d(N-1)}} u(x_1) \p \Phi - \ps{\p,\Phi} c},
 \end{align*}
hence the last equivalence.

\bul $ii)$ Assume that ${^+}\delta_v \exc{k}$ is linear in all directions. Thus $\int u \ro_{\p}$ is constant in $\p$ and in $u$, this implies that $\ro_{\p}$ is constant in $\p$. But then ${^+}\delta_v \exc{k}(u) = \int u \ro$ and we remark that it is linear, hence differentiable. In this case and from $i)$, we have
 \begin{align*}
	\int_{\R^{d}} u \pa{ \ps{\p,\Phi} \ro_{\p} - N \int_{\R^{d(N-1)}} \p \Phi} = 0
 \end{align*}
	uniformly in $u$, and we can conclude.

	\bul $iii)$ Let us denote by $(\vp_i)_{1 \le i \le K}$ ground and excited states of $-\Delta + v$. Let us treat $k=0$ first. Since $D \df \dim \cD\ex{k}(v) \ge 2$, then the Fermi level of $-\Delta + v$ is degenerate, and $\vp_{D}$ and $\vp_{D+1}$ both belong to it. We consider $\p \df \vp_{D} \wedge_{i=1}^{D-1} \vp_i \in \cD\ex{0}(v)$ and $\Phi \df \vp_{D+1} \wedge_{i=1}^{D-1} \vp_i \in \cD\ex{0}(v)$. We assume that ${^+}\delta_v \exc{k}$ is differentiable, hence the degeneracy is broken in no direction at first order. By applying $ii)$, we deduce that $\vp_{\ell}\vp_{m}=0$, contradicting $\ab{\acs{\vp_{\ell}=0}}=\ab{\acs{\vp_{m}=0}}=0$ implied by unique continuation \cite{JerKen85}.

Let us now treat the case $k \ge 1$. In the case $N=1$, the concerned level is always degenerate. When $N \ge 2$, see \cite[Section 4.1, Figure 1]{Garrigue21} for more precisions on the possible configurations. In any case, we can use a similar construction as we did for $k=0$, where $\p$ and $\Phi$ are taken in the $N$-body Fermi level and have only one difference in the orbitals filling. Indeed, since $k$ is in the first excited eigenspace of the $N$-body operator, either the Fermi level of the one-body operator, or the next one, is degenerate.
 \end{proof}

\section*{Appendix A: basic inequalities on potentials}
We recall here in Lemma \ref{continj} several well-known facts about potentials. 

\begin{lemma}\label{continj} Take $v,w \in (L^p+L^{\ii})(\R^d)$.

	$(i)$ Taking $p$ as in \eqref{expot}, we have
\begin{align}\label{ude}
	 & \norop{\bpa{-\Delta_{\R^{dN}}+1}^{-\ud} \vv{v} \bpa{-\Delta_{\R^{dN}}+1}^{-\ud}}  \nonumber \\
		& \bhs \bhs \bhs \leq N \norop{\bpa{-\Delta_{\R^{d}}+1}^{-\ud} v \bpa{-\Delta_{\R^{d}}+1}^{-\ud}}\nonumber  \\
		& \bhs \bhs \bhs \leq c_{d,p} N \nor{v}{L^p+L^{\ii}}.
 \end{align}

$(ii)$ Let $\cC \subset \C$, be a contour in the complex plane which is such that $\dist\pa{z,\sigma(\hn(v))} \ge \eta > 0$ uniformly in $z \in \cC$. 
Let $p$ be as in \eqref{expot}, then the operators
 \begin{align*}
	 \del^{-\ud} \bpa{\hn-z}  \del^{-\ud}, \hs\hs\hs\hs\hs\hs\hs\hs\hs \del^{\ud} \bpa{\hn-z}\iv  \del^{\ud}
 \end{align*}
are uniformly bounded in $z \in \cC$.

$(iii)$ Let $v \in \spn{0}$, and $p$ as in \eqref{expot}. For $u \in L^p+L^{\ii}$ such that $\nor{u}{L^p+L^{\ii}}$ is small enough, we have $v+u \in \spn{0}$.
\end{lemma}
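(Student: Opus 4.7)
My plan is to treat the three items in sequence, since (i) is the computational workhorse that feeds directly into (ii), and (i)--(ii) together feed into (iii).

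For (i), I would first write $v = v_p + v_\infty$ with $v_p\in L^p$, $v_\infty\in L^\infty$: the $L^\infty$ piece contributes $\nor{v_\infty}{L^\infty}$ trivially. For the one-body inner inequality on $v_p$, the regime matters. For $d=1,2$ or $d\ge 3$ with $p>d/2$, the Kato--Seiler--Simon inequality applied to $\ab{v_p}^{\ud}\del^{-\ud}$ works because $\langle\xi\rangle^{-1}\in L^{2p}(\R^d)$ as soon as $2p>d$ (the $d=1$, $p=1$ case uses the Sobolev embedding $H^1(\R)\hookrightarrow L^\infty$). In the critical case $p=d/2$, $d\ge 3$, Kato--Seiler--Simon fails and I would fall back on H\"older combined with the Sobolev embedding $H^1(\R^d)\hookrightarrow L^{2d/(d-2)}$ to bound $\int|v_p||\phi|^2$. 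To lift the one-body estimate to the $N$-body estimate \eqref{ude}, I would fiber-integrate: for $\phi = \del_{\R^{dN}}^{-\ud}\psi$ and fixed $(x_j)_{j\ne i}$, apply the one-body bound in the $x_i$-variable, integrate in the remaining variables, and use $-\Delta_{x_i}+1\le -\Delta_{\R^{dN}}+1$ in the form sense. Summing the $N$ identical bounds produces the prefactor $N$.

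For (ii), the first operator expands as
\[
\del^{-\ud}(\hn(v) - z)\del^{-\ud} = 1 - (1+z)\del^{-1} + \del^{-\ud} V \del^{-\ud},
\]
with $V=\ssum v_i + \ssumd w_{ij}$, and each summand is uniformly bounded on the compact contour $\cC$ thanks to (i). For the dressed resolvent $\del^{\ud}(\hn(v)-z)^{-1}\del^{\ud}$, the key step is to promote (i) to infinitesimal form-boundedness of $V$ relative to $-\Delta$; this is achieved by truncating $v_p=v_p^{\le M}+v_p^{>M}$ so that $\nor{v_p^{>M}}{L^p}$ is arbitrarily small, yielding some $\lambda_0>0$ with $\hn(v)+\lambda_0\ge \tfrac{1}{2}(-\Delta+1)$. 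Consequently $\del^{\ud}(\hn(v)+\lambda_0)^{-\ud}$ and its adjoint are bounded. The factorisation
\[
(\hn(v)-z)^{-1} = (\hn(v)+\lambda_0)^{-\ud}\bigl(1+(z+\lambda_0)(\hn(v)-z)^{-1}\bigr)(\hn(v)+\lambda_0)^{-\ud},
\]
combined with $\norop{(\hn(v)-z)^{-1}}\le 1/\eta$ on $\cC$, closes the estimate uniformly in $z$.

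For (iii), I would invoke standard analytic perturbation theory \cite{Kato}. Since $v\in\spn{0}$, the ground energy $\exc{0}(v)$ is a simple eigenvalue strictly below $\sign(v)$; pick a small contour $\cC_v$ encircling only $\exc{0}(v)$ and staying at positive distance both from $\sign(v)$ and from the rest of $\sigma(\hn(v))$. By (i), the perturbation $\ssum u(x_i)$ has norm $O(\nor{u}{\spp})$ between the relevant form spaces, so $\hn(v+u)\to\hn(v)$ in norm-resolvent sense as $\nor{u}{\spp}\to 0$; the Riesz projector $(2\pi i)^{-1}\oint_{\cC_v}(z-\hn(v+u))^{-1}\,\d z$ therefore remains rank one, its enclosed eigenvalue stays strictly below $\sign(v+u)$ (which itself depends continuously on $u$), and hence $v+u\in\spn{0}$. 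The main obstacle I anticipate is the critical exponent $p=d/2$ in (i) for $d\ge 3$, where Kato--Seiler--Simon is unavailable and the Sobolev/H\"older substitute must be arranged carefully to output a bound in the correct form; everything else reduces to routine resolvent manipulations once the infinitesimal form-boundedness of $V$ has been extracted.
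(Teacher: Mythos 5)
Your proposal is correct and follows essentially the same route as the paper: the decomposition $v=v_p+v_\infty$ with Kato--Seiler--Simon for $d\in\{1,2\}$ and a Sobolev/H\"older (equivalently HLS) bound at the critical exponent, the lift to $N$ bodies via $-\Delta_{x_i}+1\le-\Delta_{\R^{dN}}+1$, the factorisation of $\hn(v)-z$ through $\del^{\pm\ud}$ combined with infinitesimal form-boundedness obtained by truncating $v_p$, and the Riesz-projection perturbation argument for $(iii)$. Your variants (the $(\hn(v)+\lambda_0)^{\pm\ud}$ factorisation in $(ii)$, and spelling out the continuity of $\sign$ in $(iii)$, which follows from the min-max principle and the form bound of $(i)$) are only cosmetic refinements of the paper's argument.
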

\begin{proof}
	\tx{ }
	
	$(i)$ \bul If $p$ is as in \eqref{expot}, we have
 \begin{align*}
	 & \norop{(-\Delta_{\R^{dN}} +1)^{-\ud}\vv{v} (-\Delta_{\R^{dN}} +1)^{-\ud}} \\
	 & \bhs\bhs \le \sum_{i = 1}^N \norop{(-\Delta_{\R^{dN}} +1)^{-\ud}v_i (-\Delta_{\R^{dN}} +1)^{-\ud}} \\
	 & \bhs\bhs \le N\norop{(-\Delta_{\R^{d}} +1)^{-\ud}v (-\Delta_{\R^{d}} +1)^{-\ud}},
 \end{align*}
	where we used that $\norop{(-\Delta_i)^{\ud} (-\Delta_{\R^{dN}} +1)^{-\ud}} =1$.

	\bul Let us write $v = v_p + v_{\ii}$. We have
\begin{align}\label{klk}
 & \norop{\del^{-\ud}v\del^{-\ud}} \nonumber \\
 & \bhs\le \norop{\del^{-\ud}v_p\del^{-\ud}} \nonumber\\
	& \bhs \bhs\bhs\bhs\bhs + \norop{\del^{-\ud}v_{\ii}\del^{-\ud}}\nonumber \\
 & \bhs\le \norop{\del^{-\ud}v_p\del^{-\ud}} + \nor{v_{\ii}}{L^{\ii}} \nonumber\\
 & \bhs\le \norop{\sqrt{\ab{v_p}} \del^{-\ud}}^2 + \nor{v_{\ii}}{L^{\ii}}.
\end{align}
 In the last inequality, we used that
 \begin{align*}
	 \del^{-\ud} v_p \del^{-\ud} = \del^{-\ud} \sqrt{\ab{v_p}} \sgn(v_p) \sqrt{\ab{v_p}}\del^{-\ud},
 \end{align*}
	where $\sgn(v)$ is equal to $1$ if $v > 0$, $-1$ if $v \sle 0$ and $0$ if $v =0$, it satisfies $\norop{\sgn(v)} \le 1$, hence
 \begin{align*}
	 \norop{\del^{-\ud} v_p \del^{-\ud}} \le \norop{\sqrt{\ab{v_p}} \del^{-\ud}}^2.
 \end{align*}
	As for the first term in \eqref{klk}, for $d \ge 3$, with $p = d/2$ we have
 \begin{align*}
	 \norop{\sqrt{\ab{v_p}} \del^{-\ud}} &  \le \norop{\sqrt{\ab{v_p}} (-\Delta)^{-\ud}}  \le c_{d} \nor{\sqrt{\ab{v_p}}}{L^{2p}}\\
	 & = c_{d} \sqrt{\nor{v_p}{L^{p}}},
 \end{align*}
where we used the Hardy-Littlewood-Sobolev inequality \cite[Theorem 4.3]{LieLos01} in the last inequality. For $d \in \acs{1,2}$, we can use the Kato-Seiler-Simon inequality \cite[Theorem 4.1]{Simon79} to get 
 \begin{align*}
	 \norop{\sqrt{\ab{v_p}} \del^{-\ud}} & \le \nor{\sqrt{\ab{v_p}} \del^{-\ud}}{\sch_{2p}} \\
	 & \le (2\pi)^{-d/(2p)} \nor{\sqrt{\ab{v_p}}}{L^{2p}} \nor{\bpa{\ab{x}^2+1}^{-\ud}}{L^{2p}} \\
	 & \le c_{d,p} \sqrt{\nor{v_p}{L^{p}}}.
 \end{align*}

	$(ii)$ 
	 Take $c \ge 0$ and let us define $A \df \ssum v_i + \ssumd w_{ij}$. We remark that
 \begin{align*}
	 H + c = (-\Delta+c)^{\ud} \pa{ 1+ (-\Delta+c)^{-\ud} A (-\Delta+c)^{-\ud}}(-\Delta+c)^{\ud},
 \end{align*}
	hence we only need to show that $\norop{(-\Delta+c)^{-\ud} A (-\Delta+c)^{-\ud}} \sle 1$. For instance we will show that
 \begin{align*}
	 \norop{(-\Delta_{\R^d} + c)^{-\ud} v (-\Delta_{\R^d} + c)^{-\ud}} \le \norop{\sqrt{\ab{v}} (-\Delta_{\R^d} + c)^{-\ud}}^2
 \end{align*}
	is as small as we want. For any $\ep > 0$, there exists $c_{\ep} \ge 0$ such that $\ab{v} \le \ep (-\Delta) + c_{\ep}$ in the sense of forms, hence for all $u \in \cC^{\ii}$, we have
 \begin{align*}
	 \nord{\sqrt{\ab{v}} (-\Delta + c)^{-\ud} u}^2 & \le  \ep \nord{(-\Delta)^{\ud} (-\Delta + c)^{-\ud} u}^2 + c_{\ep} \nord{(-\Delta + c)^{-\ud} u}^2 \\
	 &  \le \pa{ \ep + \f{c_{\ep}}{c}} \nord{u}^2.
 \end{align*}
 We can first choose $\ep$ small and then choose $c$ large so that the quantity $\norop{\sqrt{\ab{v}} (-\Delta + c)^{-\ud}}$ is arbitrarily small. \\

$(iii)$ The statement follows from the resolvent formula
\begin{align*}
 & \pa{z-\hn(v+u)}\iv - \pa{z-\hn(v)}\iv \\
& \bhs= \pa{z-\hn(v+u)}\iv \vv{u} \pa{z-\hn(v)}\iv,
\end{align*}
and Cauchy's formula
 \begin{align*}
 \indic_{\acs{\exc{0}(v)}} = \inv{2i\pi} \oint_{\cC} \f{\d z}{z-\hn(v)},
 \end{align*}
	see for instance \cite{ReeSim4,Lewin18}.
\end{proof}

\section*{Appendix B: weak-strong continuity and compactness}

We recall here relations between weak-strong continuity and compactness. Following \cite[Definition 7.6]{Hohage}, we say that a map is compact if it maps bounded sets into relatively compact sets. The link between ill-posedness of a problem and its linearization can be involved, see for instance \cite{Schock02} and \cite[Appendix]{EngKunNeu89}. We start by considering standard results, and adapt them to the case when the image space is an embedded submanifold.

\begin{lemma}\label{comps}
Let $X$ and $Y$ be Banach spaces, $U \subset X$ an open set, $M \xhookrightarrow{}Y$ a closed embedded submanifold of $Y$, and a map $f : U \ra M$.
\begin{enumerate}[label=(\roman*),leftmargin=*]
\item\label{frs} If $f$ is compact, continuous and differentiable on $U$, then $\d_x f$ is compact for any $x \in U$.

\item\label{tro} If $U = X$ is the dual of a Banach space, and if $f$ is weak-strong continuous, then $f$ is compact.

\item\label{scd} If $f$ is compact and $M$ is infinite-dimensional, then $f(X)$ is a countable union of compact sets, and $f(X)$ has empty interior.

\item\label{qa} If $f$ is compact and $X$ is infinite-dimensional, then $f\iv$ is discontinuous.
\end{enumerate}	
\end{lemma}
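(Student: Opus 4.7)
The proof splits into four items reducing to standard Banach-space arguments. Throughout I understand $f\iv$ as the set-theoretic inverse on $f(X)$, which is well-defined in the paper's applications via Hohenberg-Kohn.

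For \ref{frs}, the plan is a diagonal extraction using Fréchet differentiability. Write $\d_x f(h) = t\iv(f(x+th)-f(x)) + r(h,t)$ with $\|r(h,t)\|/t \to 0$ uniformly for $\|h\| \le R$. Choose a sequence $t_m \to 0^+$ so that this remainder term is bounded by $1/m$ uniformly on the ball of radius $R$. For each fixed $m$, compactness of $f$ on the bounded set $\{x+t_m h_n\}_n$ makes $t_m\iv(f(x+t_m h_n)-f(x))$ admit a convergent subsequence in $M$. A standard diagonal extraction, combined with the uniform remainder bound, shows $(\d_x f(h_n))_n$ is Cauchy along a subsequence, hence convergent since $M$ is complete as a closed embedded submanifold of the Banach space $Y$. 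Part \ref{tro} follows from Banach-Alaoglu: in $X = Y^*$ any bounded set $B$ has weak-$*$ compact closure $\overline{B}^{w^*}$, and weak-strong continuity of $f$ makes $f(\overline{B}^{w^*})$ the continuous image of a compact set, hence strongly compact, so $f(B)$ is relatively compact.

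For \ref{scd}, decompose $X = \bigcup_{n \ge 1} \overline{B}_n$ with $\overline{B}_n$ the closed ball of radius $n$; compactness of $f$ gives $K_n \df \overline{f(\overline{B}_n)}$ compact in $M$, and $f(X) \subset \bigcup_n K_n$ is contained in a countable union of compacts. For emptiness of the interior, I rely on the Riesz lemma: in an infinite-dimensional Banach manifold, every compact set has empty interior, hence is nowhere dense. If $f(X)$ contained a non-empty open set $V \subset M$, then $V \subset \bigcup_n K_n$; applying Baire's theorem to $V$, an open and hence Baire subspace of the complete metric space $M$, would force some $V \cap K_n$ to have non-empty interior in $V$, contradicting nowhere-denseness of $K_n$.

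For \ref{qa}, suppose $f$ is injective and $f\iv : f(X) \to X$ is continuous. By Riesz there exists a bounded sequence $(x_n) \subset X$ with no convergent subsequence; by compactness of $f$ a subsequence satisfies $f(x_{n_k}) \to y$ in $M$. If $y \in f(X)$, say $y = f(\tilde x)$, then continuity of $f\iv$ at $y$ gives $x_{n_k} = f\iv(f(x_{n_k})) \to \tilde x$, contradicting the absence of a convergent subsequence. The main obstacle, and the subtle step, is ensuring $y \in f(X)$ rather than merely $y \in \overline{f(X)}$; in the paper's applications this is supplied by the weak-strong continuity of Theorem~\ref{proposs}~(iii), which for any fixed $\tilde x$ and any bounded sequence $(h_n)$ with $h_n \rightharpoonup 0$ weakly but $h_n \not\to 0$ strongly yields $f(\tilde x + h_n) \to f(\tilde x)$ strongly, directly producing a sequence in $f(X)$ converging to $f(\tilde x)$ whose preimages do not converge to $\tilde x$, and hence the discontinuity of $f\iv$ at $f(\tilde x)$.
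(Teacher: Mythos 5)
Your proposal is correct in substance, and for \ref{tro} and \ref{scd} it follows essentially the paper's route (Banach--Alaoglu plus weak-strong continuity; closed balls, Riesz, Baire). Two small remarks there: in \ref{tro} the paper extracts a weak-$*$ convergent subsequence from the bounded set, whereas you pass through the image of the weak-$*$ compact closure, which tacitly upgrades \emph{sequential} weak-strong continuity to weak-$*$-to-norm \emph{topological} continuity; both versions are loose at the same point (sequential Banach--Alaoglu needs a separable predual, which holds in the paper's application since $(L^{p'}\cap L^1)$ is separable), so this is harmless but worth being aware of. For \ref{frs} you take a genuinely different route: the paper invokes \cite{Hohage} for the case $M=Y$ and then transfers compactness through the proper inclusion of the closed tangent space, while you reprove the Banach-space statement directly via the uniform remainder bound and a diagonal/total-boundedness argument. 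Your argument is correct and self-contained; the only cosmetic slip is that the difference quotients $t_m^{-1}\bigl(f(x+t_m h)-f(x)\bigr)$ and the limits of your Cauchy sequences live in $Y$ (or in the closed tangent space $T_{f(x)}M$), not in $M$, so the completeness you need is that of $Y$ and the closedness of the tangent space, which is exactly what the paper's properness step supplies.

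For \ref{qa} your treatment differs from the paper's and is in fact more careful. The paper argues that continuity of $f^{-1}$ makes $f^{-1}(f(B))\supset B$ relatively compact; but a map that is merely continuous on $f(X)$ need not send a relatively compact subset of $f(X)$ to a relatively compact set, precisely because limit points of $f(B)$ may lie outside $f(X)$ --- this is exactly the obstacle ``$y\in f(X)$'' that you isolate. The issue is not pedantic: a topological embedding of $\ell^2$ onto a subset of the Hilbert cube in $\ell^2$ maps every bounded set into a fixed compact set, is injective, and has continuous inverse, so compactness alone does not force discontinuity of $f^{-1}$ at this level of generality. Your repair --- taking $h_n\rightharpoonup 0$, $h_n\not\to 0$, supported in a fixed bounded region so that Theorem \ref{proposs}~(iii) gives $f(\tilde x+h_n)\to f(\tilde x)$ while $\tilde x+h_n\not\to\tilde x$ --- does prove discontinuity of $f^{-1}$ in the situations where the lemma is applied (weak-strong continuity is how compactness of $f$ is obtained there in the first place), but be explicit that this uses a hypothesis not present in the statement of \ref{qa}; as written, your argument establishes the application rather than the lemma verbatim, and the paper's own one-line proof of \ref{qa} has precisely the gap you identified.
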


 \begin{proof}
The only difference in the proof, compared to the case $M=Y$, is $(i)$.

$\ref{frs}$ In the case $M = Y$, this is proved in \cite{Hohage}. We apply it to $\iota_{M \ra Y} \rond f : U \ra Y$ and get that $\d_x \pa{\iota_{M \ra Y} \rond f}\pa{ X \cap \acs{ \norop{\cdot} \le 1}} = \iota_{\Td_{f(x)} M \ra Y} \rond \pa{\d_x f} \pa{X \cap \acs{ \norop{\cdot} \le 1} }$ is compact. A map is proper if preimages of relatively compact open sets are relatively compact open sets \cite[Definition 16.26]{Souriau12}. One can prove that for a Banach space $F$ and a closed subset $E \subset F$, the inclusion map $E \xhookrightarrow{} F$ is proper. Since $\iota_{\Td_{f(x)} M \ra Y}$ is proper, then $\pa{\d_x f} \pa{X \cap \acs{ \norop{\cdot} \le 1} }$ is relatively compact. We remark that we only used that $M \xhookrightarrow{}Y$ is an embedded submanifold of $Y$, we did not use the closed condition.

$\ref{tro}$ Let $G \subset B_0(r) \subset X$ be a bounded set and $x_n \in G$ a sequence. By Banach-Alaoglu's theorem, $x_n \wra x$ for some $x \in B_0(r)$ and up to a subsequence. By weak-strong continuity of $f$, $f(x_n) \ra f(x)$ strongly.

$\ref{scd}$ 
 We define the sets $X_r \df X \cap \acs{ x \in X \st \nor{x}{X} \le r}$, for $r \ge 0$. Since $f$ is compact, then the $\ov{f(X_r)}$'s are compact and thus have empty interiors by Riesz's theorem \cite[Theorem 6.5]{Brezis10}, which applies in our case because $M$ is locally a normed vector space. We have 
\begin{align*}
 f(X) = \cup_{r\in \N} f(X_r) \subset \cup_{r\in \N} \ov{f(X_r)}.
\end{align*}
 Finally, by Baire's theorem \cite[Theorem 2.1]{Brezis10} $f(X)$ has empty interior. We recall that a closed subset of a compact space is compact.

$\ref{qa}$ Let $B \subset X$ be a ball, $f(B)$ is relatively compact. Assuming that $f\iv$ is continuous, $f\iv\pa{f(B)} \supset B$, is also relatively compact, and hence $B$ as well. 
 But this contradicts \cite[Theorem 6.5]{Brezis10}. The inverse $f\iv$ is thus discontinuous.
 \end{proof}

Here is a summary of the relations between compactness and weak-strong continuity for a map and its differential.
\begin{center}
\begin{tikzcd}
	f \tx{ compact} \arrow[r,Rightarrow] \arrow[swap,d,Leftarrow,"\tx{if $U=X$ is a dual }"] & \d_x f \tx{ compact } \forall x \arrow[d,Rightarrow] \arrow[d,Leftarrow,"\tx{if $X$ reflexive}",xshift=2ex] \\
	f \tx{ locally weak-strong } \cC 
	& \d_x f \tx{ weak-strong $\cC$ } \forall x
\end{tikzcd}
\end{center}
We also remark that $\d_x f$ weak-strong continuous for any $x \in U$ does not imply that $f$ is weak-strong continuous, a simple counterexample is $L^2(\R^n) \ni x  \mapsto \nord{x}^2$, and this is also the case for $v \mapsto \p(v)$. 

\bibliographystyle{siam}
\bibliography{biblio}
\end{document}